\crefname{figure}{Figure}{Figures}
\theoremstyle{plain}
\newtheorem{thm}{\protect\theoremname}
\theoremstyle{plain}
\theoremstyle{remark}
\theoremstyle{remark}
\newtheorem*{rem*}{\protect\remarkname}
\theoremstyle{definition}
\theoremstyle{definition}
\newtheorem{lem}{Lemma}
\newtheorem{cor}{Corollary}
\newcommand{\zq}[1]{}
\providecommand{\examplename}{Example}
\providecommand{\problemname}{Problem}
\providecommand{\propositionname}{Proposition}
\providecommand{\remarkname}{Remark}
\providecommand{\theoremname}{Theorem}
\journal{Journal of Financial Economics}
\begin{document}
\begin{frontmatter}
\title{Handling Sparse Non-negative Data in Finance\tnoteref{label1}}
\tnotetext[label1]{We are grateful for stimulating conversations with Charles-Albert Lehalle, Garud Iyengar, Han Hong, Yanguang Liu, and Johan Ugander, and to Ron Bekkerman and Davide Proserpio for sharing the dataset on residential real estate investment, as well as for technical support from Moody's Corporation for access to their Default and Recovery Database (DRD).}


\author{Agostino Capponi} 
\ead{ac3827@columbia.edu}
\affiliation{organization={Department of Industrial Engineering and Operations Research, Columbia University},
            city={New York},
            state={NY},
            country={USA}
            }
\author{Zhaonan Qu} 
\ead{zq2236@columbia.edu}
\affiliation{organization={Data Science Institute, Columbia University},
            city={New York},
            state={NY},
            country={USA}
            }
\author{{\href{https://drive.google.com/file/d/1qJS-sRlTLggwkjiYLQr0ihGnYKch6sUU/view?usp=sharing}{\textcolor{red}{\large Latest Version Here}}}}
\begin{abstract}
\normalfont
We show that Poisson regression, though often recommended over log-linear regression for modeling count and other non-negative variables in finance and economics, can be far from optimal when heteroskedasticity and sparsity---two common features of such data---are both present. We propose a general class of moment estimators, encompassing Poisson regression, that balances the bias-variance trade-off under these conditions. A simple cross-validation procedure selects the optimal estimator. Numerical simulations and applications to corporate finance data reveal that the best choice varies substantially across settings and often departs from Poisson regression, underscoring the need for a more flexible estimation framework.

\end{abstract}
\begin{keyword}
Count-like Data; Heteroskedasticity; Sparsity; Pseudo Maximum Likelihood Estimation; Bias-variance Trade-off
\end{keyword}

\end{frontmatter}

\section{Introduction}
\label{sec:intro}
Data with non-negative outcome (dependent) variables are ubiquitous in applications across a diverse range
of disciplines, including finance \citep{coles2006managerial,hirshleifer2012overconfident,akey2021limits}, economics \citep{hausman1984econometric,card2001estimating}, health services \citep{dow2003choosing}, epidemiology \citep{holfold1980rates,bohning1999zero}, political science \citep{king1989event}, 
and sociology \citep{boulton2018analyzing}.

Specific examples include discrete count data, such as the number of corporate defaults and patents, and continuous count-like data, such as health expenditures and sales revenue. A popular approach to modeling non-negative data, particularly in finance and economics, is regressing the log transformation of the outcome variable (often shifted by a positive constant such as 1) on explanatory variables. 

Despite the popularity of this log-linear regression approach, there is increasing evidence that it may not be the best approach to modeling non-negative data. For example, \citet{cohn2022count,chen2024logs} point out the lack of meaningful interpretation of log-linear estimates, while \citet{king1988statistical,silva2006log,silva2011further,mullahy2022transform} demonstrate the fragility of log-linear regression in the presence of \emph{heteroskedasticity} or \emph{sparsity}, two prominent features of count and non-negative data. As a solution, recent works have advocated for the use of \emph{pseudo maximum likelihood} (PML) estimation \citep{gourieroux1984pseudoa} based on count models, most notably the Poisson regression \citep{wooldridge1999quasi}.\footnote{In the econometrics literature, some works use the term ``quasi-maximum likelihood''  \citep{white1982maximum,wooldridge1999quasi}. In this paper, we follow the terminology of \citet{gourieroux1984pseudoa} and \citet{silva2006log} and use ``pseudo maximum likelihood''.}  Compared to log-linear regression, Poisson regression possesses several desirable properties that facilitate its application in practice, such as interpretability, robustness, computational efficiency, and the ability to accommodate separable group fixed effects and instrumental variables \citep{windmeijer1997endogeneity,mullahy1997instrumental,correia2020fast,cohn2022count,chang2024inferring}. These considerations have led to a shift towards Poisson regression in economics and finance \citep{sautner2023firm,addoum2023temperature, hollingsworth2024gift}. 

Importantly, estimates of model parameters can change significantly when switching from a log-linear to a Poisson specification. 
For example, \citet{cohn2022count} find 
\begin{quote}
  ``... through replication of
existing papers that economic conclusions can be highly sensitive to the regression model
employed. ... the choice between Poisson and log-linear regression typically has a larger effect on estimates than omitting the
most important control variable in real-world applications.''  
\end{quote}
Their findings highlight the importance of selecting an appropriate regression model in empirical research, including corporate finance studies on topics such as innovation \citep{he2013dark,fang2014does} and industrial pollution \citep{akey2021limits,xu2022financial}. 
More broadly, a shift from log-linear to Poisson regression may carry significant implications for empirical work across finance and economics. This naturally raises a fundamental question: should Poisson regression be the {\it default choice} when analyzing non-negative, count-like data?

Existing works have only partially addressed this question by focusing on {\it either} the heteroskedasticity {\it or} the sparsity of the data. For example, the efficiency properties of Poisson and other PML estimators under heteroskedasticity are well-known and depend on the \emph{dispersion} of data \citep{gourieroux1984pseudob,cameron2013regression}. On the other hand, when data exhibits high levels of sparsity, \citet{silva2006log,silva2011further} provide simulation evidence on the robustness of Poisson PML compared to log-linear regression and other PML estimators. Alternatives such as zero-inflated \citep{lambert1992zero} and hurdle models \citep{mullahy1986specification} for sparse non-negative data have also been studied. 

However, as is well-known to empirical researchers and also illustrated by our examples in \cref{sec:empirical}, non-negative data in many finance and economics applications frequently exhibit \emph{both} sparsity \emph{and} heteroskedasticity. Yet, few studies address these characteristics simultaneously, and it remains unclear whether Poisson regression should be used as a rule of thumb when handling sparse non-negative data. This major gap motivates the central question of our paper: How can we effectively model non-negative data when \emph{both} sparsity and heteroskedasticity are prominent?

\begin{table}
\begin{centering}
\begin{tabular}{c|c|c|c|c|c|c}
 & \multicolumn{3}{c|}{first coordinate of $\theta_0$} & \multicolumn{3}{c}{second coordinate of $\theta_0$}\tabularnewline
\hline 
 & $\alpha=0$ & $\alpha=1$ & $\alpha=2$ & $\alpha=0$ & $\alpha=1$ & $\alpha=2$\tabularnewline
\hline 
 optimal $\kappa$ & 1 & 0.78 & 0.11 & 1 & 0.9 & 0.42\tabularnewline
\hline 
\hline 
RMSE of Poisson estimator & 0.045 & 0.046 & 0.064 & 0.12 & 0.12 & 0.131\tabularnewline
\hline 
RMSE of optimal estimator & 0.017 & 0.021 & 0.058 & 0.03 & 0.047 & 0.105\tabularnewline
\hline 
\% of improvement & 62\% & 54\% & 9\% & 75\% & 61\% & 20\%
\end{tabular}
\par\end{centering}
\caption{Root mean squared errors (RMSE) of optimal vs. Poisson ($\kappa=0$) estimators under different levels of heteroskedasticity
(indexed by $\alpha$) and a fixed level of sparsity (see \cref{sec:experiments} for details).}
\label{tab:RMSE-simulation}
\end{table}

We address this question by developing a novel unified framework that integrates classical pseudo maximum likelihood methods, including Poisson regression, which so far have often been applied in an ad hoc manner.
As a main contribution of this paper, we show that the optimal modeling choice depends on the relative prominence of each feature in the data, and could diverge considerably from the Poisson estimator. Furthermore, we provide a simple data-driven procedure that empirical researchers can employ to determine the optimal choice. 

The key component of our framework is a novel class of moment estimators that includes classical PML estimators for non-negative data, such as Poisson and gamma, as special cases. These estimators place heterogeneous weights on observations in the estimating equations. On one hand, heteroskedasticity encourages larger weights on samples with smaller conditional means to improve \emph{efficiency}. On the other hand, the presence of many zeros discourages such weighting to reduce \emph{bias}. As a result, there exists an estimator that optimally balances the fundamental bias-variance trade-off created by these two aspects of the data. 
As illustrated in 
\cref{fig:phase-transition-simple} and \cref{tab:RMSE-simulation}, the optimal estimator is sensitive to the level of sparsity and heteroskedasticity, and can improve significantly upon the Poisson PML estimator (see \cref{sec:experiments} for full details). Notably, although non-linear least squares (NLS) has largely been avoided due to its inefficiency under heteroskedasticity, our simulation results suggest that it could be preferable when heteroskedasticity is mild compared to sparsity. Moreover, this trade-off is a {\it unique} feature of data with sparse non-negative outcome variables, setting it apart from the conventional bias-variance trade-off in statistics and machine learning, which is typically driven by model complexity.

\begin{figure}[t]
\begin{centering}
\includegraphics[scale=0.52]{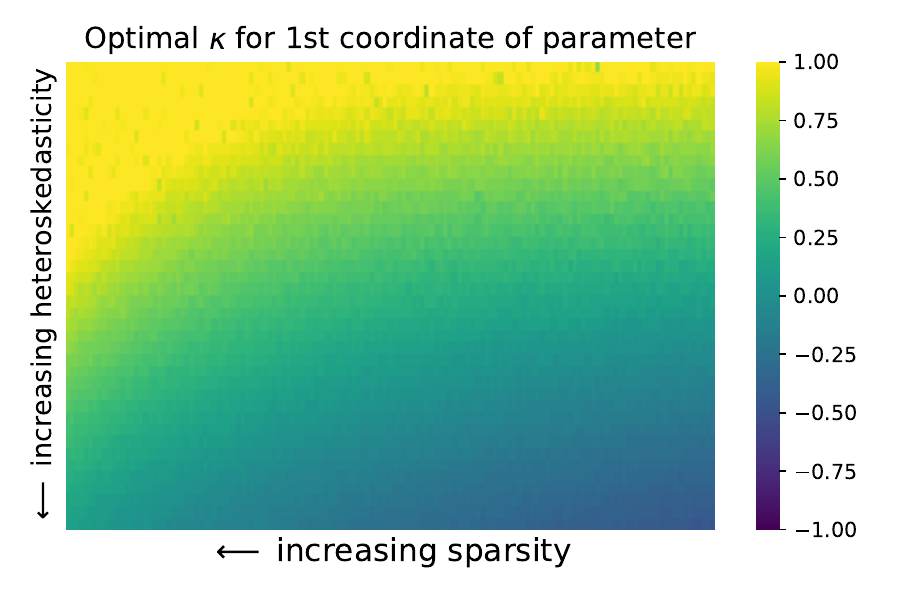}
\includegraphics[scale=0.52]{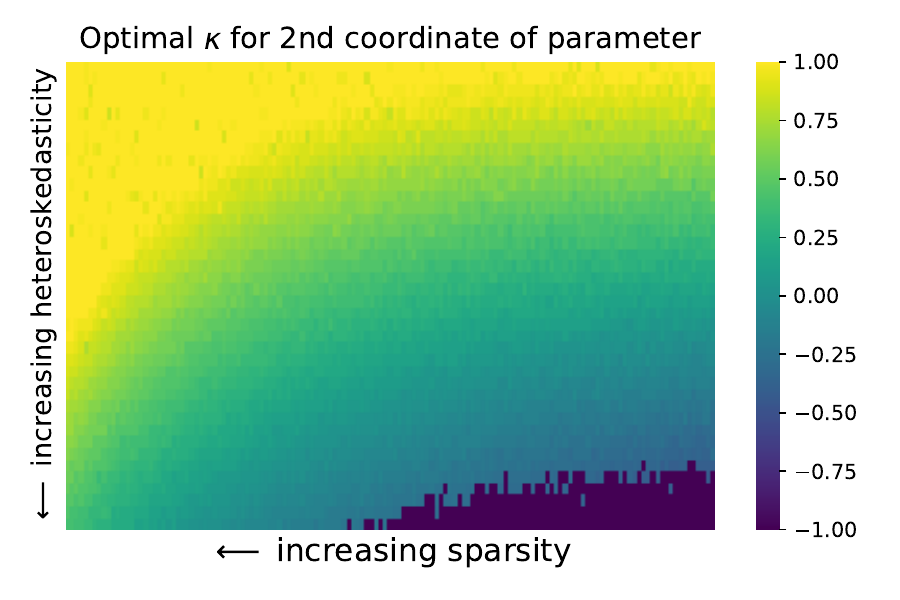}
\par\end{centering}
\caption{Behavior of the optimal estimator under different levels of heteroskedasticity and sparsity. Estimators are indexed by a parameter $\kappa$ taking values between $-1$ and $1$, which controls the relative weight the moment estimator assigns to samples with larger conditional means vs. smaller conditional means. Poisson PML corresponds to a value of 0, non-linear least squares corresponds to 1, while gamma PML corresponds to $-1$. When heteroskedasticity is low relative to sparsity, a larger $\kappa$ is optimal. Poisson is only optimal for a small region in the figures, while non-linear least squares and gamma can also be optimal.}
\label{fig:phase-transition-simple}
\end{figure}

We apply our framework to simulated and real finance datasets, including those related to corporate defaults, credit ratings, corporate patents \citep{hirshleifer2012overconfident}, and real estate investments \citep{bekkerman2023effect}. We find that the estimator recommended by our approach varies significantly with the dataset, and can improve upon the Poisson regression estimator. For instance, an estimator close to the Poisson regression estimator yields the best fit when modeling industry-level defaults. In contrast, for credit ratings and corporate patents data, the optimal estimator recommended by our framework differs from and outperforms Poisson significantly, while for real estate investment data our framework recommends the gamma PML estimator.  Across these datasets, the selected estimator can reduce the out-of-sample root mean squared error (RMSE) by 50\% to 90\% compared to the Poisson PML. It can also provide a better description of the sparsity level in the data compared to Poisson regression, as illustrated in \cref{fig: hist-intro}. Moreover, the choice of the estimator can result in substantially different parameter estimates and economic conclusions (\cref{tab:estimates} in \cref{sec:empirical}). These findings challenge the practice of applying a single regression model across all contexts involving non-negative dependent variables, and underscore the importance of evaluating model suitability for the specific dataset at hand. 

\begin{figure}[t]
\begin{centering}
\includegraphics[scale=0.5]{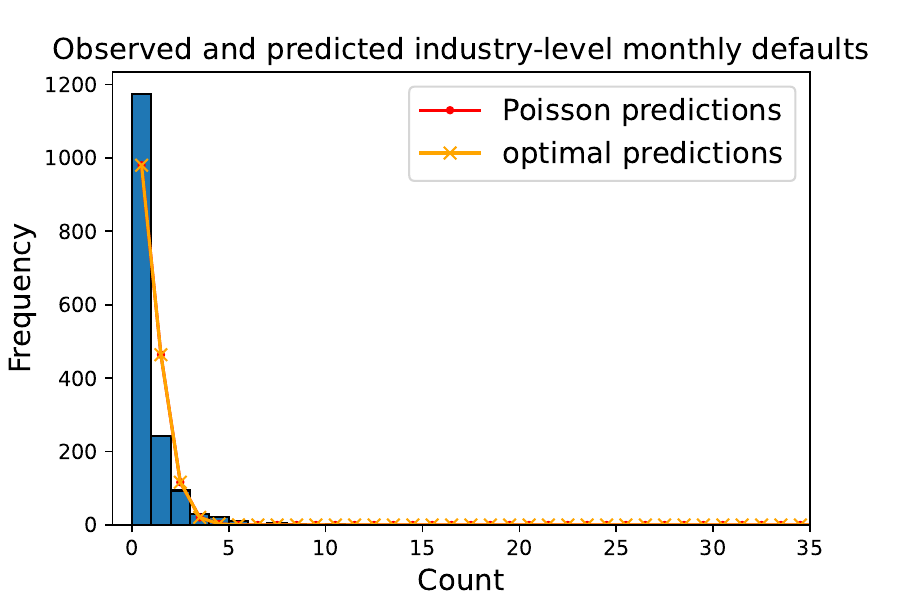}
\includegraphics[scale=0.5]{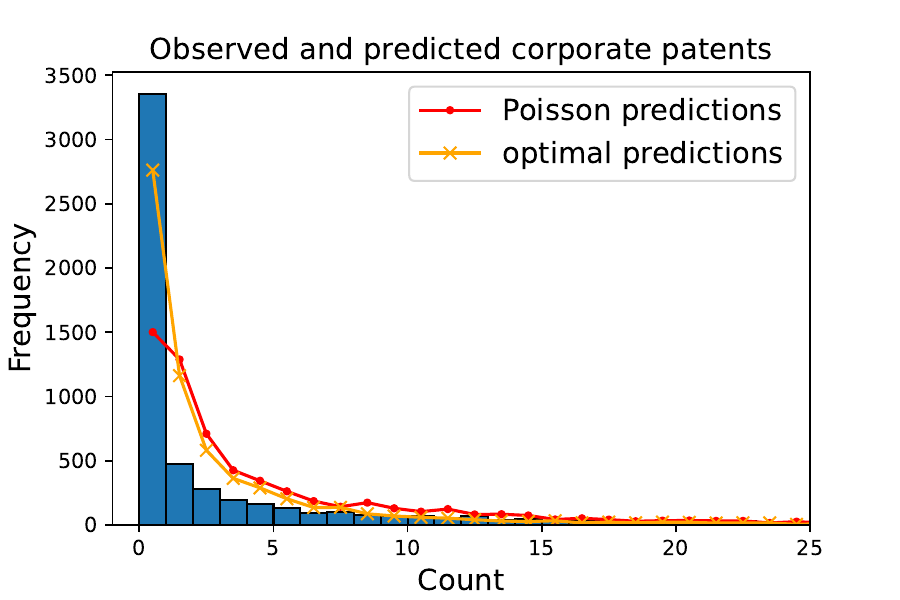}
\par\end{centering}
\caption{Histograms of monthly defaults in Moody's 35 Industrial Categories (left) and corporate patents data of \citet{hirshleifer2012overconfident} (right), overlayed with predictions by the Poisson regression estimator and optimal estimator in our framework. For the defaults data, the optimal model is close to Poisson, and both provide a good description of the outcome variable's distribution. For the patents data, the optimal model is very different from Poisson, and provides a much better description of the outcome variable's distribution, particularly near zero. See \cref{sec:empirical} for details.}
\label{fig: hist-intro}
\end{figure}

The rest of the paper is organized as follows. In \cref{sec:background}, we provide the econometric background on non-negative data modeling and introduce the formal setup and key concepts of the paper. In \cref{sec:trade-off}, we propose a class of moment estimators that we call generalized pseudo maximum likelihood estimators. We characterize the bias and variance of these estimators in the presence of heteroskedasticity and excess sparsity, and propose a cross-validation procedure to select the optimal generalized PML estimator. In \cref{sec:experiments}, we conduct extensive numerical experiments to study the bias-variance trade-off behavior of generalized PML estimators under excess sparsity and heteroskedasticity. In \cref{sec:empirical}, we apply our framework to major datasets in finance. 

\section{Econometrics of Non-negative Data} 
\label{sec:background}
Data with count or non-negative dependent variables often exhibits two prominent features:
heteroskedasticity and sparsity. We begin by providing econometric background on these characteristics. To To contextualize the discussion, we present several relevant regression models that help motivate our framework and main results.

\subsection{Pseudo Maximum Likelihood Estimation and Heteroskedasticity}
\label{subsec:PML-heteroskedasticity}
One of the most popular approaches to modeling data with non-negative dependent variables is log-linear regression, i.e., assuming that the logarithm of the response variable $Y \in \mathbb{R}^+$ is linear in the covariates $X\in \mathbb{R}^d$. To deal with zero values of $Y$, this approach typically uses the transformation $\log(1+Y)$, resulting in the specification
\begin{align*}
\tilde{Y}:=\log(1+Y) & =\beta^{T}X+\epsilon.
\end{align*}
Prominent applications of log-linear regression include the modeling of trade data and panel data with non-negative outcomes, such as earnings \citep{card2001estimating}, and count and count-like data in finance, such as number of corporate patents granted \citep{hirshleifer2012overconfident} and firms' toxic waste release volumes \citep{akey2021limits}. However, previous works have raised some important drawbacks of the log-linear regression approach. For example, \citet{cohn2022count} demonstrate that log-linear models do not always produce estimates with economically meaningful interpretations, and can even result in the wrong sign in expectation. \citet{chen2024logs} similarly show that treatment effect estimates based on log-linear transformations are not invariant under change of units and therefore should not be interpreted as percentages. More importantly, as discussed in previous works, log-linear regression is not robust against heteroskedastic \citep{silva2006log,cohn2022count} and sparse data \citep{silva2010existence,silva2011further}. Similar issues with other non-linear transformations of non-negative outcome variables, such as the inverse hyperbolic sine (IHS), 
have been studied by \cite{mullahy1998much}, \citet{manning1998logged}, \citet{ai2000standard}, and \citet{mullahy2022transform}.

As a more robust alternative, previous works have recommended using 
\emph{pseudo maximum likelihood} (PML) estimation \citep{white1982maximum,gourieroux1984pseudoa,gourieroux1984pseudob,wooldridge1999quasi}. PML has been widely applied to model data when the true likelihood is difficult to specify, e.g., for generalized linear models \citep{mccullagh2019generalized}. The Poisson PML (PPML) is a popular choice in practice for modeling non-negative outcomes, especially count data \citep{cameron2013regression}. When $Y\in\mathbb{Z}^+$
is integer valued, a natural modeling approach is to
assume that $Y$ conditional on $X$ has a Poisson distribution with rate parameter $\lambda(X,\theta)$ that is log-linear in
$X$ with parameter $\theta$, i.e., $\lambda(X,\theta)=\exp(\theta^{T}X)$. The likelihood is given by
\begin{align*}
\mathbb{P}(Y=y\mid X) & =\frac{(\lambda(X,\theta))^{y}\cdot\exp(-\lambda(X,\theta))}{y!}=\frac{e^{y\cdot\theta^{T}X}\cdot\exp(-e^{\theta^{T}X})}{y!}.
\end{align*}
 The maximum likelihood estimator (MLE) $\hat{\theta}$ of $\theta$, given $n$ independent and identically distributed (i.i.d.) samples $\{y_{i},x_{i}\}_{i=1}^{n}$, is then obtained as the maximizer of the
log-likelihood 
\begin{align}
\label{eq:poisson-likelihood}
\max_\theta \sum_{i}y_{i}\cdot\theta^{T}x_{i}-\exp(\theta^{T}x_{i})
\end{align}
modulo constants. When the true distribution of $Y$ given $X$ is not Poisson, the PML approach continues to maximize \eqref{eq:poisson-likelihood}. 
A seminal result of \citet{gourieroux1984pseudob} implies that, under standard assumptions, even when $Y \in \mathbb{R}^+$ is not Poisson distributed or even integer valued, the PML
estimator which maximizes \eqref{eq:poisson-likelihood} is still consistent as long as the conditional mean of $Y$ given $X$ is log-linear,
i.e., $\mathbb{E}(Y \mid X) =\exp({\theta^{T}X})$. 
Importantly, the consistency of Poisson PML holds even if the data is \emph{heteroskedastic}, i.e., $\text{Var}(Y\mid X)$ depends on $X$. In contrast, log-linear regression
is only consistent when the data is homoskedastic.\footnote{The crucial distinction between PPML and log-linear regression lies in the ordering of expectation and logarithm operations. Whereas log-linear regression assumes that the conditional expectation of the log of $Y$ (or $Y+1$) is
linear in $X$, i.e., $\mathbb{E}(\log Y\mid X) =\beta^{T}X$, PPML assumes that the log of the conditional expectation of $Y$ is linear
in $X$, i.e., $\log(\mathbb{E}(Y\mid X)) =\theta^{T}X$. The implications of this distinction, including the robustness of PPML to heteroskedasticity, are discussed extensively in \citet{silva2006log}.} 

{Heteroskedasticity} arises naturally when the outcome variable is constrained to be non-negative. 
Intuitively, when the conditional mean $\mathbb{E}\left(Y\mid X\right) \geq 0$ is large, the residual $Y-\mathbb{E}\left(Y\mid X\right)$ can have larger variance without violating the non-negativity requirement $Y\geq0$. However, when $\mathbb{E}\left(Y\mid X\right)$ is close to zero, the residual needs to have smaller variance in order to satisfy $Y\geq0$. The intrinsic nature of heteroskedasticity in non-negative data can also be justified when economic theory prescribes a model of $\mathbb{E}(Y\mid X)$ with a multiplicative form, e.g., 
$Y=\exp(\theta^{T}X)\cdot\eta$ with $\mathbb{E}(\eta\mid X)=1$. Examples include the gravity models of international trade \citep{anderson1979theoretical,anderson2003gravity}. Expressing $Y$ in the standard additive form $Y = \exp(\theta^{T}X) + \varepsilon$ with $\mathbb{E}(\varepsilon\mid X)=0$ yields $\varepsilon = (\eta - 1)\cdot \exp(\theta^T X)$. The conditional variance $\text{Var}(\varepsilon\mid X)$ will in general depend on $X$, even if the conditional variance $\text{Var}(\eta\mid X)$ does not, resulting in heteroskedasticity. 

Although Poisson pseudo maximum likelihood is robust to heteroskedasticity, 
it is only \emph{efficient} when the conditional variance of $Y$ is \emph{equal} to the conditional mean of $Y$, i.e., ${\rm Var}(Y\mid X) =\mathbb{E}(Y\mid X)$, often called equi-dispersion. When the conditional variance of $Y$ does not scale linearly with the conditional mean, giving rise to the so-called over-dispersion and under-dispersion phenomena, there exist other consistent PML estimators that are more efficient. For example, the gamma PML (GPML) is efficient when ${\rm Var}(Y\mid X)=\mathbb{E}^2(Y\mid X)$, while the closely related negative binomial PML is efficient when ${\rm Var}(Y\mid X) = \mathbb{E}(Y\mid X) + a \mathbb{E}^ 2(Y\mid X)$ for $a>0$.
The non-linear least squares (NLS) estimator 
  can also be viewed as a PML estimator, and is efficient when ${\rm Var}(Y\mid X)$ is constant. 
In this paper, we use $\alpha\geq 0$ to index heteroskedasticity in the specification 
\begin{align}
\label{eq:heteroskedasticity-general}
{\rm Var}(Y\mid X) & =\mathbb{E}^\alpha(Y\mid X).
\end{align} 

It may appear that under heteroskedasticity, one should always use the estimator that best captures the conditional variance as a function of the conditional mean, by determining the value of $\alpha$. Previous works have proposed these types of methods to determine the dispersion using regression-based tests \citep{park1966estimation,cameron1990regression,manning2001estimating}. However, that is only half of the story. When the data contains many observations whose outcomes are zero, efficiency alone may not be sufficient to justify the use of a particular estimator. 
For example, simulation studies in previous works and our paper find that with sparse outcomes resulting from asymmetric censoring, Poisson PML can outperform the gamma PML in terms of mean squared error, even when gamma PML is most efficient. This phenomenon reveals the importance of sparsity when modeling non-negative data, and is a main motivation for the current paper. We therefore discuss sparsity next.

\subsection{Excess Sparsity, Zero-inflation, and Two-part Models}
\label{subsec:censoring}

Non-negative count-like data often have distributions that are right-skewed and exhibit \emph{sparsity}, i.e., having many outcomes with value zero. Such sparsity may arise from features of the data-generating process, including the rarity of the underlying events or the structural properties of networks. For instance, financial networks that capture contractual relationships or interactions between entities are typically sparse and display a core-periphery architecture \citep{craig2014interbank}. Sparse outcomes also arise naturally when the data is censored near zero. For example, in international trade, measurements of large trade volumes are usually more accurate than those of small trade volumes \citep{frankel1993trade}, which are often rounded down to zero. Previous studies have proposed various methods to explain and model sparsity in non-negative data. We now focus on two approaches that are particularly relevant to our framework, beginning with a modification of the Poisson process. 

Suppose that conditional on explanatory variables $X$, the non-negative outcome variable $Y$ follows a Poisson distribution with mean parameter $\exp(\theta_0^T X)$. Then $\mathbb{P}(Y=0 \mid X) =  \exp(-\exp(\theta_0^T X))$. However, in practice, the fraction of zeros observed in data is often not consistent with the predictions of this model, which in a large sample is close to $\int_X  \mathbb{P}(Y=0 \mid X) dX$. In this case, one needs to modify the model to account for the discrepancy. A popular proposal is the two-part or \emph{hurdle model} \citep{mullahy1986specification,king1989event,mullahy2022transform}. It assumes a two-part generating process of $Y$ given $X$. An independent binomial random event determines whether $Y$ is set to zero. If not, the realization of $Y$ is determined according to a \emph{truncated} distribution that takes on strictly positive values. In the simplest case, instead of assuming $ \mathbb{P}(Y=0 \mid X) =  \exp(-\exp(\theta_0^T X))$ as in a Poisson model, we assume 
\[  \mathbb{P}(Y=0 \mid X) =  \exp(-\beta\exp(\theta_0^T X)), \]
which can be understood as the probability of a Poisson variable with mean parameter $\beta\exp(\theta_0^T X)$ being equal to 0, while the distribution of $Y>0$ follows a truncated Poisson with mean parameter $\exp(\theta_0^T X)$, i.e., 
\begin{align*}
    \mathbb{P}(y\mid x)=	\begin{cases}
\exp(-\beta\exp(\theta_{0}^{T}x)) & y=0,\\
\frac{1-\exp(-\beta\exp(\theta_{0}^{T}x))}{1-\exp(-\exp(\theta_{0}^{T}x))}\cdot\frac{e^{-\exp(\theta_{0}^{T}x)}\exp^{y}(\theta_{0}^{T}x)}{y!} & y>0.
\end{cases}
\end{align*}
When $\beta=1$, we recover the original Poisson model. When $\beta<1$, this modified data generating process will result in more sparsity than the standard Poisson model, and vice versa for $\beta>1$. More generally, we can use non-Poisson based distributions for both the binomial zero event variable and the positive outcome variable. This hurdle model has been widely used in practice for non-negative data with high levels of sparsity. 

We can also understand the data generating process in the hurdle model equivalently as the following two-step \emph{censoring} model. For example, when $\beta<1$, in the first step, a variable $Y \geq 0$ is drawn according to a Poisson distribution with mean parameter $\exp(\theta_0^T X)$. If $Y>0$, then with probability 
\begin{align*}
    \frac{\exp(-\beta\exp(\theta_{0}^{T}X))-\exp(-\exp(\theta_{0}^{T}X))}{1-\exp(-\exp(\theta_{0}^{T}X))},
\end{align*}
$Y$ is censored to $0$. We can verify that 
the resulting variable is zero with probability $\exp(-\beta\exp(\theta_0^T X))$, and has the same distribution as the truncated Poisson when it is positive. This censoring process is closely related to the \emph{zero-inflation} model of \citet{lambert1992zero}, which is another common approach to handle excess zeros. In  \citet{lambert1992zero}, a standard Poisson model is generated with mean parameter $\exp(\theta_0^T X)$. Then with probability 
\[\frac{1}{1+\exp(\beta\theta_{0}^{T}X)},\]
the observation is censored to zero. The main difference with the hurdle model is in the censoring probabilities, which is modeled by a logistic function for the zero-inflated Poisson. See also \citet{zorn1998analytic} for a discussion of the hurdle model and zero-inflation model.  

Importantly, in both the hurdle and the zero-inflation models, the probability of observing zero depends on $\exp(\theta_{0}^{T}X)$ (hence on $X$). For example, in the zero-inflation model with $\beta>0$, the censoring probability $\frac{1}{1+\exp(\beta\theta_{0}^{T}X)}$ \emph{decreases} in $\exp(\theta_{0}^{T}X)$, the conditional mean. Similarly, in the hurdle model, the probability $\exp(-\beta\exp(\theta_0^T X))$ of observing a zero outcome variable decreases rapidly in $\exp(\theta_0^T X)$. This feature aligns with how excess zeros appear in applications. In survey and census data, a small outcome or response variable is often more likely to be rounded down to zero than a large outcome, which is less prone to measurement errors \citep{frankel1997regional}. Similarly, count variables of rare events are much more likely to be zero when its conditional mean is close to zero. As we will demonstrate in this paper, this {\emph{asymmetric} sparsity} feature of non-negative data is important for its modeling, as it calls for the use of estimators that may otherwise be inefficient under heteroskedasticity. For example, with sparsity generated by rounding near zero, \citet{silva2006log,silva2011further} demonstrate in simulations that Poisson PML is generally more robust than the gamma PML estimator, even under over-dispersion with ${\rm Var}(Y\mid X) =\mathbb{E}^2(Y\mid X)$, when gamma PML is the most efficient PML. Similarly, our findings in \cref{sec:experiments,sec:empirical} suggest that  \emph{non-linear least squares} can be more robust than the Poisson PML when sparsity is strong, even under equi-dispersion when Poisson is most efficient. 

These observations have received relatively less attention than the efficiency properties of PML estimators, but they raise some important questions on PML estimators, including Poisson regression. Although their behaviors with respect to heteroskedasticity and sparsity have been studied separately before, they are less clear when both features are significant.  
When modeling data with non-negative outcomes, understanding the \emph{interactions} between these features can help empirical researchers decide which regression approach to use  that best handles the sparse non-negative data at hand. 

In this paper, we provide a systematic study of the interplays between these two prominent features of non-negative data. We use the model~\eqref{eq:heteroskedasticity-general} to capture  heteroskedasticity indexed by $\alpha$. To capture excess levels of sparsity, we use the following censoring model:
\begin{align}
\label{eq:censoring-model}
\begin{split}
Y & =\begin{cases}
\exp(\theta_0^T X) + \varepsilon & \text{with probability } 1-P(X,\theta_0)\\
0 & \text{with probability } P(X,\theta_0)
\end{cases},
\end{split}
\end{align}
where the probability $P(X,\theta_0)$ takes one of the following asymmetric forms:
\begin{align}
\label{eq:censoring-probability}
P(X,\theta_0) =  \exp(- (\tau\exp(\theta_{0}^{T}X))^\beta) \quad\text{ or }\quad \frac{1}{1+ (\tau\exp(\theta_{0}^{T}X))^\beta}.
\end{align} 
The probabilities specified in \eqref{eq:censoring-probability} generalize those used in the hurdle and zero-inflation models, with $\beta$ controlling the imbalance in the likelihood of zero between samples with small and large conditional means $\exp(\theta_{0}^{T}X)$. The additional parameter $\tau$ controls the range of $\exp(\theta_{0}^{T}X)$ near zero for which the probability of zero outcomes is significant. A larger $\tau$ leads to a lower level of overall sparsity. As we will see, this probability model induces biases for a family of estimators indexed by $\kappa$, with biases generally increasing and variances decreasing in $\kappa$. The optimal estimator therefore balances this bias-variance trade-off in the presence of excess sparsity and heteroskedasticity.
  
\section{Generalized Pseudo Maximum Likelihood Estimators}
\label{sec:trade-off}
In this section, we build on the preceding discussions and propose a novel family of estimators, which encompasses existing PML estimators such as Poisson and gamma PML. This family of single parameter estimators place heterogeneous weights on samples with varying values of conditional means, which enable them to balance the relative magnitudes of heteroskedasticity and sparsity. We show that these two salient features of non-negative data create a bias-variance trade-off under the models discussed in the previous section. In particular, the optimal choice of estimator varies significantly depending on the data. To enable empirical researchers to leverage these estimators, we also propose a simple cross-validation method to select the estimator that best fits their data, instead of resorting to a default method, such as the Poisson PML.
\subsection{A Unified 
Perspective on Pseudo Maximum Likelihood Estimators}
\label{subsec:z-estimators}
PML estimators for non-negative data are generally constructed based on likelihoods of qualitatively distinct distributions such as the Poisson and gamma distributions. In this paper, we focus on another perspective that reveals their closer connections through the \emph{moment} conditions. \citet{gourieroux1984pseudob} show that the following PML estimators are all consistent for $\theta_0$ when i.i.d. data is generated from the model $Y=\exp(\theta_0^{T}X)+\varepsilon$: 

 \begin{enumerate}
     \item Non-linear least squares, which solves 
\begin{align*}
\min_\theta\sum_{i}(y_{i}-\exp(\theta^{T}x_{i}))^{2} & \Rightarrow\sum_{i}(y_{i}-\exp(\theta^{T}x_{i}))\exp(\theta^{T}x_{i})x_{i}=0,
\end{align*}

\item Poisson PML, which solves
\begin{align*}
\max_\theta\sum_{i}y_{i}\theta^{T}x_{i}-\exp(\theta^{T}x_{i}) & \Leftrightarrow\sum_{i}(y_{i}-\exp(\theta^{T}x_{i}))x_{i}=0,
\end{align*}

\item Gamma PML, which solves 
\begin{align*}
\max_\theta\sum_{i}-\theta^{T}x_{i}-y_{i}\exp(-\theta^{T}x_{i}) & \Leftrightarrow\sum_{i}(y_{i}-\exp(\theta^{T}x_{i}))\exp(-\theta^{T}x_{i})x_{i}=0,
\end{align*}

\item Negative binomial PML, which solves
\begin{align*}
\max_{\theta}\sum_i y_{i}\theta^{T}x_{i}-(\frac{1}{b}+y_{i})\log(1+b\exp(\theta^{T}x_{i})) & \Leftrightarrow\sum_{i}({y_{i}-\exp(\theta^{T}x_{i})})(1+b\exp(\theta^{T}x_{i}))^{-1}x_{i}=0.
\end{align*}
 \end{enumerate}
 Note that the first order conditions of the above PML estimators are all of the form 
\begin{align}
\label{eq:weighted-FOC}
    \sum_{i}(y_{i}-\exp(\theta^{T}x_{i}))(c+\exp(\theta^{T}x_{i}))^{\kappa}x_{i}=0.  
\end{align}
 NLS, Poisson, and gamma PML correspond to $c=0$ and $\kappa=1,0,-1$, respectively, while negative binomial corresponds to $c=1/b$ and $\kappa=-1$. Therefore, these PML estimators can be understood as solving a system of equations that balance the \emph{observed} outcomes $y_i$ with the \emph{expected} outcomes $\exp(\theta^{T}x_{i})$, weighted by the covariates $x_i$ and an estimator-specific term $(c+\exp(\theta^{T}x_{i}))^{\kappa}$ for different values of $c$ and $\kappa$. 

The moment equations \eqref{eq:weighted-FOC} 
provide some intuition on the behavior of PML estimators with respect to heteroskedasticity and sparsity. For example, the well-known inefficiency of NLS under strong heteroskedasticity can be attributed to the weights $(c+\exp(\theta^{T}x_{i}))^{\kappa}=\exp(\theta^{T}x_{i})$ it places on samples: samples with larger (estimated) conditional means $\exp(\theta^{T}x_{i})$ receive larger weights in \eqref{eq:weighted-FOC}. However, such samples also tend to be \emph{noisier} under heteroskedasticity, when ${\rm Var}(Y\mid X)$ increases with $\mathbb{E}(Y\mid X)$. In contrast, the gamma PML estimator, which uses the weights $\exp(-\theta^{T}x_i)$, places higher weight on samples with smaller conditional means, which are less noisy, leading to efficiency gains under strong heteroskedasticity. The Poisson PML has uniform weights $(c+\exp(\theta^{T}x_{i}))^{\kappa}\equiv 1$ on samples, and can therefore be seen as a mid-point between gamma and NLS. In short, under stronger heteroskedasticity in \eqref{eq:heteroskedasticity-general}, i.e., larger  $\alpha$ in ${\rm Var}(Y\mid X)=\mathbb{E}^\alpha(Y\mid X)$, PMLs which place smaller weights on samples with larger conditional means, i.e., smaller $\kappa$ in \eqref{eq:weighted-FOC}, tend to be more efficient and thus preferable. 

However, the opposite is true for bias. When data exhibits exceptional levels of sparsity than predicted by a standard PML model (as in \eqref{eq:censoring-model}), the corresponding PML estimator may suffer from bias. Moreover, when samples with smaller conditional means are more likely to be zero, the bias is more {severe} for estimators that place higher weights on such samples. Take for example the model \eqref{eq:censoring-model} with censoring probability 
\[P(X,\theta_0)=\frac{1}{1+ (\tau\exp(\theta_{0}^{T}X))^\beta}\]
in \eqref{eq:censoring-probability}. Under such a model, samples with smaller $\exp(\theta_0^{T}x_i)$ become \emph{less} reliable as they are more likely to be censored. Consequently, a PML estimator that places larger weights on such samples, i.e., having a smaller $\kappa$ in \eqref{eq:weighted-FOC}, suffers from more severe bias. 
 This explains the observation by some existing works that when sparsity is strong, Poisson ($\kappa=0$) tends to have smaller bias than gamma PML ($\kappa=-1$).

The above discussions suggest that there is an intrinsic trade-off when applying PML methods to data with non-negative outcomes. On one hand, heteroskedasticity necessitates smaller weights on \emph{larger} (noisier) samples to improve efficiency. On the other hand, sparsity encourages smaller weights on \emph{smaller} (more bias-inducing) samples to reduce bias. In the former case, PMLs with small $\kappa$ such as gamma PML and negative binomial are generally preferred, while in the latter case PMLs with larger $\kappa$ such as NLS are generally preferred. \citet{silva2006log} argue that the Poisson PML is a ``reasonable compromise'' between NLS and gamma PML when both heteroskedasticity and excess sparsity are present. However, it is clear that depending on the relative magnitude of the two features, there may be other estimators that can better balance the bias and variance. Our work is precisely motivated by this intuition. 

We propose the following general class of $Z$-estimators \citep{amemiya1985advanced} with estimating equations 
\begin{align*}
\frac{1}{n}\sum_i\psi(y_i,x_i,\theta)\equiv \frac{1}{n}\sum_{i}(y_{i}-\exp(\theta^{T}x_{i}))(c+\exp(\theta^{T}x_{i}))^{\kappa}x_{i}= 0, \quad \kappa\in \mathbb{R}.
\end{align*}
 This class of estimators includes as special cases the standard PML estimators. When $c=0$ and $\kappa=0$, we recover Poisson
PML. When $c=0, \kappa=1$, we recover NLS. When $c=0, \kappa=-1$ we recover gamma PML. When $c=1,\kappa=-1$, we recover negative binomial PML. For non-integer values of $\kappa$, the resulting estimator does not arise from pseudo maximum likelihood estimation based on a particular distribution, but their estimating equations generalize those of the standard PML estimators. For this reason, we refer to them as \textbf{generalized PML estimators}\footnote{When $\kappa>0$, the estimators can also be understood as M-estimators that maximize a non-concave sample average objective, especially when some covariates are strictly positive or negative. See \ref{subsec:optimization}.}. We primarily focus on the case with $c=0$, 
resulting in the estimating
equations 
\begin{align}
\label{eq:generalized-PML}
\frac{1}{n}\sum_{i}(y_{i}-\exp(\theta^{T}x_{i}))\cdot\exp^{\kappa}(\theta^{T}x_{i})\cdot x_{i} & =0, \quad\kappa\in\mathbb{R}.
\end{align} 
Our main message is that, depending on the magnitudes of heteroskedasticity and excess sparsity, 
generalized PML estimators with different (potentially non-integer) values of $\kappa$ is preferable. In particular, Poisson regression ($\kappa=0$) does not always optimally balance the bias-variance trade-off created by the two features.  
To illustrate the magnitude of improvements over Poisson regression, we provide the RMSEs of the optimal estimator vs. the Poisson regression estimator in our simulation studies in \cref{tab:RMSE-simulation}. 

In the rest of the section, we study the asymptotic properties of generalized PML estimators, and provide practical guidance on how to select $\kappa$ in practice.

\subsection{Asymptotic Properties of Generalized PML Estimators}
In this subsection, we provide formulae for the bias and asymptotic variance for the estimators defined by \eqref{eq:generalized-PML}. Under the censoring model \eqref{eq:censoring-model}, the bias does not have an exact formula in general. Instead, we provide an approximation based on Taylor expansions.
\begin{thm}
\label{thm:bias}
For each fixed $\kappa \in [-1,1]$, suppose the equation
\begin{align}
\label{eq:population-moments}
\mathbb{E}_{x}\left\{ (1-P(\theta_{0},x))\cdot\exp(\theta_{0}^{T}x)-\exp(\theta^{T}x)\right\} \cdot\exp(\kappa\theta^{T}x)\cdot x = 0
\end{align}
has a unique solution at $\theta = \tilde \theta_0$ for some $\tilde \theta_0$ in the interior of $\Theta$, a compact subset of $\mathbb{R}^d$. Moreover, assume there exists a function $\alpha(x,y)$ with $\mathbb{E} \alpha(x,y) < \infty$ and almost surely
\[\sup_{\theta \in \Theta}\|(y-\exp(\theta^{T}x))\cdot\exp^{\kappa}(\theta^{T}x)\cdot x\|_\infty \leq \alpha(x,y).\]
The bias of generalized PML estimators $\hat \theta$ that solve \eqref{eq:generalized-PML} can be approximated by 
\begin{align}
\label{eq:bias-formula}
       \hat{\theta}-\theta_{0}	=(A^{T}A)^{-1}Ab + o_p(1),
   \end{align}
   where, with $P(\theta_0,x)$ the censoring probability in \eqref{eq:censoring-model}, $A$ and $b$ are given by \begin{align}
   \label{eq:approximation-bias-integral}
       \begin{split}
           A=\mathbb{E}_x\left\{(P(\theta_{0},x)\kappa-1)\cdot xx^{T}\exp((\kappa+1)\theta_{0}^{T}x)\right\},\\
b= \mathbb{E}_x\left\{P(\theta_{0},x)\cdot x^{T}\exp((\kappa+1)\theta_{0}^{T}x)\right \}.
       \end{split}   \end{align}
\end{thm}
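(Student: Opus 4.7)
The plan is to combine standard Z-estimator consistency with a first-order Taylor expansion of the population moment map around $\theta_0$. Write
\[ M(\theta) \;:=\; \mathbb{E}_x\bigl\{\bigl[(1-P(\theta_0,x))\exp(\theta_0^T x) - \exp(\theta^T x)\bigr]\exp(\kappa\theta^T x)\,x\bigr\} \]
for the population analogue of the estimating equation \eqref{eq:generalized-PML}; by hypothesis $M(\tilde\theta_0)=0$. The deterministic displacement $\tilde\theta_0 - \theta_0$ is the asymptotic bias, and the sampling error $\hat\theta - \tilde\theta_0$ is what the $o_p(1)$ term will absorb.

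First I would invoke a standard consistency result for Z-estimators (e.g.\ Theorem 5.9 of van der Vaart, 1998). Compactness of $\Theta$, the integrable envelope $\alpha(x,y)$ (which delivers a uniform law of large numbers for the empirical criterion $M_n$), joint continuity in $\theta$, and uniqueness of the zero $\tilde\theta_0$ in the interior of $\Theta$ together yield $\hat\theta \xrightarrow{p} \tilde\theta_0$. This reduces the bias claim to a purely analytic characterisation of $\tilde\theta_0 - \theta_0$.

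Next, since $M(\tilde\theta_0)=0$, Taylor-expanding $M$ around $\theta_0$ gives
\[ 0 \;=\; M(\theta_0) \;+\; M'(\theta_0)(\tilde\theta_0 - \theta_0) \;+\; O\bigl(\|\tilde\theta_0 - \theta_0\|^{2}\bigr), \]
so to leading order $\tilde\theta_0 - \theta_0 \approx -[M'(\theta_0)]^{-1} M(\theta_0)$. Direct substitution at $\theta=\theta_0$ collapses the bracketed term to $-P(\theta_0,x)\exp(\theta_0^T x)$, producing
\[ M(\theta_0) \;=\; -\mathbb{E}_x\bigl\{P(\theta_0,x)\exp((\kappa+1)\theta_0^T x)\,x\bigr\} \;=\; -b. \]
Differentiating under the expectation (justified by the envelope $\alpha$ and the smoothness of $\exp$ on the compact $\Theta$), using $\partial_\theta \exp(\kappa\theta^T x)=\kappa\exp(\kappa\theta^T x)x^{T}$ and $\partial_\theta \exp((\kappa+1)\theta^T x)=(\kappa+1)\exp((\kappa+1)\theta^T x)x^{T}$, and evaluating at $\theta_0$, collects a weighted second-moment matrix in $xx^{T}$ whose integrand coefficient, after combining the $\kappa(1-P)$ and $(\kappa+1)$ contributions of the two terms in $M(\theta)$, is exactly the one defining $A$ in the statement. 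Because $A$ is built from the symmetric outer products $xx^{T}$ it is itself symmetric, so $(A^{T}A)^{-1}A = A^{-1}$; together with $M(\theta_0)=-b$ this gives $\tilde\theta_0 - \theta_0 = (A^{T}A)^{-1}Ab$.

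Combining the two pieces, $\hat\theta - \theta_0 = (\hat\theta - \tilde\theta_0) + (\tilde\theta_0 - \theta_0) = (A^{T}A)^{-1}Ab + o_p(1)$, as claimed. The \textbf{main obstacle} is controlling the quadratic remainder in the Taylor step: the linearisation is accurate only when $\|\tilde\theta_0-\theta_0\|$ is itself small, and by the computation above this is driven by $M(\theta_0)=-b$ being small, i.e., by the censoring probability $P(\theta_0,x)$ being small relative to $\exp((\kappa+1)\theta_0^T x)$. Hence the identity should be read as a small-bias (mild-sparsity) approximation rather than an exact equality, and any quantitative refinement would require a second-order expansion with an explicit control on $\|b\|$ via the sparsity parameters $\tau,\beta$ in \eqref{eq:censoring-probability}. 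A secondary, routine point is verifying the dominated convergence hypotheses used both for the uniform LLN and for exchanging $\partial_\theta$ with $\mathbb{E}_x$, but these follow directly from the stated envelope condition.
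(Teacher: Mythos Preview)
Your proposal is correct and follows essentially the same route as the paper: invoke a standard $Z$-estimator consistency theorem (you cite van der Vaart, the paper cites Theorem~3.1 of Newey and McFadden) to obtain $\hat\theta\to_p\tilde\theta_0$, then linearise the population moment map around $\theta_0$ and solve the resulting linear system; the paper Taylor-expands $\exp(\theta^Tx)$ and $\exp(\kappa\theta^Tx)$ separately and multiplies out, whereas you differentiate $M$ as a whole, but these are equivalent at first order, and your caveat that the linearisation is only accurate when $b$ is small mirrors the paper's own remark following the theorem. One small algebraic point worth rechecking: the Jacobian you outline has integrand coefficient $\kappa(1-P)-(\kappa+1)=-(P\kappa+1)$, not the $(P\kappa-1)$ appearing in the stated $A$; the paper's expansion contains the same sign slip on the $P\kappa$ cross-term, and since that term is of the same order as the bias being computed the two versions coincide to leading order, so the discrepancy does not affect the claimed approximation.
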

In particular, in the absence of censoring, i.e., $ P(\theta_{0},x)\equiv 0$, 
so that $b\equiv 0$ and 
\cref{thm:bias} implies that all generalized PML estimators are \emph{consistent} estimators of $\theta_0$. Moreover, the bias does not depend on $\alpha$. The quality of the approximation formula in \cref{thm:bias} is an important consideration in practice. Intuitively, since the Taylor expansions used in the derivation of \cref{thm:bias} rely on the small magnitude of $\exp(\theta_{0}^{T}x)$, the formula \eqref{eq:bias-formula} provides a good approximation of the bias when the censoring probability and $\kappa$ result in integrals with small densities on $x$ with large $\exp(\theta_{0}^{T}x)$. In \cref{sec:experiments}, we provide simulation evidence that the approximation formula we derive is close to the true bias in a wide range of settings. 

Next, we compute the asymptotic variance of the generalized PML estimators, taking into account the bias that results from the censoring model \eqref{eq:censoring-model}.
\begin{thm}
\label{thm:asymptotic-variance}
Suppose, in addition to the assumptions in \cref{thm:bias}, that there exists (a possibly different) $\alpha(x,y)$ with $\mathbb{E} \alpha(x,y) < \infty$ and almost surely
\[\sup_{\theta \in \Theta}\|(y-\exp(\theta^{T}x))^2\cdot\exp^{2\kappa}(\theta^{T}x)\cdot xx^T\|_\infty \leq \alpha(x,y), \sup_{\theta \in \Theta}\|\nabla_\theta \left\{(y-\exp(\theta^{T}x))\cdot\exp^{\kappa}(\theta^{T}x)\cdot x\right\}\|_\infty \leq \alpha(x,y). \]    
     Then we have asymptotic normality 
$\sqrt{n}(\theta-\tilde{\theta}_0)\rightarrow_d\mathcal{N}(0,J^{-1}IJ^{-1})$, where
\begin{align}
\label{eq:variance-component}
\begin{split}
   I & =\mathbb{E}_{x}xx^{T}\exp(2\kappa\tilde{\theta}_0^{T}x)\cdot\left[P(\theta_{0},x)(\exp(\tilde{\theta}_0^{T}x))^{2}+(1-P(\theta_{0},x))((\exp(\theta_{0}^{T}x)-\exp(\tilde{\theta}_0^{T}x))^{2}+\exp(\alpha\theta_{0}^{T}x))\right]\\
J & =\mathbb{E}_{x}\left[-\left((1-P(\theta_{0},x))\cdot\exp(\theta_{0}^{T}x)-\exp(\tilde{\theta}_0^{T}x)\right)\cdot\exp(\kappa\tilde{\theta}_0^{T}x)\cdot\kappa xx^{T}+\exp(\kappa\tilde{\theta}_0^{T}x)\exp(\tilde{\theta}_0^{T}x)xx^{T}\right]
\end{split}
\end{align}
where $\tilde{\theta}_{0}$ is the solution of \eqref{eq:population-moments}.
\end{thm}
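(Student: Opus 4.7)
The plan is to treat $\hat\theta$ as a standard $Z$-estimator solving $M_n(\theta)=\frac{1}{n}\sum_i \psi(y_i,x_i,\theta)=0$ with $\psi(y,x,\theta)=(y-\exp(\theta^T x))\exp(\kappa\theta^T x)\,x$, and apply the classical sandwich-variance machinery (as in Newey--McFadden or van der Vaart). The substantive content, beyond invoking that template, is to explicitly evaluate the two matrices $I$ and $J$ under the censoring model~\eqref{eq:censoring-model} and to verify that the regularity hypotheses given in the statement are exactly what the template needs.

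First, I would establish consistency $\hat\theta\to_p\tilde\theta_0$. The population moment $M(\theta)=\mathbb{E}\,\psi(y,x,\theta)$ equals the left-hand side of \eqref{eq:population-moments}, whose unique interior root is $\tilde\theta_0$ by assumption. The first domination condition $\|\psi(y,x,\theta)\|_\infty\le\alpha(x,y)$ with $\mathbb{E}\alpha<\infty$, combined with compactness of $\Theta$ and continuity of $\theta\mapsto\psi$, yields the uniform law of large numbers $\sup_{\theta\in\Theta}\|M_n(\theta)-M(\theta)\|\to_p 0$. Standard argmax/argzero consistency (e.g.\ van der Vaart Thm.~5.9) then gives $\hat\theta\to_p\tilde\theta_0$.

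Next, a mean-value expansion yields $0=M_n(\hat\theta)=M_n(\tilde\theta_0)+\nabla_\theta M_n(\bar\theta)(\hat\theta-\tilde\theta_0)$ for some $\bar\theta$ on the segment between $\hat\theta$ and $\tilde\theta_0$. The second domination hypothesis gives a ULLN for $\nabla_\theta\psi$, so $\nabla_\theta M_n(\bar\theta)\to_p\nabla_\theta M(\tilde\theta_0)=:-J$ by consistency of $\bar\theta$. The first domination hypothesis also guarantees a finite second moment of $\psi(y,x,\tilde\theta_0)$, so the multivariate CLT gives $\sqrt n\,M_n(\tilde\theta_0)\to_d\mathcal{N}(0,I)$ with $I=\mathbb{E}\,\psi\psi^T\big|_{\tilde\theta_0}$. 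Slutsky then delivers $\sqrt n(\hat\theta-\tilde\theta_0)\to_d\mathcal{N}(0,J^{-1}IJ^{-1})$, provided $J$ is invertible (which the identification assumption supplies locally at $\tilde\theta_0$).

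What remains, and what I anticipate as the main bookkeeping step, is to match $I$ and $J$ to the formulas in \eqref{eq:variance-component}. For $J$, differentiating gives $\nabla_\theta\psi=xx^T\exp(\kappa\theta^T x)\bigl[-\exp(\theta^T x)+\kappa(y-\exp(\theta^T x))\bigr]$; taking $\mathbb{E}[\cdot\mid x]$ uses only $\mathbb{E}[y\mid x]=(1-P(\theta_0,x))\exp(\theta_0^T x)$, which after negation reproduces the stated $J$. For $I$, I must compute $\mathbb{E}[(y-\exp(\tilde\theta_0^T x))^2\mid x]$ by conditioning on the Bernoulli censoring event in~\eqref{eq:censoring-model}: on the censored branch (probability $P(\theta_0,x)$) the squared residual is $\exp(2\tilde\theta_0^T x)$, and on the uncensored branch $y=\exp(\theta_0^T x)+\varepsilon$ gives squared residual $(\exp(\theta_0^T x)-\exp(\tilde\theta_0^T x))^2+\mathrm{Var}(\varepsilon\mid x)$, with the variance equal to $\exp(\alpha\theta_0^T x)$ by~\eqref{eq:heteroskedasticity-general}. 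Factoring out $xx^T\exp(2\kappa\tilde\theta_0^T x)$ and taking $\mathbb{E}_x$ produces exactly the formula claimed for $I$. The only real obstacle is being careful about interchanging differentiation and expectation and about the cross-term $\mathbb{E}[(\exp(\theta_0^T x)-\exp(\tilde\theta_0^T x))\varepsilon\mid x]=0$; both follow directly from the domination hypotheses and the mean-zero property of $\varepsilon$.
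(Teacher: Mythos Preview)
Your proposal is correct and follows essentially the same route as the paper: invoke the standard $Z$-estimator sandwich machinery (consistency via ULLN on $\psi$, asymptotic normality via a mean-value expansion and CLT for $M_n(\tilde\theta_0)$), then evaluate $I=\mathbb{E}[\psi\psi^T]$ and $J=-\mathbb{E}[\nabla_\theta\psi]$ at $\tilde\theta_0$ by conditioning on the Bernoulli censoring event in~\eqref{eq:censoring-model} and using $\mathbb{E}[\varepsilon\mid x]=0$, $\mathrm{Var}(\varepsilon\mid x)=\exp(\alpha\theta_0^T x)$. If anything, your write-up is more explicit than the paper's about how the domination hypotheses feed into each step (ULLN for $\nabla_\theta\psi$, finite second moment for the CLT).
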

 
\begin{figure}
\begin{centering}
\begin{adjustbox}{margin=-1cm 0cm 0cm 0cm}
\includegraphics[scale=0.4]{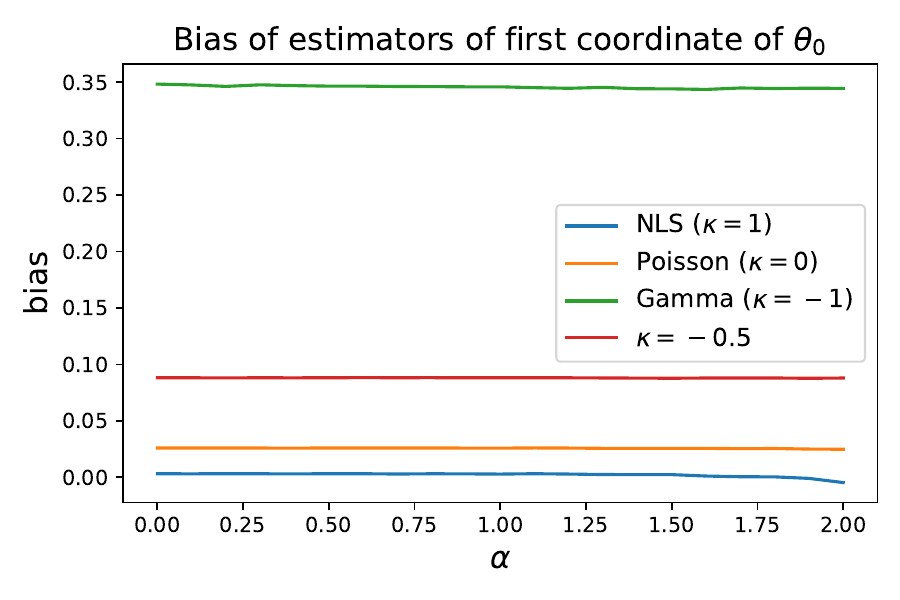}\includegraphics[scale=0.4]{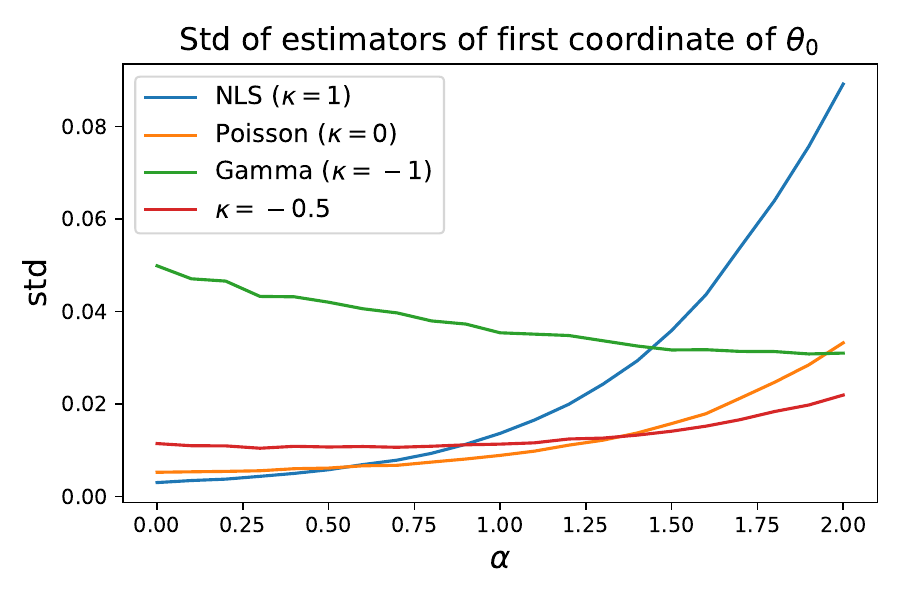}\includegraphics[scale=0.4]{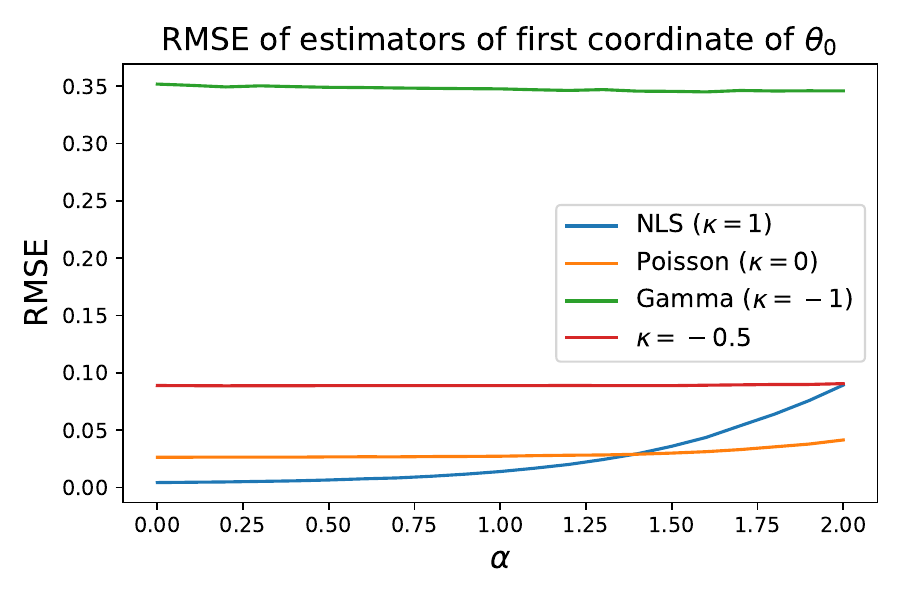}
\end{adjustbox}
\par\end{centering}
\caption{Bias, standard deviation, and RMSE of generalized PML estimators as a function of the heteroskedasticity parameter $\alpha$ with censoring probability $\frac{1}{1+(\tau\exp(\theta_{0}^{T}x))^\beta}$ with $\beta=2$ and $\tau=2$.}
\label{fig:bias-std-rmse-alpha-simple}
\end{figure}

The proof of \cref{thm:asymptotic-variance} relies on asymptotic theory of method of moment estimators \citep{newey1994large}. The uniform boundedness assumptions in \cref{thm:bias} and \cref{thm:asymptotic-variance} are standard and ensure bounded moments and the use of uniform law of large numbers. In the case of no censoring and consistent estimation of $\theta_0$, the asymptotic variance reduces to
\begin{align}
    \left \{ \mathbb{E}_x\left[\exp^{\kappa+1}(\theta_0^Tx) x x^T\right] \right \}^{-1}  \left \{ \mathbb{E}_x\left[\exp^{2\kappa+\alpha}(\theta_0^Tx) x x^T\right] \right \}  \left \{ \mathbb{E}_x\left[\exp^{\kappa+1}(\theta_0^Tx) x x^T\right] \right \}^{-1},
    \label{eq:asymptotic-variance}
\end{align}
which implies that for heteroskedasticity indexed by $\alpha$ in \eqref{eq:heteroskedasticity-general}, the most efficient generalized PML estimator is the one with $\kappa=1-\alpha$, a fact known in existing works for Poisson ($\kappa=0$), gamma ($\kappa=-1$), and NLS ($\kappa=1$). 

\subsection{Bias-variance Trade-off and Selecting the Optimal $\kappa$}
\label{subsec:cross-validation}
As discussed in \cref{subsec:z-estimators}, the generalized PML estimators defined by the moment equations \eqref{eq:generalized-PML} balance the bias-variance trade-off created by heteroskedasticity and excess sparsity. The asymptotic properties of generalized PML estimators established in this section can help provide quantitative descriptions of this trade-off. For example, when $\alpha=0$, $x$ is uniform on $[0,1]$, and $P(\theta_0,x)$ is a threshold censoring model, both the bias and variance of NLS are upper bounded by those of the Poisson PML. As $\alpha$ increases from 0 to 2, the biases do not change, while 
the variance of NLS increases dramatically relative to that of the Poisson PML (see \ref{sec:bias-variance-discussion}). These behaviors lead to a particular value of $\overline{\alpha}\in(0,2)$ where the RMSE of NLS starts to exceed that of the Poisson. Therefore, below heteroskedasticity level  $\overline{\alpha}$, NLS should in fact be preferred than Poisson, even though NLS may have larger variance. We illustrate this phenomenon in \cref{fig:bias-std-rmse-alpha-simple}, based on the same simulation design as that used in \cref{tab:RMSE-simulation} and \cref{sec:experiments}. We see that the bias of estimators decreases with $\kappa$, as the corresponding estimator places less weight on samples with small conditional means. Meanwhile, as $\alpha$ increases, estimators with smaller $\kappa$ becomes more efficient. The trade-off is manifested through the observation that below $\overline{\alpha}\approx 1.4$, the RMSE of NLS is the smallest, despite Poisson and gamma being more efficient. 

More generally, for a particular level of heteroskedasticity and sparsity, there exists a $\kappa$ that optimally balances the bias-variance trade-off. If we consider the two dimensional plane where the x axis is $\alpha$, which quantifies heteroskedasticity, and the y axis is $\tau$, which quantifies the sparsity level, we observe a \emph{phase transition}, where the optimal $\kappa$ changes depending on the particular region of the plane. In \cref{sec:experiments}, we demonstrate through numerical experiments that such a phase transition indeed occurs (\cref{fig:phase-transition,fig:phase-transition-simple}).

The bias-variance trade-off considered in this paper has important implications on how one should handle data with non-negative outcomes in practice. Although Poisson regression is now a popular choice, our work calls for a more principled approach to select the generalized PML estimator with optimal $\kappa$ given the magnitudes of heteroskedasticity and excess sparsity of the data at hand. These characteristics can in principle be determined from data, using for example statistical tests \citep{ park1966estimation,mullahy1986specification}. In order to then determine the optimal $\kappa$, a natural idea is to leverage the asymptotic results developed in this section to assess the bias-variance trade-off. However, there are several challenges to this approach. First, it requires determining the parameters $\alpha$ in the heteroskedasticity model \eqref{eq:heteroskedasticity-general} and $\beta,\tau$ in the censoring model. Second, the formulae for bias and asymptotic variance depend on $\theta_0$, whose estimator $\hat \theta$ based on \eqref{eq:generalized-PML} is biased under a censoring model, which may affect the accuracy of the bias and variance estimates. In this work, we propose an alternative, simpler method to select $\kappa$, which we describe next.

In practice, we can determine the optimal $\kappa$ using a standard $k$-fold cross-validation. More precisely, consider a dataset consisting of $n$ i.i.d. samples $\{y_{i},x_{i}\}_{i=1}^{n}$ where $y_i\geq0$. We split the dataset randomly into $k$ folds $F_1\cup \dots\cup F_k=\{y_{i},x_{i}\}_{i=1}^{n}$ with roughly equal sizes. For each fold $j=1,\dots,k$, we construct generalized PML estimators $\hat \theta^{(j)}_\kappa$ with $\kappa$ in a grid on $[-b,b]$ using data from the complement $\cup_{j'\neq j}F_{j'}$ of $F_j$, and compute the out-of-sample MSE 
\[e_{\kappa}^{(j)}=\frac{1}{|F_{j}|}\sum_{(x_{i},y_{i})\in F_{j}}(y_{i}-\exp(x_{i}^{T}\hat{\theta}_{\kappa}^{(j)}))^{2}.\]
We then average the MSEs across all folds, i.e., $e_\kappa = \frac{1}{k}\sum_{j=1}^k e_{\kappa}^{(j)}$, and select the $\kappa$ that minimizes $e_\kappa$. In \cref{sec:empirical}, we apply this procedure to four datasets from finance applications, and demonstrate that the optimal $\kappa$ can vary significantly across applications (\cref{fig:RMSE}). 

\section{Simulation Studies} 
\label{sec:experiments}
In this section, we verify our theoretical findings and demonstrate the prevalence of the bias-variance trade-off for data with non-negative outcomes.
\subsection{Simulation Design}
Our simulation design follows the setup in \citet{silva2006log}, combined with the censoring model introduced in this paper. 
First, we generate a 2-dimensional covariate $X$ with its first coordinate $X_1$ drawn from a standard normal distribution and second coordinate $X_2$ drawn from the uniform distribution on $[0,1]$. Given $X$, the outcome variable is generated according to the model \eqref{eq:censoring-model}:
\begin{align*}
Y_{i}=\begin{cases}
\exp(\theta_{0}^{T}X_{i})\cdot\eta_{i} & \text{with probability } 1-P(\theta_{0},X_{i})\\
0 & \text{with probability }P(\theta_{0},X_{i}),
\end{cases}
\end{align*}
 where the parameter of interest $\theta_0=[1,1]$ and the multiplicative error $\eta_i$ is log-normal, i.e., $\log \eta_i$ is normal with zero mean. The variance of $\eta_i$ is set to be $\exp^{\alpha-2}(\theta_{0}^{T}X_{i})$, so that the conditional variance of $\exp(\theta_{0}^{T}X_{i})\cdot\eta_{i}$ is equal to $\exp^{\alpha}(\theta_{0}^{T}X_{i})$. The censoring probability $P(\theta_{0},X_{i})$ is given by 
\[P(X_i,\theta_0) =\frac{1}{(1+ (\tau\exp(\theta_{0}^{T}X_{i}))^\beta)},\]
which can be viewed as continuous approximations to the step function. Compared to the double exponential specification in \eqref{eq:censoring-probability}, the power decay specification provides a heavier tail, but their qualitative behaviors with respect to $\beta$ and $\tau$ are similar. $\beta$ regulates the contrast in censoring probabilities between samples with small or large conditional means. As $\beta$ increases from 1, samples with $\tau \exp(\theta_{0}^{T}X_{i})<1$ become increasingly susceptible to censoring, while samples with $\tau \exp(\theta_{0}^{T}X_{i})\geq1$ become less affected. $\tau$ regulates the point of ``discontinuity''. As $\tau$ increases from 0, the range of $\exp( \theta_{0}^{T}X_{i})$ for which censoring is significant becomes smaller and more concentrated around 0, resulting in overall \emph{less} sparsity. For some constant $b\geq 1$, we consider the family of generalized PML estimators defined by \eqref{eq:generalized-PML}:
\begin{align*}
    \frac{1}{n}\sum_{i}(Y_{i}-\exp(\theta^{T}X_{i}))\cdot\exp^{\kappa}(\theta^{T}X_{i})\cdot X_{i} & =0,\quad\kappa\in[-b,b],
\end{align*}
and study their bias and variance. A practical consideration is how to obtain the generalized PML estimators for $\kappa \notin \{-1,0,1\}$. In \ref{subsec:optimization}, we propose an approach by solving optimization problems whose first order conditions recover the estimating equations \eqref{eq:generalized-PML}.

\subsection{Evidence of the Bias-Variance Trade-off}
In this part of the simulation studies, we investigate the various manifestations of the bias-variance trade-off, by holding subsets of the tuple $(\alpha,\tau,\beta,\kappa)$ in our framework fixed and varying the rest. 

\begin{figure}
\begin{centering}
 \begin{adjustbox}{margin=-1cm 0cm 0cm 0cm}
\includegraphics[scale=0.4]{figures/bias_1_alpha.pdf}\includegraphics[scale=0.4]{figures/std_1_alpha.pdf}\includegraphics[scale=0.4]{figures/RMSE_1_alpha.pdf}
\end{adjustbox}
\par\end{centering}
\begin{centering}
\begin{adjustbox}{margin=-1cm 0cm 0cm 0cm}
\includegraphics[scale=0.4]{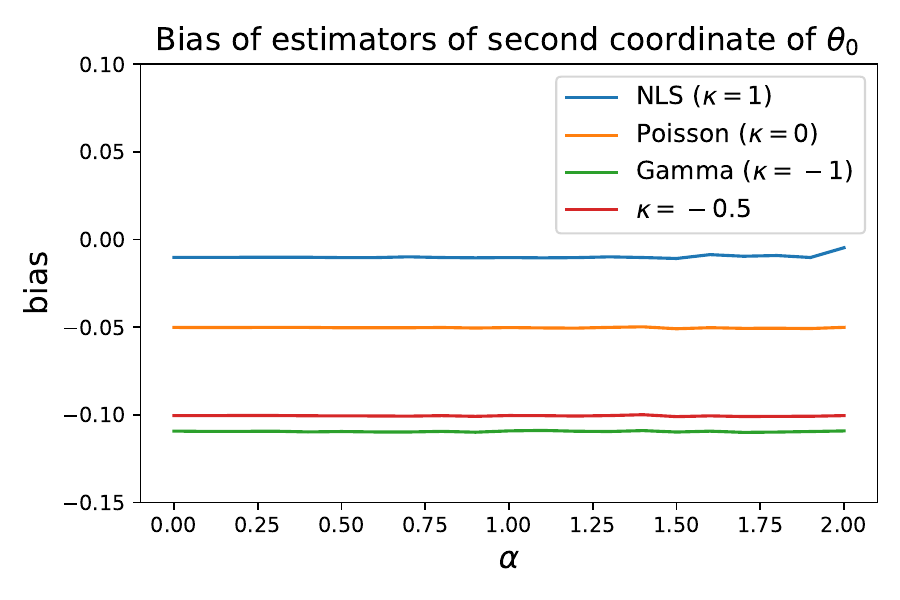}\includegraphics[scale=0.4]{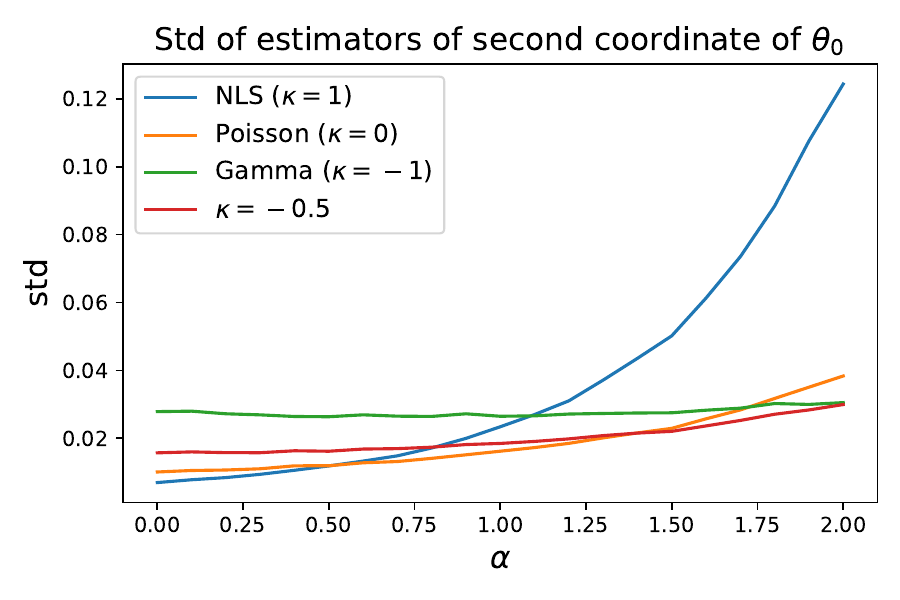}\includegraphics[scale=0.4]{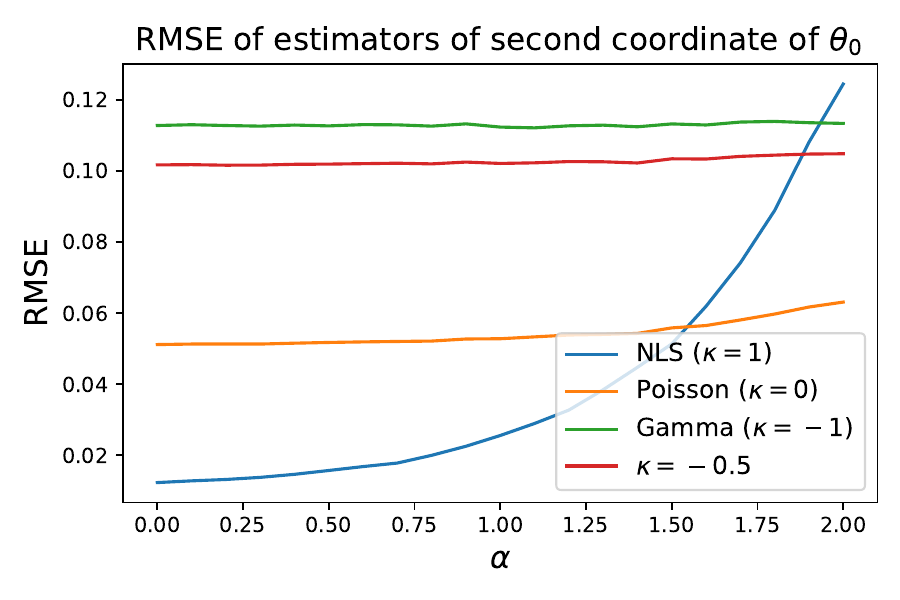}
\end{adjustbox}
\par\end{centering}
\caption{Bias, standard deviation, and RMSE of generalized PML estimators as a function of the heteroskedasticity parameter $\alpha$ and censoring probability $\frac{1}{1+(\tau\exp(\theta_{0}^{T}x))^\beta}$ with $\beta=2$ and $\tau=2$. Results are calculated based on 5000 replications of data with sample size 3000.\zq{Make the legend fonts larger}
} 
\label{fig:bias-std-rmse-alpha}
\end{figure}
First, in \cref{fig:bias-std-rmse-alpha} we plot the bias, standard deviation, and root mean squared error (RMSE) of generalized PML estimators as a function of the degree of heteroskedasticity $\alpha\in[0,2]$, for fixed $\beta=2, \tau=2$. We find that bias is decreasing in $\kappa$, and is essentially flat in $\alpha$, which is expected since the bias approximation formula does not depend on $\alpha$. Moreover, the standard deviation increases with $\alpha$ for most estimators, but those with larger $\kappa$ experience faster increase of the standard deviation, as they are less efficient under heteroskedasticity. As a result, when $\alpha$ grows, for the first coordinate of $\theta_0$, the standard deviation of NLS starts to be dominated by that of the Poisson at $\alpha\approx 0.6$, which in turn is lower bounded by the standard deviation of the generalized PML estimator with $\kappa=-0.5$ starting $\alpha\approx 1.35$ and lower bounded by the gamma PML starting $\alpha\approx 1.5$. Interestingly, the standard deviation of gamma PML ($\kappa=-1$) is \emph{larger} than that of $\kappa=-0.5$ even for $\alpha=2$, when in the absence of asymmetric censoring gamma PML is most efficient. This is due to censoring having an effect on the asymptotic variance.

If we ignore the bias from asymmetric censoring, we might conclude that Poisson PML is preferable for heteroskedasticity levels $\alpha \in [0.6, 1.35]$-a rule of thumb that is, in fact, often recommended in practice. However, the RMSE comparisons in \cref{fig:bias-std-rmse-alpha} suggest that the NLS in fact dominates the Poisson (and gamma) for all $\alpha \leq 1.35$, due to its smaller bias. Moreover, the efficiency gain of gamma PML and $\kappa=-0.5$ for large $\alpha$ is overwhelmed by their large biases, making the Poisson PML the best option for $\alpha>1.35$. Similar observations hold for the second coordinate of $\theta_0$. These results suggest that in the presence of sparsity, conventional consensus on the superiority of Poisson PML may be incomplete and misguided, and that the bias plays an important role in a more systematic analysis. 
Our simulation results in \cref{fig:bias-std-rmse-alpha} suggest that, under the common asymmetric censoring model that results in excess sparsity, the RMSE of NLS continues to dominate that of the Poisson PML beyond $\alpha=1$, with a transition threshold of $\alpha\approx 1.35$ to $1.5$.

\begin{figure}
\begin{centering}
\begin{adjustbox}{margin=-1cm 0cm 0cm 0cm}
\includegraphics[scale=0.4]{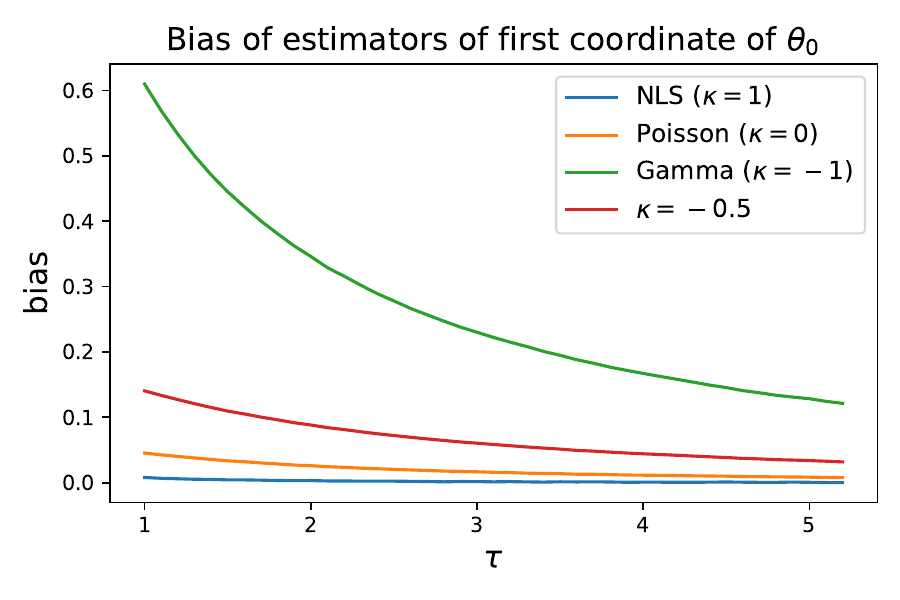}\includegraphics[scale=0.4]{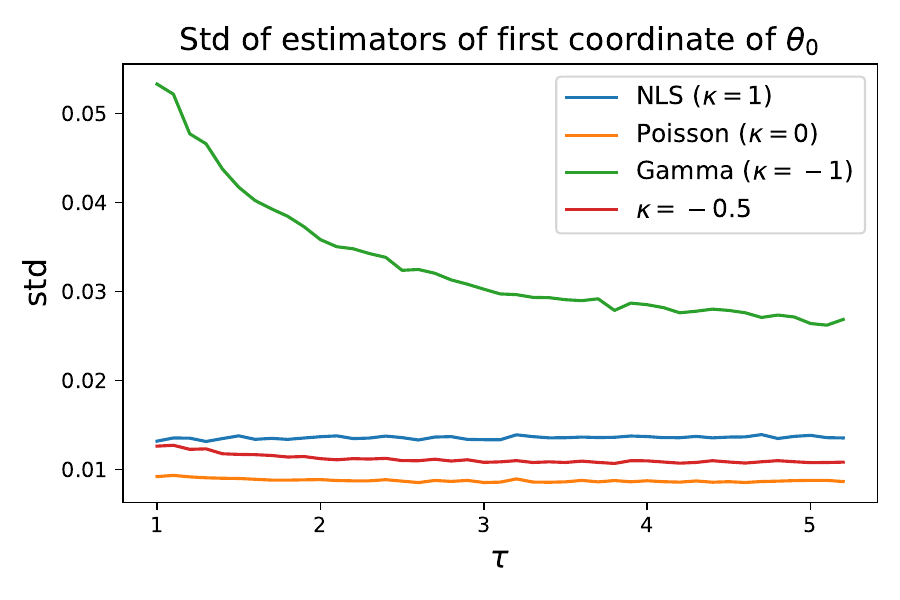}\includegraphics[scale=0.4]{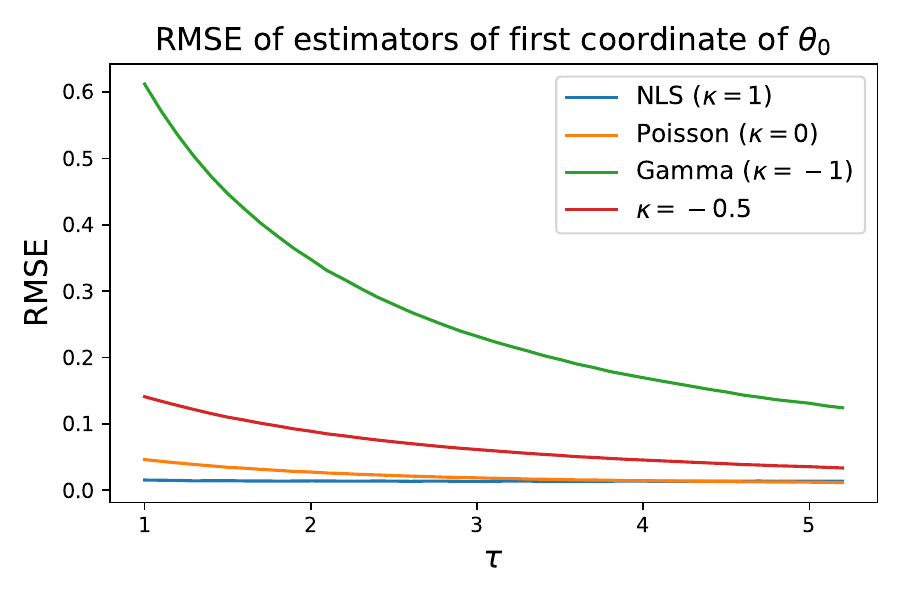}
\end{adjustbox}
\par\end{centering}
\begin{centering}
\begin{adjustbox}{margin=-1cm 0cm 0cm 0cm}
\includegraphics[scale=0.4]{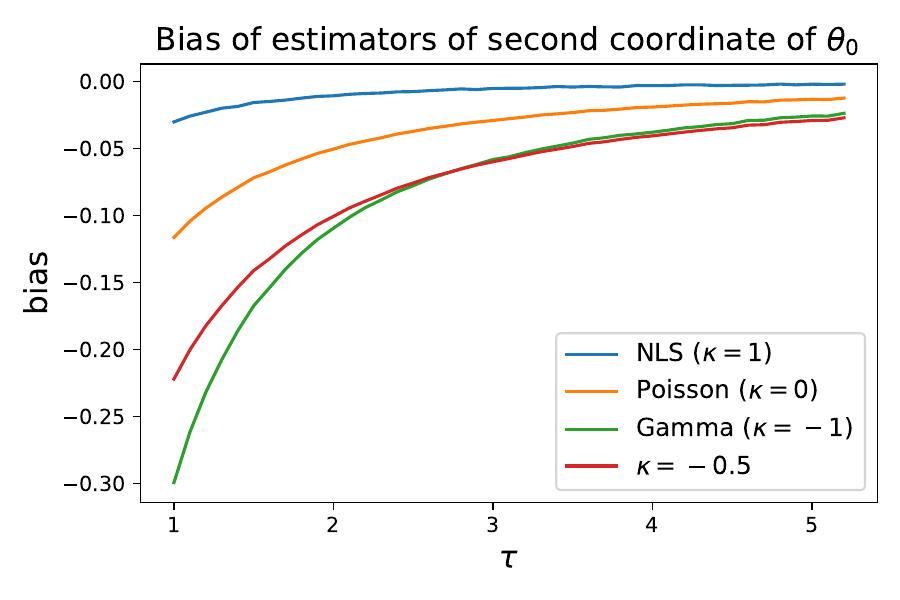}\includegraphics[scale=0.4]{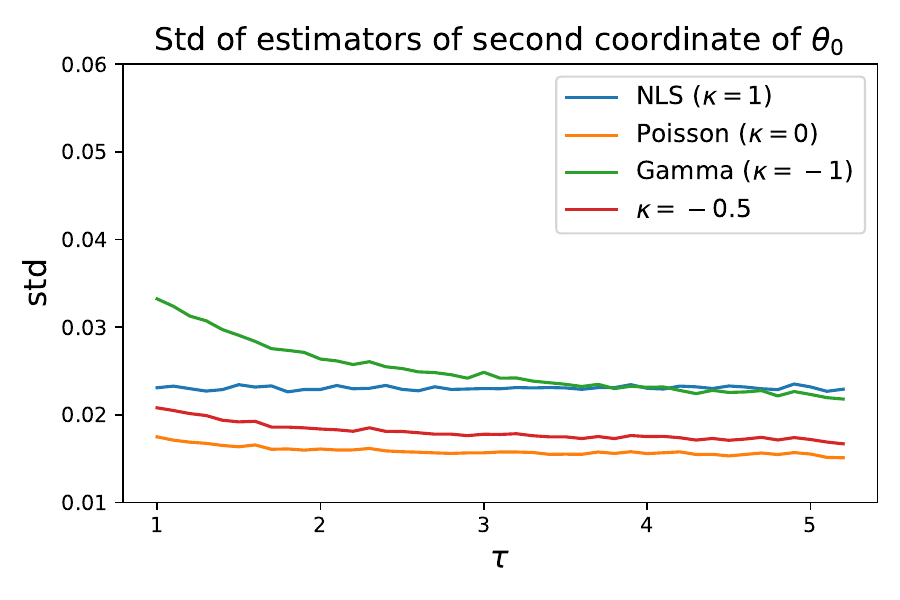}\includegraphics[scale=0.4]{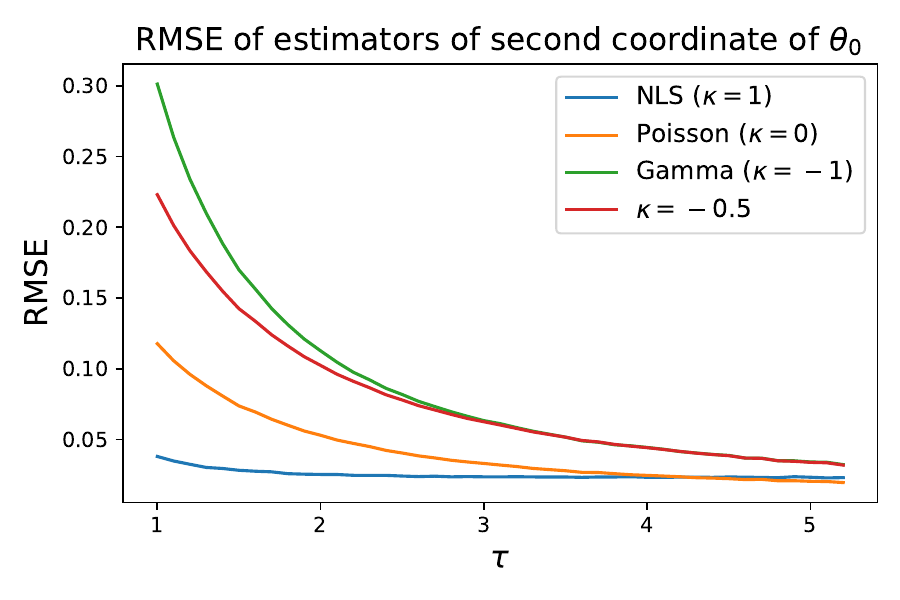}
\end{adjustbox}
\par\end{centering}
\caption{Bias, standard deviation, and RMSE of PML estimators under censoring probability $\frac{1}{1+(\tau\exp(\theta_{0}^{T}x))^\beta}$ and $\alpha=1,\beta=2$. Results are calculated based on 5000 replications of data with sample size 3000.}
\label{fig:bias-tau}
\end{figure}

Next, in \cref{fig:bias-tau}, we plot the bias, variance, and RMSE of generalized PML estimators as a function of $\tau$ under the censoring probability $\frac{1}{1+(\tau\exp(\theta_{0}^{T}x))^\beta}$ with $\beta=2$ and $\alpha=1$. Recall that increasing $\tau$ reduces overall sparsity of the data, so we should expect the bias of estimators to decrease, which is indeed the case. Moreover, the bias plots show that NLS is relatively insensitive to the level of sparsity in the data, whereas PML estimators with negative $\kappa$ are quite sensitive to small values of $\tau$, when censoring is strong and samples with small conditional means are less reliable. The variance of Poisson is the smallest among all generalized PML estimators, since $\alpha=1$. The RMSE plots again  demonstrate the bias-variance trade-off, with NLS having the smallest RMSE for $\tau\leq 4$, while Poisson starts to have smaller RMSE for $\tau>4$, when the bias from sparsity becomes less dominant. These results again suggest that in the presence of excess sparsity, contrary to popular belief, the Poisson PML is not necessarily the best choice, even when it is expected to be the most efficient. The optimal choice depends on the degree of censoring.

\begin{figure}[t]
\begin{centering}
\begin{adjustbox}{margin=-1cm 0cm 0cm 0cm}
\includegraphics[scale=0.4]{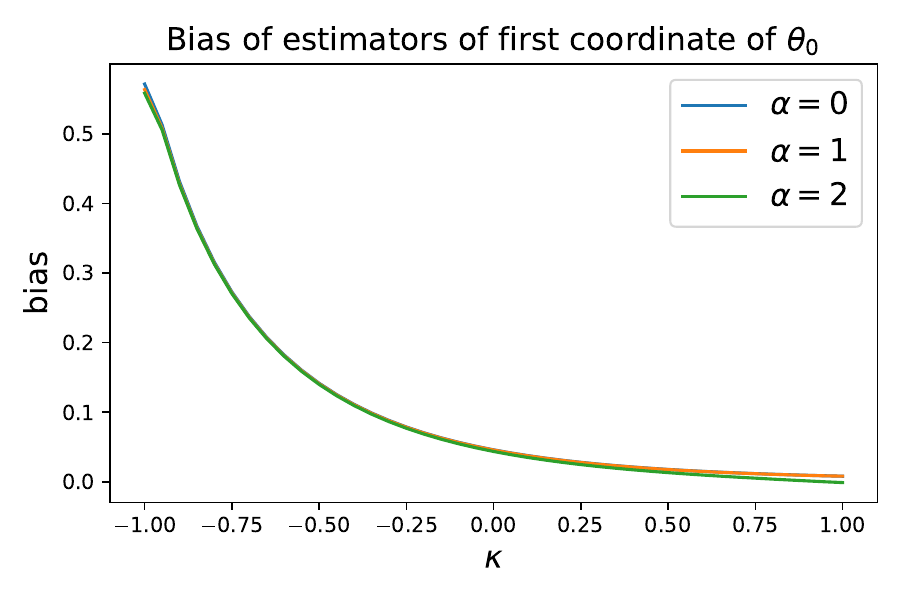}\includegraphics[scale=0.4]{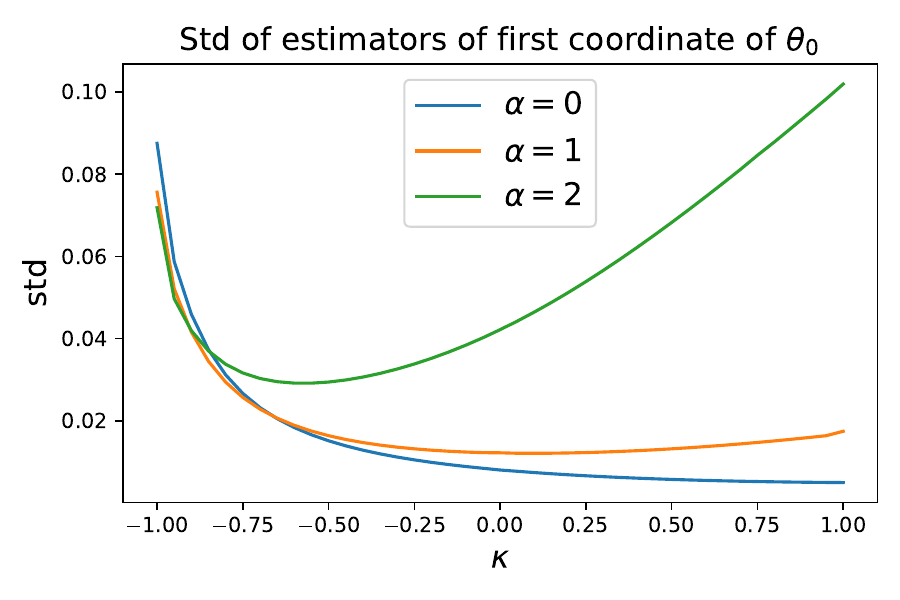}\includegraphics[scale=0.4]{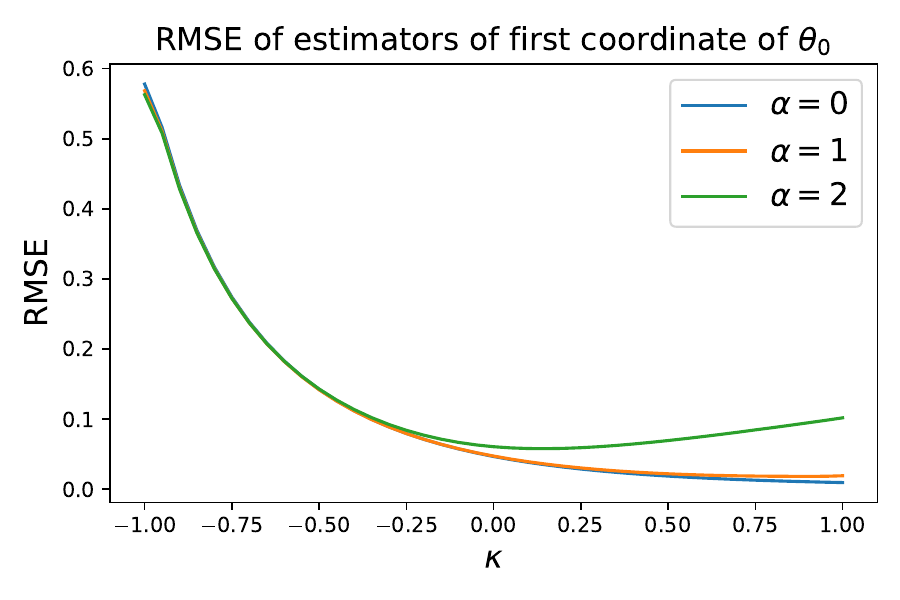}
\end{adjustbox}
\par\end{centering}
\begin{centering}
\begin{adjustbox}{margin=-1.2cm 0cm 0cm 0cm}
\includegraphics[scale=0.4]{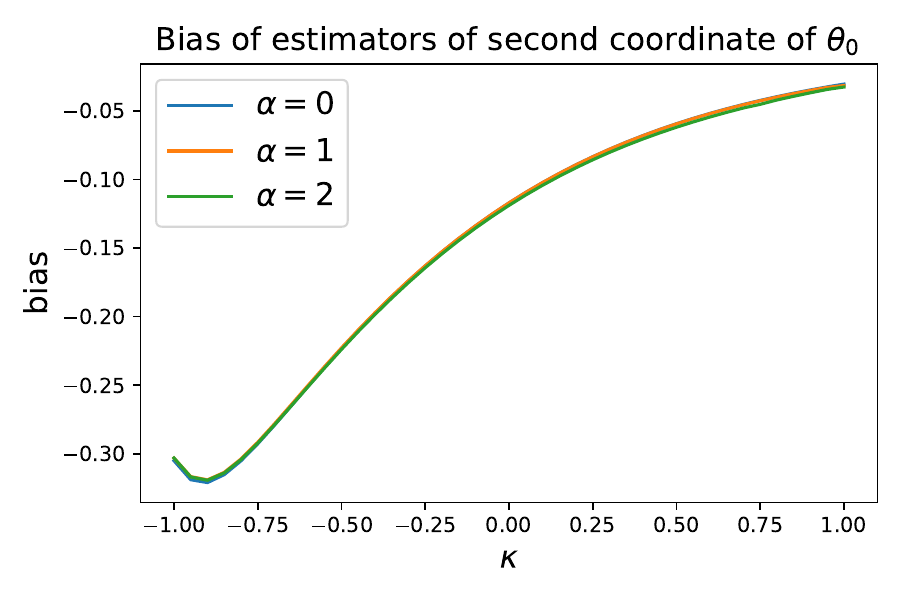}\includegraphics[scale=0.4]{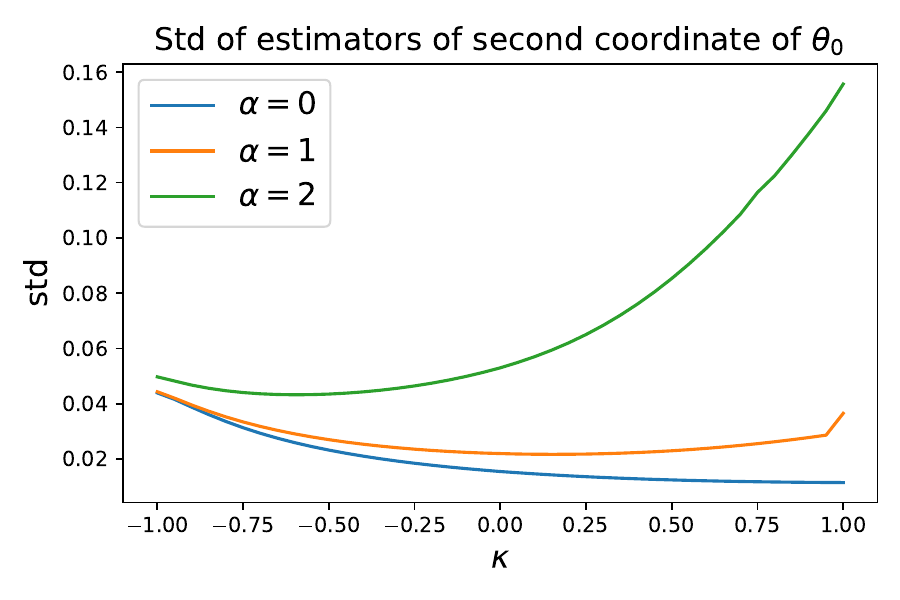}\includegraphics[scale=0.4]{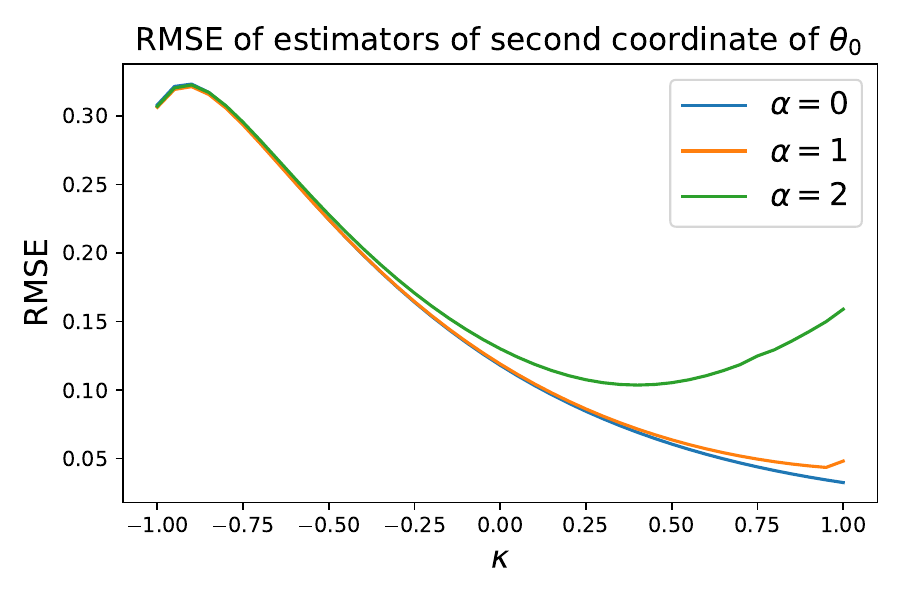}
\end{adjustbox}
\par\end{centering}
\caption{Bias, standard deviation, and RMSE of PML estimators under censoring probability $\frac{1}{1+(\tau\exp(\theta_{0}^{T}x))^\beta}$ and $\tau=1,\beta=2$. Results are calculated based on 5000 replications of data with sample size 3000.}
\label{fig:bias-std-rmse-kappa}
\end{figure}

Our next set of simulation results in \cref{fig:bias-std-rmse-kappa} reports the bias, standard deviation, and RMSE of generalized PML estimators for $\kappa \in [-1,1]$, under censoring probability $\frac{1}{1+(\tau\exp(\theta_{0}^{T}x))^\beta}$ with $\tau=1$, $\beta=2$, and  $\alpha\in [0,1,2]$. We find that the bias decreases as $\kappa$ increases from $-1$ to 1, as the corresponding generalized PML estimators place increasing weights on larger samples thus reducing bias. Similarly, the standard deviation of estimators decreases then increases with $\kappa$, resulting in a choice of $\kappa$ with the smallest variance.  However, this variance-minimizing $\kappa$ generally differs from the one in the uncensored case. For example, when $\alpha=2$, we know that the gamma PML ($\kappa=-1)$ is the most efficient estimator in the absence of censoring. However, in the presence of asymmetric censoring, the standard deviation is minimized at around $\kappa\approx-0.6$ for both coordinates of $\theta_0$. This observation is in line with our theoretical result on the asymptotic variance, which depends on the censoring parameters. Lastly, the RMSE plots confirm our intuition and quantitative results on the bias-variance trade-off. There is an optimal $\kappa$ with the smallest RMSE, which depends on the degree of heteroskedasticity $\alpha$ and sparsity $\tau$. For example, although Poisson PML ($\kappa=0$) is widely used practice especially when heteroskedasticity is not negligible (e.g., $\alpha=1$), in the presence of moderate censoring ($\tau=1$ and $\beta=2$), NLS ($\kappa=1$) in fact has lower RMSE and should be preferred.

We summarize the preceding three sets of simulation results in the phase transition plot in \cref{fig:phase-transition}. 
We see that, as heteroskedasticity decreases or sparsity increases, the optimal $\kappa$ decreases, and vice versa. Therefore, we advocate for a more systematic modeling of non-negative dependent variables based on assessing the relative importance of the two aspects of data. In order to determine $\alpha$ and $\tau$, one could consider performing tests along the lines of \citet{park1966estimation,mullahy1986specification} or \citet{cameron1990regression}. In this paper, we employ the simpler cross-validation procedure proposed in \cref{subsec:cross-validation} to select $\kappa$ directly, and apply it to a collection of datasets in finance to demonstrate its usefulness.

\begin{figure}[t]
\begin{centering}
\includegraphics[scale=0.5]{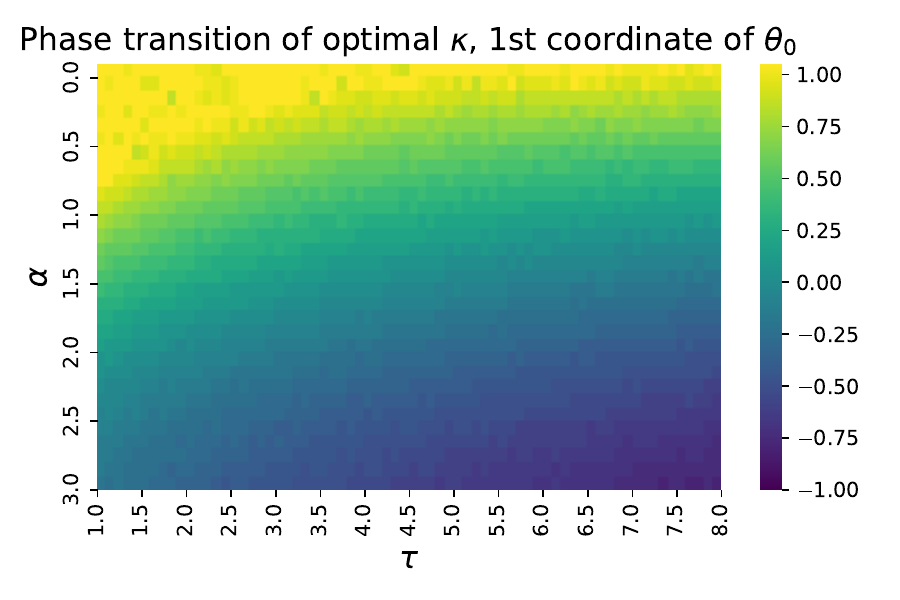}
\includegraphics[scale=0.5]{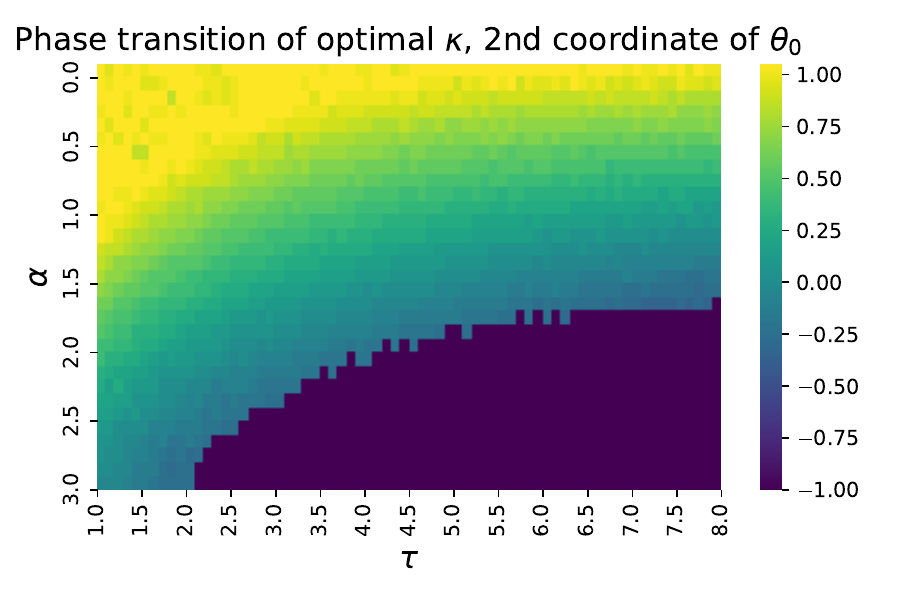}
\par\end{centering}
\caption{Phase transition of the optimal $\kappa$ for different levels of heteroskedasticity ($\alpha$) and sparsity ($\tau$). 
Results based on simulated data with true parameter $\theta_0=(1,1)$.}
\label{fig:phase-transition}
\end{figure}

\section{Empirical Validation on Datasets from Finance Applications}
\label{sec:empirical}
In this section, we illustrate the appeal of the proposed generalized PML approach using four distinct datasets: corporate defaults, credit ratings, corporate patents, and real estate investments. The first two applications focus on modeling default rates and credit ratings across industries defined by Moody's 35 Industry Categories\footnote{\url{https://www.moodys.com/sites/products/ProductAttachments/MCO_35\%20Industry\%20Categories.pdf}}.  In the third application, we analyze the number of corporate patents granted, using the dataset from \citet{cohn2022count}, who replicate the study by \citet{hirshleifer2012overconfident} on the impact of overconfident CEOs on corporate innovation. In the fourth application, we model the number of real estate permits using the dataset provided by \citet{bekkerman2023effect}. We demonstrate that, 
using the simple cross validation procedure based on out-of-sample MSE, one can determine the generalized PML estimator that best fits the data. Moreover, this optimal choice is highly data-dependent. In some cases, it is close to the Poisson regression estimator recommended in the literature; for other applications, it is very different from Poisson and significantly outperforms it. We provide some intuitive evidence on the different levels of heteroskedasticity in the datasets. 
Our results in this section therefore demonstrate that, instead of always using a particular method like the Poisson to model count-like data, one should employ a more principled approach to model such data that takes into account both heteroskedasticity and sparsity.

\subsection{Data Description}
We first describe the datasets used in this section. We construct our datasets on corporate default and credit rating  leveraging Moody's Default and Recovery Database (DRD), which provides a detailed history of firm defaults and ratings across a broad collection of industries. To construct the default outcome variable, we aggregate the total number of default events across all firms in each of the Moody's 35 Industry Categories for each month between January 2019 and December 2023. 
Each observation consists of the monthly default number for a particular industry category and particular month-year, which is a sparse non-negative variable. Default events as defined in the Moody's database include bankruptcy, missed payments, distressed exchange, and others. \cref{fig:default-hist} provides a histogram of the number of monthly default events. The majority of industry-year/month pairs have zero defaults, as expected. A large outlier of 35 events is recorded for February 2023 in the banking industry. These events correspond to the series of bank failures and bankruptcies in early 2023. The ratings outcome is constructed in a similar way: we compute the monthly number of firms in each of Moody's Industry Categories that are rated investment grade, i.e., Baa3 or higher. We then normalize the counts by their standard deviations in each month. Distribution of this non-negative outcome variable is also described in the histogram in \cref{fig:default-hist}.

\begin{figure}[t]
\begin{centering}
\includegraphics[scale=0.5]{figures/hist_defaults.pdf}
\includegraphics[scale=0.5]{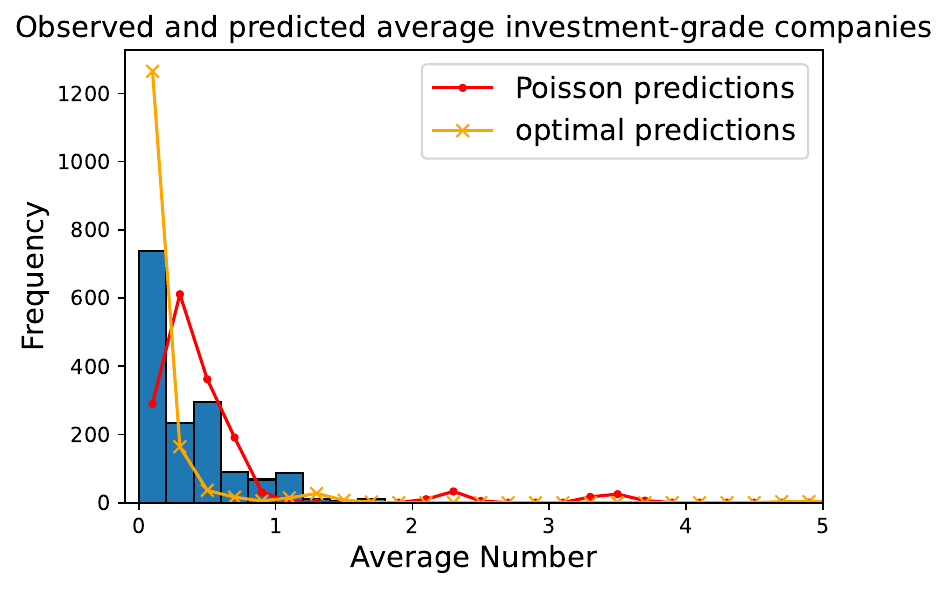}\\
\includegraphics[scale=0.5]{figures/hist_patents.pdf}
\includegraphics[scale=0.5]{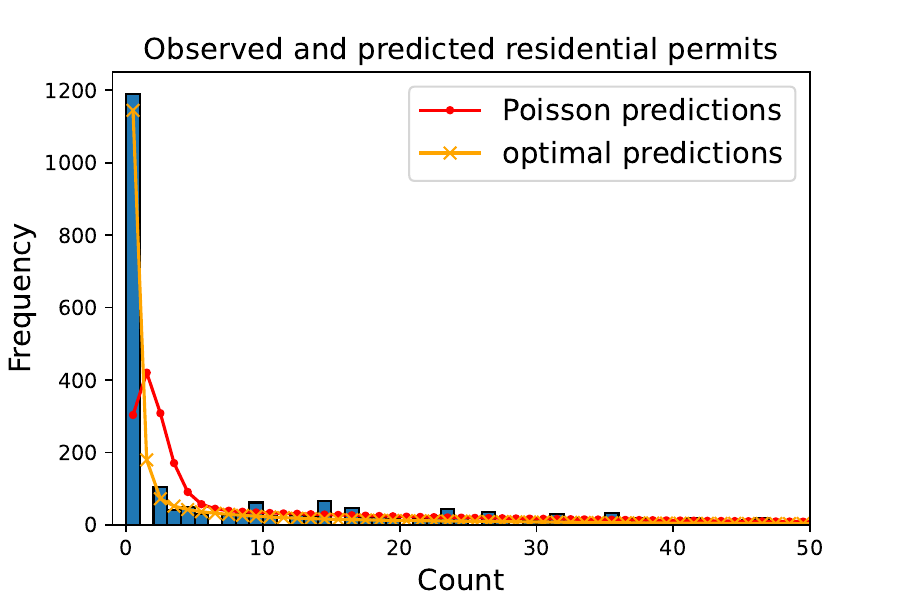}
\par\end{centering}
\caption{Histograms of outcome variables in finance datasets, overlayed with predictions of Poisson regression estimator and the optimal generalized PML estimator, selected using cross-validation.
Upper left: Monthly defaults in Moody's 35 Industrial Categories between January 2020 and December 2023. A single large outlier of 35 events is recorded for February 2023 in the banking industry. Upper right: Monthly number of investment-grade companies. Lower left: Replication dataset of \citet{hirshleifer2012overconfident} constructed by \citet{cohn2022count} of corporate patents, truncated to samples with $y\leq25$. Lower right: Residential permits based on data used in \citet{bekkerman2023effect}, truncated to samples with $y\leq50$.}
\label{fig:default-hist}
\end{figure}

To construct the predictors for default rates, we follow the setup in \citet{duffie2007multi} and consider the following macroeconomic and firm-specific factors:
\begin{itemize}

\item Average firm's distance to default, constructed based on \citet{vassalou2004default};

    \item Monthly industry average of trailing 1-year stock return;

    \item The 3-month Treasury bill rate;

    \item Trailing 1-year return on the S\&P 500 index.
\end{itemize}
The T-bill data is obtained from the Federal Reserve Bank of St. Louis, the trailing 1-year return on the S\&P 500 index is obtained from CRSP. Monthly industry average of trailing 1-year stock return is obtained from CRSP/Compustat. To calculate firm's distance to default, we combine several data sources available on WRDS and apply the iterative algorithm proposed by \citet{vassalou2004default}. In order to match stock returns data from CRSP with the Moody's 35 Industry Categories, we use the 3-digit NAICS code in the Moody's dataset and the first 3 digits of the 6-digit NAICS code in the CRSP data. Similarly, for the credit rating outcome, we use the following predictors constructed from CRSP/Compustat data that have been found to have predictive power \citep{hirk2022corporate}:
\begin{itemize}

\item Total debt to total asset ratio;

    \item Total debt to earnings before interest, taxes, depreciation, and
amortization ratio;

    \item Retained earnings divided by total assets;

    \item Long-term debt to assets ratio;

    \item Earnings before interest and taxes divided by total assets
\end{itemize}

The dataset on corporate patents is provided by \citet{cohn2022count} who replicate the work of \citet{hirshleifer2012overconfident}. It consists of the number of patents granted to a firm as the dependent variable, as well as firm characteristics such as stock returns, sales, institutional holdings, and CEO overconfidence. The dataset on real estate investments is constructed by \citet{bekkerman2023effect} using data from Airbnb, and consists of the number of residential permits in each zipcode area in a particular month as the outcome variable, along with a vector of time-varying zipcode characteristics from the American Community Survey, such as median household income, population size, percentage of people aged between 25 to 60 with a bachelor's degree, and employment rate.

\subsection{Optimal Generalized PML Estimators for Finance Datasets}
\begin{figure}[t]
\begin{centering}
\includegraphics[scale=0.5]{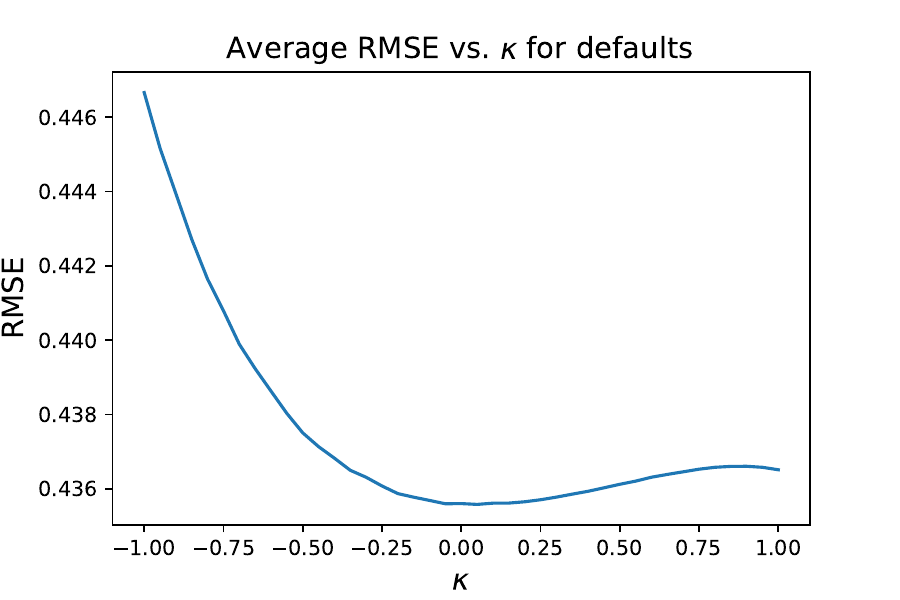}
\includegraphics[scale=0.5]{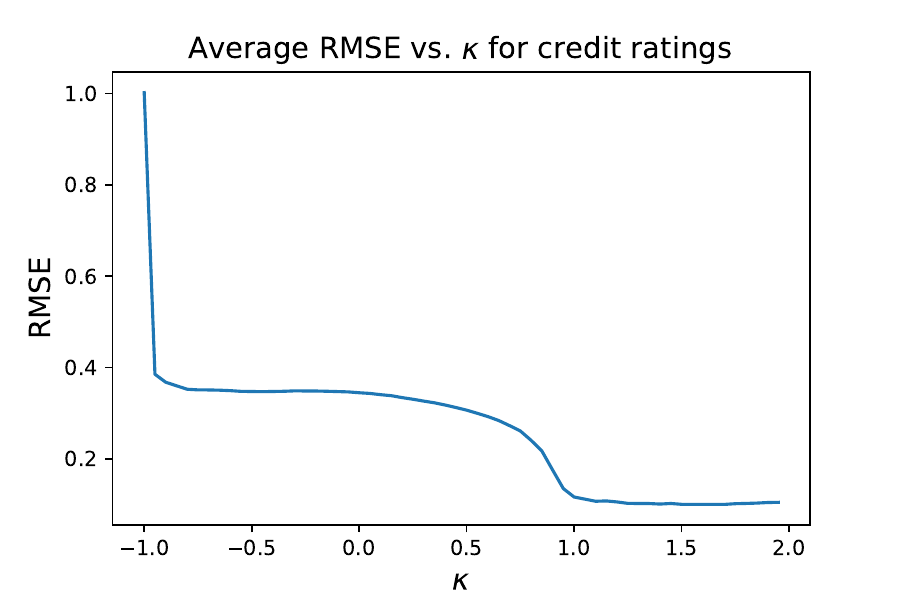}
\includegraphics[scale=0.5]{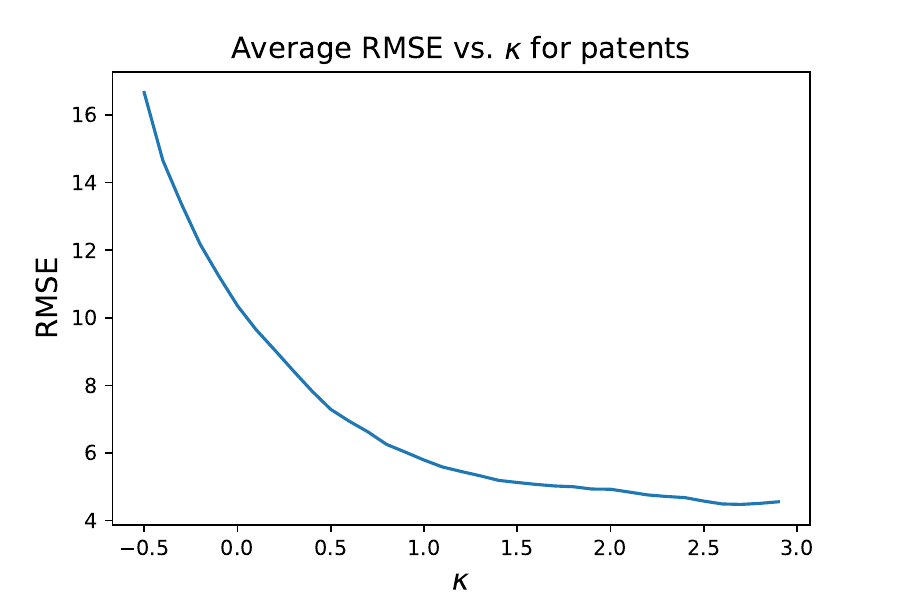}
\includegraphics[scale=0.5]{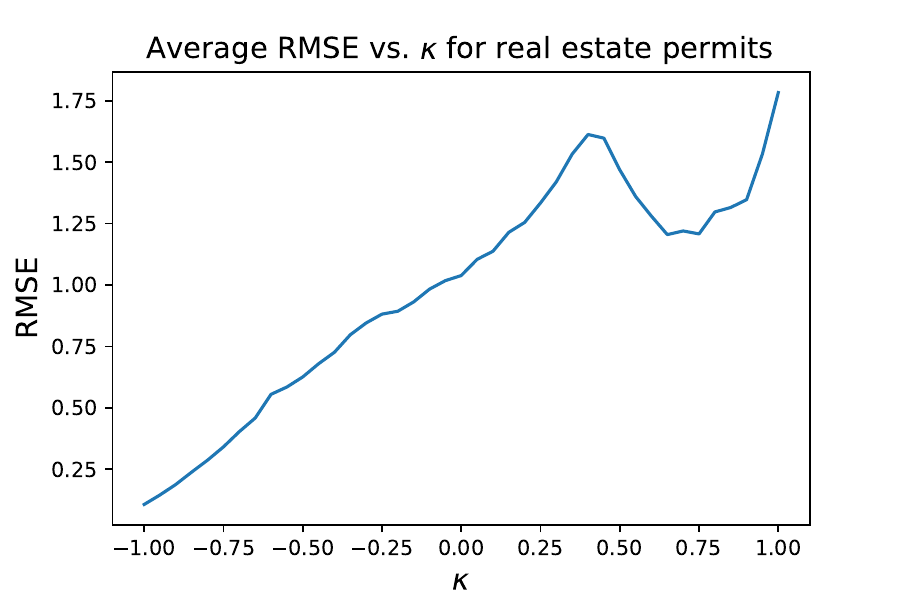}
\par\end{centering}
\caption{Average out of sample RMSEs of generalized PML estimators on corporate defaults (top left), credit ratings data (top right), corporate patents (bottom left), and residential permits (bottom right).}
\label{fig:RMSE}
\end{figure}

In this section, we investigate the performance of generalized PML estimators. To this end, we split the samples 80\%--20\% randomly into train and test sets. 
We estimate generalized PML estimators with varying $\kappa$ on the train set, and report the test set RMSEs averaged over 2000 random splittings of the data.  
When $\kappa\neq -1,0,1$, the corresponding generalized PML estimator is obtained by solving \eqref{eq:PML-optimization-problem} using the \texttt{scipy.optimize} module in Python. The average out of sample RMSEs for default rates are summarized in \cref{fig:RMSE}. We observe that for the corporate default data, the RMSE is optimized at $\kappa = 0.10$, with an average value of 0.435. Therefore, for the purpose of predicting corporate defaults in this setting, the conventional choice of Poisson PML estimator with $\kappa=0$ is close to the optimal choice. However, this is not always the case for other datasets and applications, as demonstrated by the results in \cref{fig:RMSE} for the three other datasets. For credit ratings and corporate patents, estimators with $\kappa>0$ are recommended with out-of-sample RMSEs about 50\% smaller than Poisson PML; for residential permits, the gamma PML is recommended with an out-of-sample RMSE that is about 90\% smaller than Poisson PML.

 It is therefore crucial to determine the optimal generalized PML estimator in a principled manner, e.g., using the cross validation procedure we propose. Moreover, for the default data, the RMSEs of generalized PML estimators with $\kappa \geq 0$ tend to outperform those with $\kappa < 0$. This provides evidence that the excess sparsity feature of the default data dominates efficiency considerations, leading us to favor estimators with smaller weights on samples with small conditional means. On the other hand, the results for credit rating prediction, summarized in \cref{fig:RMSE}, suggest that estimators with $\kappa > 0$ have a significantly better fit of the credit rating data compared to the Poisson or other generalized PML estimators. This demonstrates that, although estimators with $\kappa > 0$ like the NLS are generally considered inefficient when modeling non-negative outcome data, they could still be preferable in some cases. Lastly, in \cref{tab:estimates} we provide the point estimates and bootstrapped standard errors of some of the relevant variables from each dataset. We find that the optimal generalized PML estimate could differ significantly from the Poisson estimate, leading to qualitatively different economic conclusions. These results highlight the importance of appropriately choosing the regression model for non-negative sparse data instead of using a single model by default.
 
\begin{table}
\begin{centering}
\begin{tabular}{c|c|c|c|c}
 & Defaults & Ratings & Patents & Permits\tabularnewline
\hline 
Variable & S\&P500 return & debt to asset ratio& CEO overconfidence & log \#Airbnb \tabularnewline
\hline 
\hline 
\multirow{2}{*}{Poisson regression} & $-0.307^{\ast\ast\ast}$ & $-0.646$ & $0.617^{\ast\ast\ast}$ & 0.047\tabularnewline
 & (0.069) & (0.443) & (0.200) & (0.221)\tabularnewline
\hline 
\multirow{2}{*}{generalized PML} & $-0.302^{\ast\ast\ast}$ & $-2.565^{\ast\ast\ast}$ & 0.125 & $0.431^{\ast\ast}$\tabularnewline
 & (0.069) & (0.635) & (0.224) & (0.134)\tabularnewline
\end{tabular}
\par\end{centering}
\caption{Point estimates and bootstrapped standard errors of relevant standardized covariates from each of the four datasets: trailing 1-year return of the S\&P 500 index; long term debt to total asset ratio (\texttt{dltt\_at} in the data); indicator of whether a CEO is over-confident (\texttt{overconf\_ceo}), and log number of airbnbs in a residential area (\texttt{lnairbnb}).
In three datasets, the optimal generalized PML estimator yields qualitatively different economic conclusions than the Poisson model.
}
\label{tab:estimates}
\end{table}

Our results in this section demonstrate that the out-of-sample RMSE can be used as a general cross validation metric to determine the optimal $\kappa$ when modeling data with non-negative outcomes. For example, when estimating a gravity model on trade data, we can generate a plot similar to that in \cref{fig:RMSE}, and use it to assess the relative performance of the family estimators indexed by $\kappa \in \mathbb{R}$. In settings with strong heteroskedasticity, an estimator with $\kappa \leq 0$ should generally be preferred, while in settings with excess sparsity, an estimator with $\kappa \geq 0$ will be selected. We conclude by noting that other procedures to select $\kappa$ may be more appropriate depending on the modeling task. For example, we may determine the optimal $\kappa$ based on the bias-variance trade-off analysis in this paper. 
To determine the degree $\alpha$ of heteroskedasticity, we can employ hypothesis tests discussed in \citet{manning2001estimating}. To determine
the degree $\beta$ of asymmetric censoring, we can consider estimating a zero-inflated or hurdle model. We leave a systematic study of this proposal to future works.


\section{Conclusion}
\label{sec:conclusion}
In this paper, we show that existing approaches to modeling non-negative count and count-like data in finance can perform poorly on many datasets that jointly exhibit heteroskedasticity and sparsity.

To address this limitation, we develop a unified framework for quantifying the bias-variance trade-off arising from these two features. Within this framework, we introduce a novel class of estimators---generalized pseudo maximum likelihood (PML) estimators---that encompasses classical methods such as Poisson PML and nonlinear least squares (NLS). To guide empirical implementation, we propose a simple cross-validation procedure that allows  to select the estimator best suited to the structure of their specific dataset.

We validate the effectiveness and practical appeal of our approach through extensive numerical studies. Notably, while non-linear least squares (NLS) and other generalized PML estimators with $\kappa > 0$ are often viewed as inefficient under heteroskedasticity and over-dispersion---and therefore typically avoided in practice---we find that, in empirically relevant settings with pronounced sparsity, these estimators can outperform traditional alternatives such as Poisson regression, provided the associated optimization problems are properly addressed.
Our work therefore calls for a more systematic assessment of count-like data in finance and provides a principled approach to the modeling of such data in practice.


\bibliographystyle{elsarticle-harv} 
\bibliography{references}

\newpage 
\begin{appendix}
\section{Pseudo Maximum Likelihood}

\subsection{Gravity Models}
A popular instance of data with non-negative outcome variables is economic trade data between countries or geographical locations. We start our discussions based on this instance, although they apply more generally to non-negative data arising in other domains. In \citet{silva2006log}, the authors focus on a class of multiplicative trade models, often referred to as ``gravity models'' \citep{anderson1979theoretical,anderson2003gravity}:
\begin{align}
\label{eq:gravity-equation}
Y_{ij}=\alpha_{0}G_{i}^{\alpha_{1}}G_{j}^{\alpha_{2}}D_{ij}^{\alpha_{3}}\eta\equiv\exp(\theta^{T}X_{ij})\cdot\eta,
\end{align}
where $Y_{ij}$ is the trade volume between countries $i$ and $j$, $G_{i}$ is the GDP of country $i$,
and $D_{ij}$ is the distance between the two countries. $\eta$ is a multiplicative error that accounts for the randomness in data that results in the imperfect fit $Y_{ij}\neq \alpha_{0}G_{i}^{\alpha_{1}}G_{j}^{\alpha_{2}}D_{ij}^{\alpha_{3}}$, with $\mathbb{E}(\eta\mid X)=1$. Assuming for now that $Y_{ij}>0$, a common approach to estimating this model is to run an OLS regression of the following log-linearized model
\begin{align*}
\log(Y) & =\theta^{T}X+\log\eta.
\end{align*}
The validity of this approach relies on the standard requirement that $\log \eta$ is uncorrelated with $X$. However, $\mathbb{E}(\log\eta\mid X)\neq \log\mathbb{E}(\eta\mid X)=0$ in general due to Jensen's inequality. The quantity $\mathbb{E}(\log\eta\mid X)$ depends on $X$ because the non-linear logarithm transformation applied to $\eta$ induces dependence of $\mathbb{E}(\log\eta\mid X)$ on higher moments of the conditional distribution of $\eta$ on $X$, not just its mean, unless $\eta$ is of a particular form. Therefore, the log-linear OLS regression approach can
be severely biased, which has been confirmed in experiments and theoretical analysis in \cite{silva2006log,cohn2022count}.

In the presence of heteroskedasticity, a popular alternative to estimating the non-linear model $Y = f(X,\theta) + \varepsilon$ 
%
is {pseudo maximum likelihood} (PML). This approach does not assume knowledge of the distribution of $\varepsilon$, only the correct specification of $\mathbb{E}(Y\mid X)=f(X,\theta)$. Under this condition, PML then minimizes a ``pseudo log-likelihood'', which is constructed assuming a specified probability distribution for $\varepsilon$. For example, the non-linear least squares estimator, which solves the problem
\begin{align}
\label{eq:NLS}
\min_\theta\sum_{i}(y_{i}-f(x_{i},\theta))^{2},
\end{align}
can be understood as a PML estimator that assumes a normal distribution of $\varepsilon$. While the resulting estimator achieves consistent estimation of $\theta$ under regularity conditions \citep{gourieroux1984pseudoa}, it is only efficient when $\varepsilon$ is \emph{homoskedastic}, i.e., $\text{Var}(\varepsilon\mid X)$ does not depend on $X$. If in addition we had knowledge of the functional form of the conditional variance $\text{Var}(\varepsilon\mid X)$, we can obtain the efficient weighted NLS estimator. However, such information is often not available in practice. In this case, the Poisson pseudo maximum likelihood (PPML) estimator is usually preferable to the NLS. It constructs the pseudo log-likelihood by assuming that $Y\mid X$ has a Poisson distribution with mean parameter $f(X,\theta)$. Even when $Y$ is not necessarily integer-valued, consistency can be achieved given the  correct specification of $f(X,\theta)$. Recall that Poisson distributions have equal mean and variance, so that the Poisson PML approach can be understood as assuming $\text{Var}(\varepsilon\mid X)=\mathbb{E}(\varepsilon\mid X)$. In fact, it is efficient among all PML estimators when the condition
$\text{Var}(\varepsilon\mid X)=\mathbb{E}(\varepsilon\mid X)$ is satisfied. Given our discussion on the intrinsic heteroskedasticity of non-negative data, we see that Poisson PML is preferable to NLS partly because it implicitly accounts for the heteroskedasticity with a conditional variance that \emph{increases} with the conditional mean. Indeed, despite its ability to achieve consistency under correct specification of $f(X,\theta)$, the NLS is known to perform very poorly in practice for non-negative outcome data compared to PPML under strong heteroskedasticity \citep{silva2006log}, due to its efficiency loss in this setting.

An additional advantage of PPML is that its pseudo log-likelihood function
\begin{align}
\label{eq:PPML}
\sum_{i}y_{i}\cdot\log f(x_i,\theta)-f(x_i,\theta)
\end{align}
is concave in $\theta$ whenever $f(X,\theta)$ is log-concave. Examples include the exponential function $\exp(\theta^{T}X)$ and the Gaussian. In contrast, the NLS objective \eqref{eq:NLS} is usually non-convex for non-linear $f(X,\theta)$. As a result, it may have multiple local maxima, and standard non-convex optimization packages may only be able to find local maximizers instead of the true NLS estimator. Besides heteroskedasticity, this distinction between NLS and Poisson PML in terms of the difficulty of solving the corresponding optimization problems likely also contributes to the poor performance of NLS estimators observed in practice, although it is less studied than the effects of heteroskedasticity.


Moreover, an important feature of trade data is that there could be many zero observations of $Y_{ij}$,
i.e., no trade between pairs of countries or small trade volumes that
are often \emph{rounded} to zero. This can also be interpreted as $D_{ij}=+\infty$.
In the presence of many zeros, \citet{silva2011further} demonstrate with extensive simulations that PPML is more robust compared to log-linear models, as well as other pseudo maximum likelihood methods.

Given the specification $Y=\exp(\theta^{T}X)+\epsilon$ with $Y\geq0$
and $\mathbb{E}(\epsilon\mid X)=0$, here is the intuition why the
model is inherently heteroskedastic. When $\exp(\theta^{T}X)$ is
small, in order to guarantee $Y\geq0$, the dispersion of $\epsilon$
around 0 must be very small, whereas for larger $\exp(\theta^{T}X)$,
the variance of $\epsilon$ is allowed to be larger. In other words,
the heteroskedasticity of the model is inherent in the specification
$Y=\exp(\theta^{T}X)+\epsilon$. 

Assuming $\text{Var}(\epsilon\mid X)\propto\mathbb{E}(\epsilon\mid X)=\exp(\theta^{T}X)$,
we can consider the efficient \emph{weighted} nonlinear least squares estimator
\begin{align*}
\min_\theta\sum_{i}\exp(-\theta^{T}x_{i})(y_{i}-\exp(\theta^{T}x_{i}))^{2}
\end{align*}
 which has first order equations $\sum_{i}\exp(\theta^{T}x_{i})x_{i}-x_{i}\exp(-\theta^{T}x_{i})y_{i}^{2}=0$,
which is asymptotically equivalent to the first order conditions of
the PPML objective 
\begin{align*}
\sum_{i}(y_{i}-\exp(\theta^{T}x_{i}))x_{i} & =0.
\end{align*}

In practice, when we do not know the variance, the unweighted NLS
is used. \citet{silva2006log} observe that the estimator resulting from maximizing the PPML
objective is much better compared to the NLS estimator, even when
$\text{Var}(\epsilon\mid X)\propto\mathbb{E}(\epsilon\mid X)$ fails, e.g.,
$\text{Var}(\epsilon\mid X)\propto\mathbb{E}(\epsilon\mid X)^{2}$. To summarize, \citet{silva2006log} observe that
PPML has superior 
 performance across a variety of
patterns of heteroskedasticity as well as in settings with many zero responses, whereas
alternative estimators such as NLS and log-linear OLS only perform
well when the specification of variance is correct. 

\subsection{Negative Binomial PML}

We also remark on the behavior of the negative binomial (NB) compared to the class of generalized PML estimators in \eqref{eq:generalized-PML}. Recall the estimating equations of NB PML:
\begin{align*}
\sum_{i}(y_{i}-\exp(\theta^{T}x_{i}))(1+\exp(\theta^{T}x_{i}))^{-1}x_{i} & =0.
\end{align*}
Compared to the Gamma PML, whose estimating equations are
\begin{align*}
\sum_{i}(y_{i}-\exp(\theta^{T}x_{i}))(\exp(\theta^{T}x_{i}))^{-1}x_{i} & =0,
\end{align*}
NB PML is similarly robust to high degrees of heteroskedasticity and
large outliers. However, since it does not place overly large weights on samples with small $\exp(\theta^{T}x_{i})$, it has the additional advantage of being more robust to excess sparsity than the Gamma PML. In fact, for samples near zero, it places similar weights 
as the Poisson PML. Therefore, NB PML should be preferable when both heteroskedasticity and excess sparsity are prominent. We can similarly perform a trade-off analysis for generalized PML estimators of the form 
\begin{align*}
\sum_{i}(y_{i}-\exp(\theta^{T}x_{i}))(1+\exp(\theta^{T}x_{i}))^{-\kappa}x_{i}=0,
\end{align*}
but in this paper we focus on the estimators defined by \eqref{eq:generalized-PML}, as the NB PML estimator cannot accommodate separable group fixed effects \citep{cohn2022count}, and is therefore not applicable in some applications in finance and economics.
\subsection{Heteroskedasticity vs. Sparsity}
We note that the ability to distinguish between heteroskedasticity and excess sparsity is also important, as they could lead to similar dispersion behaviors in the data, but require \emph{opposite} types of generalized PML estimators. For example, 
\citet{greene1994accounting} notes that excess sparsity can ``masquerade'' as over-dispersion. \citet{zorn1998analytic} also discusses the effect of zero-inflation and hurdle models on the dispersion of data. The presence of excess zeros could result in over-dispersion of the data when it is generated by zero-inflation, while hurdle process can result in either over- or under-dispersion. This property is important since in the presence of zero-inflation type censoring, we should use generalized PMLs with larger $\kappa$, while if we mistakenly attribute excess sparsity to over-dispersion, we would be misled to use generalized PMLs with smaller $\kappa$.

\section{Computational Considerations for Generalized PML Estimators}
\label{subsec:optimization}

In this section, we discuss the computational considerations of generalized PML estimators for $\kappa\in[0,1]$. We formulate optimization problems that give rise to the generalized PML estimating equations as first order conditions, which provides a straightforward method for computing the generalized PML estimators given data. Recall the family of generalized PML estimators that we consider
\begin{align}
\label{eq:estimating-equation}
\sum_{i}\left\{ y_{i}-\exp(\theta^{T}x_{i})\right\} \cdot\exp(\kappa\theta^{T}x_{i})\cdot x_{i} & =0.
\end{align}
 An important practical question is how one can solve it for arbitrary $\kappa$. One option is to apply root-finding algorithms such as Newton's algorithm. 
 Here we discuss another approach, which finds local optimizers of optimization problems whose first order conditions correspond to the estimating equations \eqref{eq:estimating-equation}.

For $\kappa=-1$ and 0, the objectives are those of the
familiar gamma and Poisson PML:
\begin{align*}
\max_{\theta}\sum_{i}-y_{i}\exp(-\theta^{T}x_{i})-\theta^{T}x_{i}\\
\max_{\theta}\sum_{i}-\exp(\theta^{T}x_{i})+y_{i}\theta^{T}x_{i},
\end{align*}
which are both concave in $\theta$. More generally, we can verify that when $\kappa \in (-1,0) \cup (0,1)$,  \eqref{eq:estimating-equation} is the first order condition of the problem
\begin{align}
\label{eq:PML-optimization-problem}
\max_{\theta}\sum_i\frac{1}{\kappa}y_{i}\exp(\kappa\theta^{T}x_{i})-\frac{1}{\kappa+1}\exp((\kappa+1)\theta^{T}x_{i}).
\end{align}
For $\kappa\in(-1,0)$, \eqref{eq:PML-optimization-problem} is concave in $\theta$, which is easy to solve. On the other hand, for $\kappa\in(0,1)$, \eqref{eq:PML-optimization-problem} is no longer concave in $\theta$, since the term $\frac{1}{\kappa}y_{i}\exp(\kappa\theta^{T}x_{i})$
is \emph{convex} in $\theta$ and the term $-\frac{1}{\kappa+1}\exp((\kappa+1)\theta^{T}x_{i})$ is concave in $\theta$. In this case, we can still use non-convex optimization software to find local maxima, such as \texttt{scipy}'s \texttt{optimize} module
or the Gauss-Newton method \citep{gratton2007approximate}. In our simulations, we find that the \texttt{scipy.optimize} module 
has comparable accuracy as \texttt{CVX} when $\kappa<0$, and behaves well for $\kappa>0$. Note also that the global maximizer of \eqref{eq:PML-optimization-problem} is bounded, so that its first order condition is necessary for optimality. Moreover, as long as $x_i$ can take both positive and negative values, the solution to the first order condition is also bounded. To see this point more clearly, consider the first order condition
\[\sum_{i}\left\{ y_{i}-\exp(\theta^{T}x_{i})\right\} \cdot\exp(\kappa\theta^{T}x_{i})\cdot x_{i}=0.\]
When $-1<\kappa<0$, we can verify that with $\theta \rightarrow \pm \infty$ the left hand side cannot attain 0, regardless of the signs of $x_i$. However, when $\kappa > 0$, if $x_i\geq0$ for all $i$ or $x_i\leq0$ for all $i$, then the left hand side can attain a value of zero when $\theta \rightarrow -\infty$ or $\infty$. However, when some $x_i$ is positive and some $x_i$ is negative, the left hand side cannot attain 0 at infinity. We can therefore better understand the generalized PML estimators as solutions to \eqref{eq:PML-optimization-problem} when $\kappa>0$. Note also that when $\kappa=1$,
both optimization problems
\begin{align*}
\max_{\theta}y_{i}\exp(\theta^{T}x_{i})-\frac{1}{2}\exp(2\theta^{T}x_{i})\Longleftrightarrow
\min_{\theta}(y_{i}-\exp(\theta^{T}x_{i}))^{2}
\end{align*}
 yield the same NLS estimator, by expanding the square in the second
objective and removing constant terms. 
Intuitively, large values of $y_{i}$ will
make the NLS objective ``less'' concave, since the convex term $y_{i}\exp(\kappa\theta^{T}x_{i})$
will be large. On the other hand, many zero $y_i$'s will result in the concave term  $-\frac{1}{2}\exp(2\theta^{T}x_{i})$ dominating, leading to an objective that is closer to being concave. This argument provides some intuition on why NLS could actually be preferable to the Poisson PML.

Lastly, note that compared to $\kappa=-1$ (gamma PML), it is preferable
to use $\kappa\in(-1,0)$ since the objective $\max_{\theta}\frac{1}{\kappa}y_{i}\exp(\kappa\theta^{T}x_{i})-\frac{1}{\kappa+1}\exp((\kappa+1)\theta^{T}x_{i})$
will still retain strict concavity when there is a large number of
zeros. In comparison, the gamma PML has objective 
\begin{align*}
    \max_{\theta}\sum_{i}-y_{i}\exp(-\theta^{T}x_{i})-\theta^{T}x_{i},
\end{align*}
which could become close to linear (have a ``flat'' loss function) when many $y_i$ are zero, resulting in poor solution quality of the optimization problem. This feature may partially
explain the poor performance of gamma PML in the presence of many
zeros, even under strong over-dispersion.

\section{Discussion of Bias and Variance Results}
\label{sec:bias-variance-discussion}
In this section, we provide a more detailed discussion of our results on the bias and variance of the generalized PML estimators. Recall the bias formula derived in \cref{thm:bias}:
\begin{align*}
       \hat{\theta}-\theta_{0}	=(A^{T}A)^{-1}Ab + o_p(1),
   \end{align*}
   where, with $P(\theta_0,x)$ the censoring probability in \eqref{eq:censoring-model}, $A$ and $b$ are given by \begin{align*}
       \begin{split}
           A=\int(P(\theta_{0},x)\kappa-1)\cdot xx^{T}p(x)\exp((\kappa+1)\theta_{0}^{T}x)dx,\\
b	=\int P(\theta_{0},x)\cdot x^{T}p(x)\exp((\kappa+1)\theta_{0}^{T}x)dx.
       \end{split}   \end{align*}
We can use \cref{thm:bias} to estimate the bias of generalized PML estimators under censoring. For example, if $\kappa=-1$, which corresponds to Gamma PML, we have
\begin{align*}
A= & \int(-1-P(\theta_{0},x))\cdot xx^{T}p(x)dx,\\
b & =\int P(\theta_{0},x)\cdot xp(x)dx,
\end{align*}
 which can be used to explicitly quantify the bias given a particular
$P(\theta_{0},x)$, such as $P(\theta_{0},x)=\frac{1}{\exp(\tau \theta_{0}^{T}x)+1}$, as in the zero-inflation model of \citet{lambert1992zero}. In the absence of censoring, i.e., $ P(\theta_{0},x)\equiv 0$, \cref{thm:bias} implies that generalized PML estimators are all consistent.  

Under standard regularity conditions, the limit of $Z$-estimators converges to the solution $\tilde{\theta}_{0}$ to the population limit of their estimating equations, which for \eqref{eq:generalized-PML} is given by 
\begin{align*}
\mathbb{E}(\psi(y,x,\theta))=\mathbb{E}(y-\exp(\theta^{T}x))\cdot\exp(\kappa\theta^{T}x)\cdot x & =0,
\end{align*}
In the absence of censoring, we have
\begin{align*}
    \mathbb{E}(y-\exp(\theta^{T}x))\cdot\exp(\kappa\theta^{T}x)\cdot x = \mathbb{E}_x(\exp(\theta_0^{T}x)-\exp(\theta^{T}x))\cdot\exp(\kappa\theta^{T}x)\cdot x,
\end{align*}
which evaluates to 0 when $\theta=\theta_0$, ensuring the consistency of generalized PML estimators.
However, in the presence of censoring with probability $P(\theta_{0},x)$, the conditions that characterize the limit of the generalized PML estimators become
\begin{align*}
    \mathbb{E}_{x}\left\{ (1-P(\theta_{0},x))\cdot\exp(\theta_{0}^{T}x)-\exp(\theta^{T}x)\right\} \cdot\exp(\kappa\theta^{T}x)\cdot x=0,
\end{align*}
which is no longer solved by $\theta=\theta_0$. We use $\tilde{\theta}_{0}$ to differentiate from the true
parameter $\theta_{0}$ when censoring results in bias. 

Note that the terms $A$ and $b$ do not depend on the heteroskedasticity parameter $\alpha$ and only on $\kappa$ and the censoring probability $P(\theta_0,X)$. This is expected because the censoring probabilities are defined in terms of the conditional mean $\exp(\theta_{0}^{T}X)$ only, which does not depend on $\alpha$, which controls the variance of $\varepsilon$. If we assume an alternative censoring mechanism, where the censoring probability depends on the \emph{realized} outcome $\exp(\theta_{0}^{T}X) +\varepsilon$, then the bias will depend on $\alpha$ as well.  On the other hand, as shown in \cref{thm:asymptotic-variance}, the asymptotic variance $J^{-1}IJ^{-1}$ will depend on both $\alpha$ and $\kappa$, as well as $P(\theta_0,X)$:
\begin{align*}
\begin{split}
    I & =\mathbb{E}_{x}xx^{T}\exp(2\kappa\theta^{T}x)\cdot\left[P(\theta_{0},x)\left(\exp(\theta^{T}x)\right)^{2}+(1-P(\theta_{0},x))\left(\exp(\theta_{0}^{T}x)-\exp(\theta^{T}x)\right)^{2}\exp^{}(\alpha\theta_{0}^{T}x)\right]\\
J & =\mathbb{E}_{x}\left[-\left((1-P(\theta_{0},x))\cdot\exp(\theta_{0}^{T}x)-\exp(\theta^{T}x)\right)\cdot\exp(\kappa\theta^{T}x)\cdot\kappa xx^{T}+\exp(\kappa\theta^{T}x)\exp(\theta^{T}x)xx^{T}\right],
\end{split}
\end{align*}
evaluated at $\theta=\tilde{\theta}_{0}$, which solves the population limit of the estimating equations \eqref{eq:generalized-PML}:
\begin{align*}
\mathbb{E}_{x}\left\{ (1-P(\theta_{0},x))\cdot\exp(\theta_{0}^{T}x)-\exp(\theta^{T}x)\right\} \cdot\exp(\kappa\theta^{T}x)\cdot x=0.
\end{align*}
\cref{thm:asymptotic-variance} relies on the classic result on the asymptotic variance of $Z$-estimators of the form $1/n\sum_i \psi(y_i,x_i,\theta)$, which is given by $J^{-1}IJ^{-1}$ with
\begin{align*}
I & =\mathbb{E}_{x}\mathbb{E}\left[\psi(y,x,\tilde{\theta}_{0})\psi(y,x,\tilde{\theta}_{0})^{T}\mid x\right],\\
J & =\mathbb{E}_{x}\mathbb{E}\left[-\frac{\partial\psi(y,x,\tilde{\theta}_{0})}{\partial\theta}\mid x\right].
\end{align*}
Note that we need to evaluate $I,J$ at the limit $\tilde{\theta}_{0}$ of the generalized PML estimators, which under the censoring model solves \eqref{eq:population-moments}. In practice, we can form the plug-in estimator of the variance by replacing $\theta$ with the generalized PML estimator $\hat{\theta}_0$ solving \eqref{eq:generalized-PML} and replacing the integral with the sample average in \eqref{eq:variance-component}. Note that unlike the bias, the asymptotic variance depends on all three components of our framework: $\alpha$, which governs heteroskedasticity, $P(\theta_0,x)$, which describes the censoring model, and $\kappa$, which determines the generalized PML estimator. The bias and asymptotic variance formulae can be leveraged to analyze the bias-variance trade-off under heteroskedasticity and censoring. To be concrete, here we focus on the NLS ($\kappa=1$) and Poisson PML ($\kappa=0$) estimators in the one-dimensional setting, but our analysis can be extended to the general case. We have the following formula.

\begin{cor}
\label{cor:NLS-Poisson-bias}
Assume that the parameter $\theta$ is one-dimensional. Near $\theta_{0}$, the biases of NLS and Poisson PML estimators are approximated by \zq{compute bias and variance of these estimators using the same data setting used to produce the previous table, to show how bad these estimators are. This would highlight why they are not performing well (because of either variance or bias) caused by heteroskedasticity and variance.}
\begin{align*}
\tilde \theta_{PPML}-\theta_{0} & =-\frac{\int p(X)X\cdot P(\theta_{0},X)\cdot\exp(\theta_{0}^{T}X)dX}{\int p(X)X^{2}\cdot\exp(\theta_{0}^{T}X)dX},\\
\tilde \theta_{NLS}-\theta_{0} & =-\frac{\int p(X)X\cdot P(\theta_{0},X)\cdot\exp(2\theta_{0}^{T}X)dX}{\int p(X)X^{2}\left(1+P(\theta_{0},X)\right)\exp(2\theta_{0}^{T}X)dX}.
\end{align*}
\end{cor}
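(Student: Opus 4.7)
The plan is to apply the Taylor expansion underlying \cref{thm:bias} directly in the one-dimensional case, specialized to $\kappa = 0$ and $\kappa = 1$. Recall that the limit $\tilde\theta_0$ of the generalized PML estimator solves the population estimating equation
\[
h_\kappa(\theta) := \mathbb{E}_x\bigl[\{(1-P(\theta_0, x))\exp(\theta_0 x) - \exp(\theta x)\} \cdot \exp(\kappa \theta x) \cdot x\bigr] = 0,
\]
which follows by taking the expectation of \eqref{eq:generalized-PML} and using $\mathbb{E}[y \mid x] = (1-P(\theta_0, x))\exp(\theta_0 x)$ under the censoring model \eqref{eq:censoring-model}.

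First, I would evaluate $h_\kappa$ at $\theta_0$: the bracket collapses to $-P(\theta_0, x)\exp(\theta_0 x)$, so
\[
h_\kappa(\theta_0) = -\mathbb{E}_x[P(\theta_0, x) \cdot x \cdot \exp((\kappa+1)\theta_0 x)].
\]
Second, I would differentiate $h_\kappa(\theta)$ via the product rule, picking up one term from $-\exp(\theta x)$ that contributes $-x\exp((\kappa+1)\theta x)$ and one term from $\exp(\kappa\theta x)$ that contributes $\kappa x$ times the original bracket. Evaluating at $\theta_0$ and again using the bracket collapse gives
\[
h_\kappa'(\theta_0) = -\mathbb{E}_x\bigl[x^2 \, (1 + \kappa P(\theta_0, x))\exp((\kappa+1)\theta_0 x)\bigr].
\]

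Third, the first-order expansion $0 = h_\kappa(\tilde\theta_0) \approx h_\kappa(\theta_0) + h_\kappa'(\theta_0)(\tilde\theta_0 - \theta_0)$ yields
\[
\tilde\theta_0 - \theta_0 \approx -\frac{\mathbb{E}_x[P(\theta_0, x) \cdot x \cdot \exp((\kappa+1)\theta_0 x)]}{\mathbb{E}_x[x^2 \, (1 + \kappa P(\theta_0, x))\exp((\kappa+1)\theta_0 x)]}.
\]
Substituting $\kappa = 0$ makes the denominator factor equal to $1$ and recovers the Poisson PML expression, while $\kappa = 1$ yields the NLS expression with denominator factor $1 + P(\theta_0, x)$ and the extra $\exp(\theta_0 x)$ in both integrands. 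Rewriting the expectations as integrals against the density $p(x)$ gives precisely the two formulas in the corollary.

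The only real obstacle is justifying the linearization: the $O((\tilde\theta_0 - \theta_0)^2)$ remainder must be negligible compared to the leading term. This is ensured by the assumptions of \cref{thm:bias} (compact $\Theta$, interior solution, dominated derivatives), and is tightest precisely in the regime where $P(\theta_0, x)\exp(\theta_0 x)$ is small, which is also the regime most relevant for characterizing small-to-moderate biases in practice. Everything else is routine scalar differentiation.
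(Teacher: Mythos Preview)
Your proposal is correct and follows essentially the same route as the paper: linearize the population moment equation \eqref{eq:population-moments} around $\theta_0$, read off the first-order bias, and then specialize to $\kappa=0$ and $\kappa=1$. Computing $h_\kappa(\theta_0)$ and $h_\kappa'(\theta_0)$ directly, as you do, is marginally cleaner than the paper's substitution of separate Taylor expansions for $\exp(\theta x)$ and $\exp(\kappa\theta x)$, but the two derivations are mathematically equivalent.
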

Comparing the bias of Poisson PML and NLS, we can see that the NLS is more robust to censoring toward zero. Consider for example the censoring model where 
\begin{align*}
P(\theta_{0},X) = & \begin{cases}
1 & \exp(\theta_{0}^{T}X)\leq c\\
0 & \text{otherwise.}
\end{cases}
\end{align*}
This threshold censoring model rounds down samples with bounded conditional mean to zero, and can be viewed as the limit $\beta \rightarrow \infty$ of the censoring models in \eqref{eq:heteroskedasticity-general}, with $c=1$. When $c\leq1$, we have 
\begin{align*}
    \int p(X)X\cdot P(\theta_{0},X)\cdot\exp(\theta_{0}^{T}X)dX &= \int_{\exp(\theta_{0}^{T}X)\leq c} p(X)X\cdot\exp(\theta_{0}^{T}X)dX\\
    \geq \int_{\exp(2\theta_{0}^{T}X)\leq c} p(X)X\cdot\exp(\theta_{0}^{T}X)dX
     &=\int p(X)X\cdot P(\theta_{0},X)\cdot\exp(2\theta_{0}^{T}X)dX,
\end{align*}
assuming that $X\in \mathbb{R}_+$, while 
\begin{align*}
    \int p(X)X^{2}\cdot\exp(\theta_{0}^{T}X)dX & \leq \int p(X)X^{2}\exp(2\theta_{0}^{T}X)dX \leq \int p(X)X^{2}\left(1+P(\theta_{0},X)\right)\exp(2\theta_{0}^{T}X)dX
\end{align*}
as long as the tail of $p(X)$ decays fast enough. \cref{cor:NLS-Poisson-bias} then implies that $|\tilde \theta_{NLS}-\theta_{0}|\leq|\tilde \theta_{PPML}-\theta_{0}|$.

In the presence of censoring, the asymptotic variance of PPML and NLS can also be computed following \cref{thm:asymptotic-variance}. For PPML, we can modify $I,J$ as follows:
\begin{align*}
I_{PPML} =\mathbb{E}_{x}\mathbb{E}\left[(y-\exp(\theta_{0}^{T}x))^{2}\cdot xx^{T}\mid x\right]&=\mathbb{E}_{x}xx^{T}\left[\exp(\alpha\theta_{0}^{T}x)+(\exp(2\theta^{T}x)-\exp(\alpha\theta_{0}^{T}x))P(\theta_{0},x)\right],
\end{align*}
where we note that the first term is the usual variance term for the DGP without censoring. Interestingly, if $\alpha=2$, the variance of the censoring model coincides with that of the uncensored model, regardless of the actual censoring mechanism. 
When $\alpha=0$, the variance of NLS is smaller than that of PPML. As $\alpha$ increases from 0 to 2, the variance of NLS starts to increase, while the variance of PPML initially decreases and then increases, but always staying below that of the NLS. These behaviors lead to a particular value of $\overline{\alpha}$ where the MSE of NLS starts to dominate that of PPML. Therefore, below this level $\overline{\alpha}$ of heteroskedasticity, NLS should in fact be preferred than PPML, even though it may have larger variance.

\section{Generalized PML Estimators for General Conditional Mean Models}
Instead of assuming $f(X,\theta)=\exp(\theta^T X)$, we can consider the more general specification $f(X,\theta)=f(\theta^T X)$.  Then the general first order conditions of the NLS, Poisson, Gamma, and negative
binomial PMLs, are given by \citet{gourieroux1984pseudob}:
\begin{align*}
\text{NLS:}\quad \sum_{i}(f'(\theta^{T}x_{i})-y_{i}f'(\theta^{T}x_{i})/f(\theta^{T}x_{i}))\cdot f(\theta^{T}x_{i})\cdot x_{i} & =0\\
\text{Poisson:}\quad \sum_{i}(f'(\theta^{T}x_{i})-y_{i}f'(\theta^{T}x_{i})/f(\theta^{T}x_{i}))\cdot x_{i} & =0\\
\text{Gamma:}\quad \sum_{i}(f'(\theta^{T}x_{i})-y_{i}f'(\theta^{T}x_{i})/f(\theta^{T}x_{i}))\cdot\frac{1}{f(\theta^{T}x_{i})}\cdot x_{i} & =0\\
\text{Negative Bionomial:}\quad \sum_{i}(f'(\theta^{T}x_{i})-y_{i}f'(\theta^{T}x_{i})/f(\theta^{T}x_{i}))\cdot\frac{1}{1+f(\theta^{T}x_{i})}\cdot x_{i} & =0.
\end{align*}
More generally, we can consider a continuous family of estimators indexed by $\kappa \in \mathbb{R}$ that solve the estimating
equations 
\begin{align}
\sum_{i}(f'(\theta^{T}x_{i})-y_{i}f'(\theta^{T}x_{i})/f(\theta^{T}x_{i}))\cdot f^{\kappa}(\theta^{T}x_{i})\cdot x_{i} & =0,
\end{align}
as well as 
\begin{align}
\sum_{i}(f'(\theta^{T}x_{i})-y_{i}f'(\theta^{T}x_{i})/f(\theta^{T}x_{i}))\cdot (1+f(\theta^{T}x_{i}))^{\kappa}\cdot x_{i} & =0.
\end{align}
When $\kappa\geq0$, the resulting estimator is robust
to measurement errors that bias towards 0, while when $\kappa\leq0$,
the resulting estimator is robust to measurement errors that bias
towards $\infty$. As a result, the Poisson estimating equation $\sum_{i}(f'(\theta^{T}x_{i})-y_{i}f'(\theta^{T}x_{i})/f(\theta^{T}x_{i}))\cdot x_{i}=0$
corresponding to $\kappa=0$ is a reasonable compromise that is robust
to both types of measurement errors. However, depending on how the conditional variance varies with the conditional
mean. If the variance increases, then NLS maybe be robust to bias
towards 0, but not to heteroskedasticity. However, based on observations
if 
\begin{align*}
V(Y\mid X)=\mathbb{E}[Y_{i}\mid X_{i}],
\end{align*}
 NLS may still be ok, but if 
\begin{align*}
V(Y\mid X)=\mathbb{E}^{2}[Y_{i}\mid X_{i}],
\end{align*}
NLS performs very poorly. So it could be possible that when $V$ does
not grow too fast, the benefit of NLS that are robust to small outliers
can outweigh the disadvantage of not being robust to heteroskedastic
errors. For example, in the simulation results in Table 1 of \citet{silva2006log}, the NLS performs the best even when $V(Y\mid X)=\mathbb{E}[Y_{i}\mid X_{i}]$
and there are rounding errors to 0. But its performance significantly
deteriorates when $V(Y\mid X)=\mathbb{E}^{2}[Y_{i}\mid X_{i}]$. 

If we consider a graph where the x axis is how variance increases
with conditional mean, and the y axis is how much measurement bias
towards 0 is, we will observe a phase transition where in some regions
each of the estimators will be preferable. 

Can also consider the Negative Binomial 
\begin{align*}
\sum_{i}(y_{i}-\exp(\theta^{T}x_{i}))(1+\exp(\theta^{T}x_{i}))^{-1}x_{i} & =0.
\end{align*}
 Compared with Gamma, 
\begin{align*}
\sum_{i}(y_{i}-\exp(\theta^{T}x_{i}))(\exp(\theta^{T}x_{i}))^{-1}x_{i} & =0,
\end{align*}
 NB is similarly robust to high degrees of heteroskedasticity and
large outliers. However it has the additional advantage of being robust
to small outliers, just like the Poisson PML. Similarly,
if we do not want to downweight large samples as much, an estimator
satisfying 
\begin{align*}
\sum_{i}(y_{i}-\exp(\theta^{T}x_{i}))(1+\exp(\theta^{T}x_{i}))^{-\alpha}x_{i}=0
\end{align*}
 for some $\alpha\in(0,1)$ will be closer in behavior to Poisson
for larger samples. 

Therefore, we may understand the pseudo ML estimators in terms of how they weight observations using the value of the conditional mean $f(\theta^TX)$. There are at least three sources of mis-specification/measurement errors
that can impact the performance of PML estimators: 

1. Heteroskedasticity, most commonly variance increasing with mean.
Estimators that \textbf{downweight }large samples generally have better
performance;

2. Measurement bias towards 0, i.e., small responses being recorded
as 0. Estimators that \textbf{upweight} large samples generally have
better performance;

3. Measurement bias towards $\infty$, i.e., large responses being
distorted more. For example, if a measurement error of the form $(1+\alpha)Y_{i}$
results in large samples being distorted more. Estimators that \textbf{downweight
}large samples generally have better performance. This type of measurement bias could be behind data where there are some large outliers.

We therefore see that depending on the severity of each issue, it
is reasonable to expect different PML estimators can perform differently.
We therefore propose to jointly analyze the trade-offs of these issues.
The key components of our analysis will be the degree of heteroskedasticity
and measurement bias towards 0 (and $\infty$). Given these parameters
of the problem, there will be a particular PML that is optimal. 

\section{Additional Numerical Results}

\subsection{Accuracy of Bias Formula}
In this section, we investigate the accuracy of the bias formula derived in \cref{sec:trade-off}. Since they are based on the population estimating equations 
\begin{align*}
\mathbb{E}_{x}\left\{ (1-P(\theta_{0},x))\cdot\exp(\theta_{0}^{T}x)-\exp(\theta^{T}x)\right\} \cdot\exp(\kappa\theta^{T}x)\cdot x & =0,
\end{align*}
we first verify through simulations that they accurately characterize the expectation of the PML estimators when data is subject to censoring. For each replication, we compute
PML estimators $\hat{\theta}_{\kappa}$ based on 1000 i.i.d. samples,
and average over 1000 replications to simulate the expectation $\mathbb{E}\hat{\theta}_{\kappa}$
of the PML estimators. Then we plug $\mathbb{E}\hat{\theta}_{\kappa}$
into $\theta$ in the above equation, and evaluate the equations by
replacing $\mathbb{E}_{x}$ with average over 10000 samples of $x$. We find that  the moment equations evaluate to similar values near zero compared to the reference case without censoring.

We report in \cref{fig:check-moment} the values of each coordinate of the left hand side for $\kappa=1$ (NLS), $\kappa=0$ (Poisson), and $\kappa=-0.95$, for DGPs with varying values of $\beta$, with $\tau=1$ and $\alpha=1$. We choose $\kappa=-0.95$ instead of $\kappa=-1$ (gamma PML) because in the presence of asymmetric censoring, optimization solvers may fail to find an accurate solution of the gamma PML problem $\max_\theta\sum_{i}-\theta^{T}x_{i}-y_{i}\exp(-\theta^{T}x_{i})$, which becomes flat when many $y_i$'s are zero. We still use gamma PML to refer to the estimator with $\kappa=-0.95$ in the plots for convenience.

\begin{figure}[ht]
\begin{centering}
\includegraphics[scale=0.4]{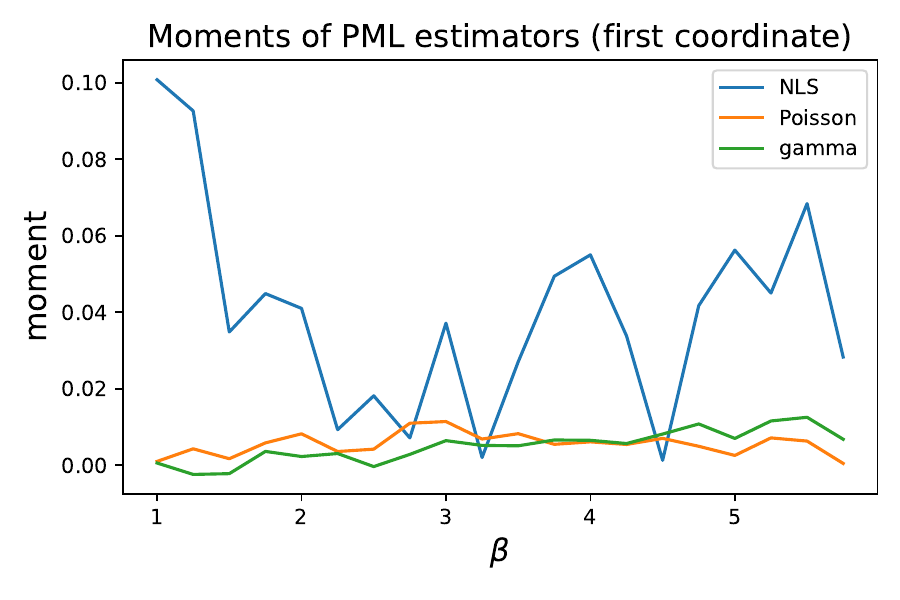} \includegraphics[scale=0.4]{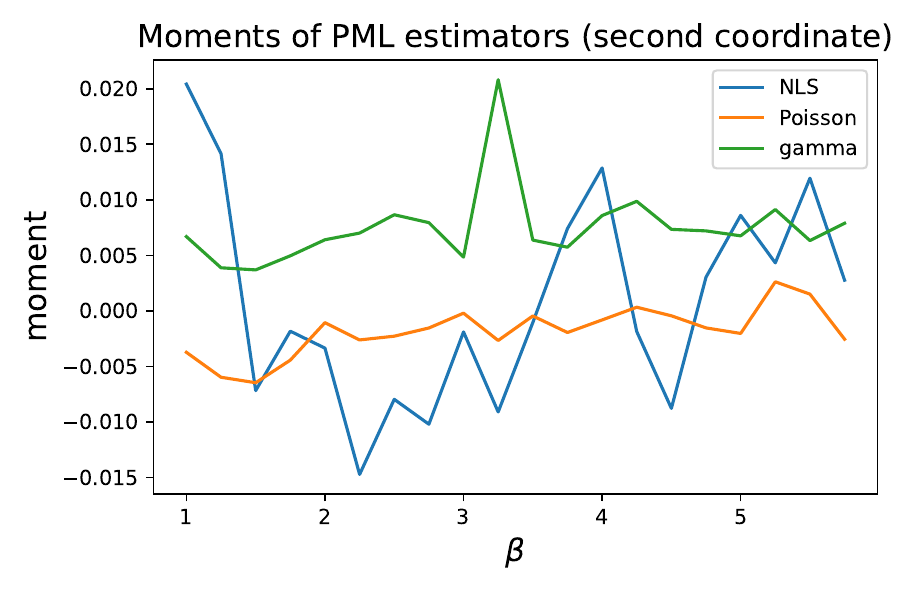}
\par\end{centering}
\caption{Values of $\mathbb{E}_{x}\left\{ (1-P(\theta_{0},x))\cdot\exp(\theta_{0}^{T}x)-\exp(\theta^{T}x)\right\} \cdot\exp(\kappa\theta^{T}x)\cdot x$ for three PML estimators. DGP has $P(X_i,\theta_0) = {1}/{(1+\exp(\beta\theta_{0}^{T}x))}$ and $\alpha=1$. As a reference, when there is no censoring, the population estimating equations have numerical values equal to $(0.010, 0.004)$ for NLS, $(0.0008,0.0006)$ for Poisson, and $(-0.022, 0.026)$ for gamma. \zq{omit plots}}
\label{fig:check-moment}
\end{figure}
In \cref{fig:check-moment}, we find that the moments are close to 0 and comparable to the case when there is no censoring, which confirms the validity of the population moment conditions.

\begin{table}[ht]
\begin{centering}
\begin{tabular}{c|c|c|c}
estimator & censoring & moment 1 & moment 2\tabularnewline
\hline 
\hline 
NLS & no  & 0.010 & 0.004\tabularnewline
\hline 
Poisson & no  & 0.0008 & 0.0006\tabularnewline
\hline 
gamma & no  & -0.022 & 0.026\tabularnewline
\hline 
NLS & yes $(\beta=0.25)$ & 1.111 & 0.298\tabularnewline
\hline 
Poisson & yes $(\beta=0.25)$ & 0.022 & 0.004\tabularnewline
\hline 
gamma & yes $(\beta=0.25)$ & -0.004 & 0.0004\tabularnewline
\hline 
NLS & yes $(\beta=1)$ & 0.461 & 0.142\tabularnewline
\hline 
Poisson & yes $(\beta=1)$ & 0.005 & 0.002\tabularnewline
\hline 
gamma & yes $(\beta=1)$ & -0.022 & 0.014\tabularnewline
\hline 
NLS & yes $(\beta=5)$ & 0.049 & 0.023\tabularnewline
\hline 
Poisson & yes $(\beta=5)$ & -0.003 & 0.0007\tabularnewline
\hline 
gamma & yes $(\beta=5)$ & -0.007 & 0.002\tabularnewline
\hline 
NLS & yes $(\beta=9)$ & -0.030 & 0.006\tabularnewline
\hline 
Poisson & yes $(\beta=9)$ & -0.006 & -0.001\tabularnewline
\hline 
gamma & yes $(\beta=9)$ & -0.006 & 0.001\tabularnewline
\end{tabular}
\par\end{centering}
\caption{Values of $\mathbb{E}_{x}\left\{ (1-P(\theta_{0},x))\cdot\exp(\theta_{0}^{T}x)-\exp(\theta^{T}x)\right\} \cdot\exp(\kappa\theta^{T}x)\cdot x$
for three PML estimators. DGP has $\alpha=2$ and $\beta=1$. Note
that when $\beta\rightarrow\infty$, the censoring mechanism becomes
a threshold model where observations with $\exp(\theta_{0}^{T}x)<1$
are set to 0 and $\exp(\theta_{0}^{T}x)$ otherwise.}
\label{tab:check-moment}
\end{table}

\begin{figure}[ht]
\begin{centering}
\includegraphics[scale=0.35]{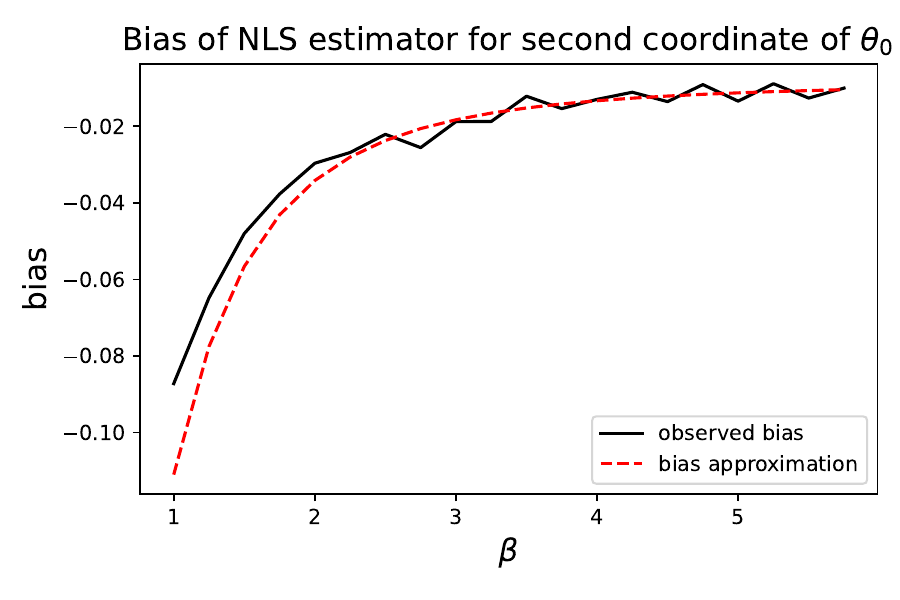}\includegraphics[scale=0.35]{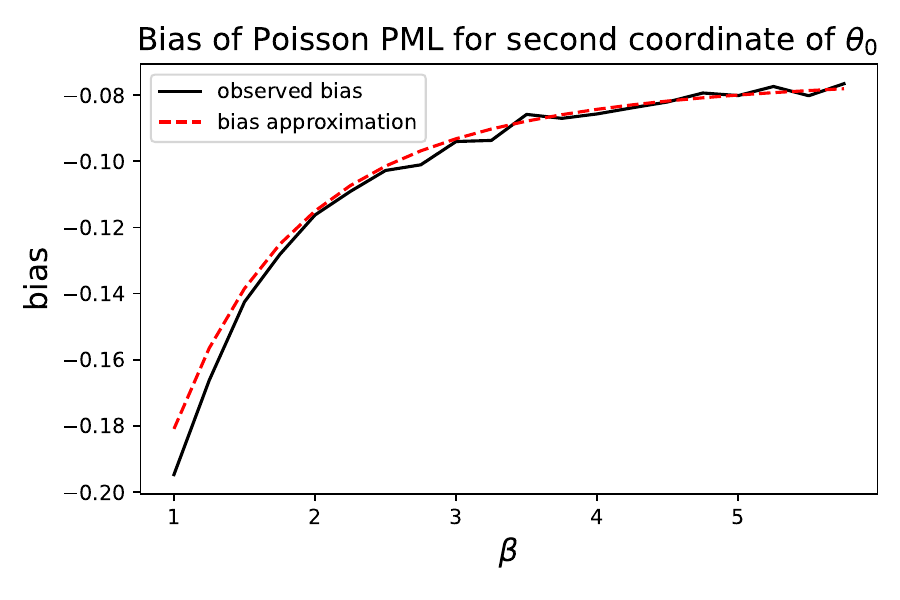}
\includegraphics[scale=0.35]{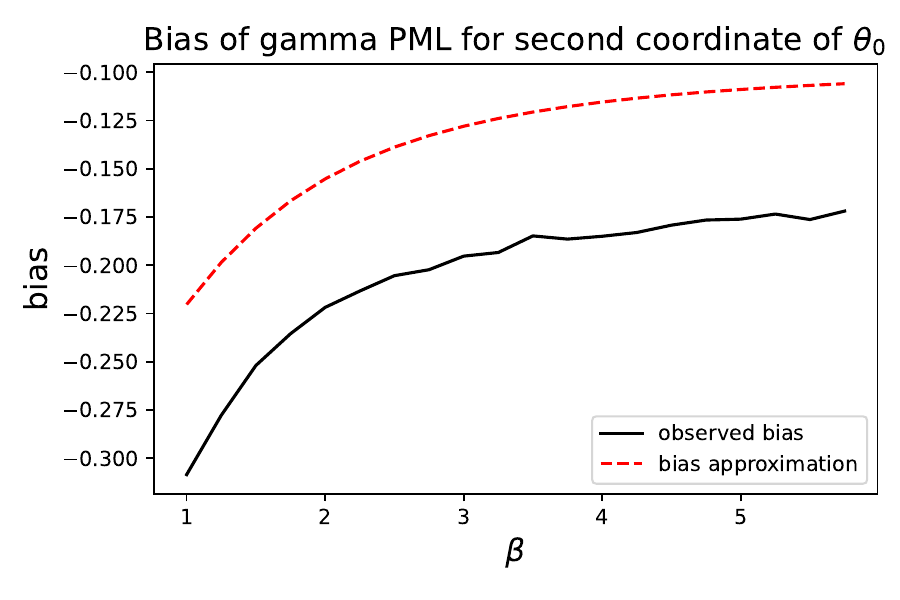}
\par\end{centering}
\caption{Comparison of simulated and analytic bias with censoring probability 
$\frac{1}{1+\exp(\beta\theta_{0}^{T}x)}$.}
\label{fig:bias-formula-check}
\end{figure}
Next, we investigate the accuracy of our approximation formula \eqref{eq:bias-formula} for
bias in \cref{fig:bias-formula-check}. We find that \eqref{eq:bias-formula} approximates the bias of NLS and Poisson PML well, particularly for $\beta>1$, but is less accurate for the gamma PML, which remains to be understood further.

\subsection{Assessing the Degree of Heteroskedasticity}

 \begin{figure}[t]
\begin{centering}
\includegraphics[scale=0.4]{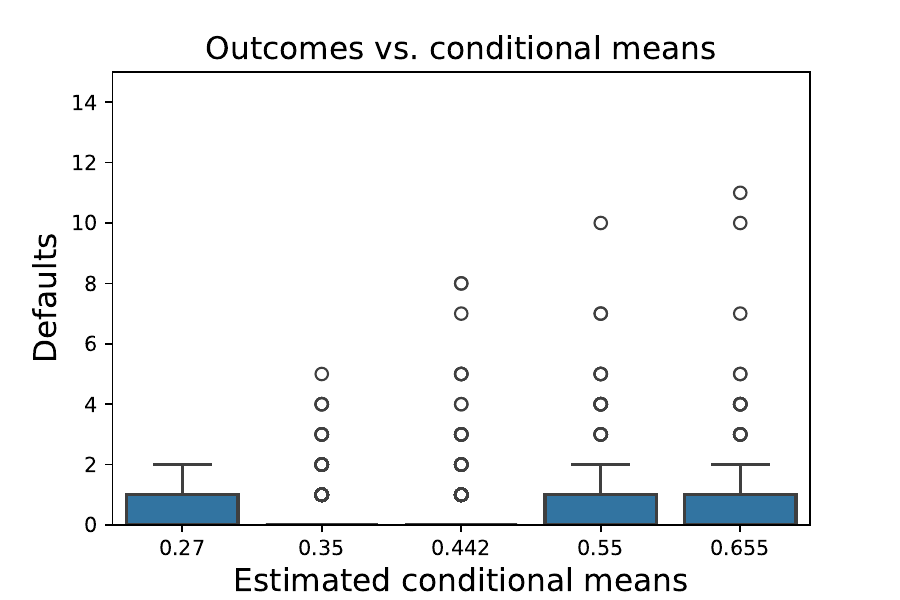}
\includegraphics[scale=0.4]{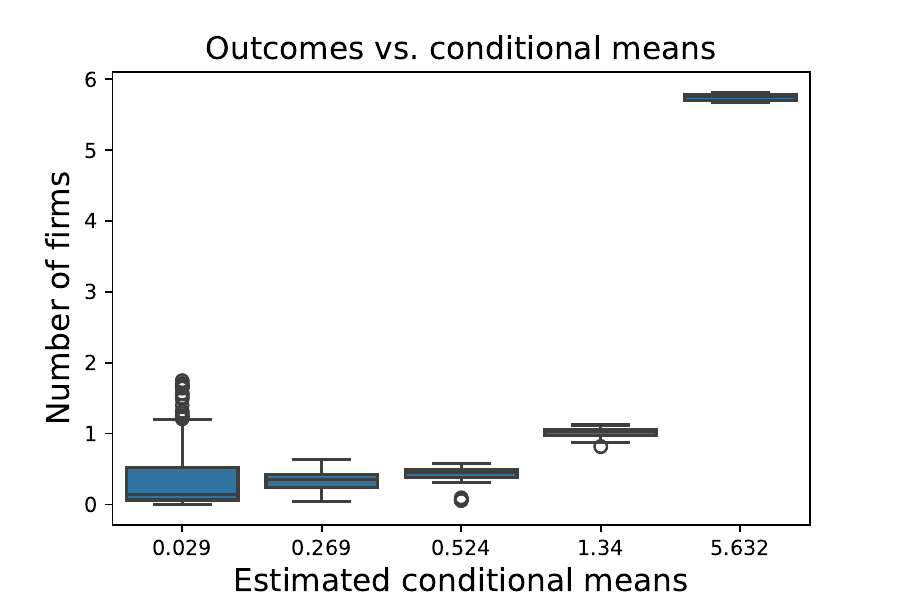}\\
\includegraphics[scale=0.4]{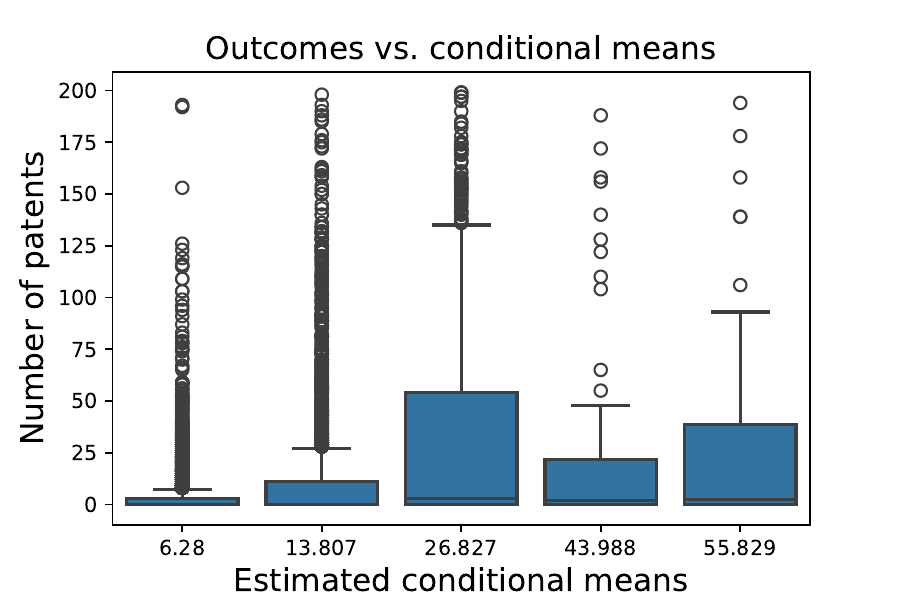}
\includegraphics[scale=0.4]{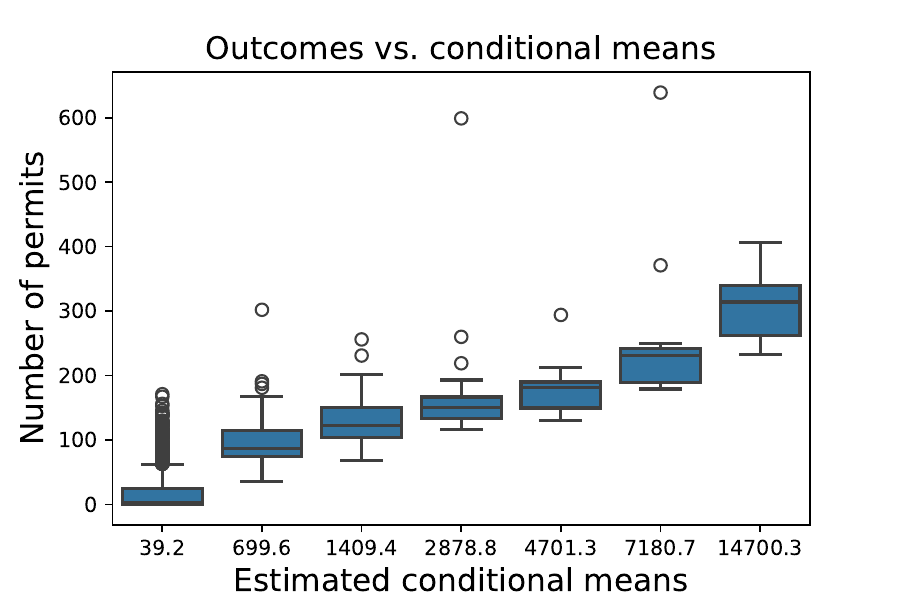}
\par\end{centering}
\caption{Box plots of outcomes vs. estimated conditional means for the default, credit rating, patent, and real estate data. For ratings and patents data, heteroskedasticity is relatively mild, which justifies the use of NLS ($\kappa=1$). On the other hand, heteroskedasticity is significant for real estate permits data, and moderate for defaults data, which explains the optimal choices of $\kappa=-1$ and $\kappa=0.1$, respectively.}
\label{fig:heteroskedasticity}
\end{figure}
 To assess the degree of heteroskedasticity, in \cref{fig:heteroskedasticity} we provide box plots of the outcome variable $y_i$ as a function of the conditional mean $\exp(\hat\theta^T x_i)$ predicted by the estimated models. We see that although the defaults data exhibits slightly higher sparsity, as illustrated by the histograms in \cref{fig:default-hist}, its degree of heteroskedasticity is considerably higher than that of the ratings data, which justifies the superior RMSE of estimators with $\kappa>0$ over that of the Poisson PML on the ratings data.

 
\section{Proofs}
\subsection{Proof of \cref{thm:bias}}
\begin{proof}
    First note that without censoring,
\begin{align*}
\mathbb{E}(y-\exp(\theta^{T}x))\cdot\exp(\kappa\theta^{T}x)\cdot x & =\mathbb{E}(\exp(\theta_{0}^{T}x+\epsilon)-\exp(\theta^{T}x))\cdot\exp(\kappa\theta^{T}x)\cdot x\\
 & =\mathbb{E}(\exp(\theta_{0}^{T}x)-\exp(\theta^{T}x))\cdot\exp(\kappa\theta^{T}x)\cdot x
\end{align*}
using $\mathbb{E}[\exp(\epsilon)\mid x]=1$. Therefore when $\theta=\theta_{0}$,
the above expectation is zero. 

On the other hand, under censoring with probability $P(\theta_{0},x)$:
\begin{align*}
Y_{i}^{true} & =\exp(\theta_{0}^{T}X_{i}+\epsilon_{i})\\
Y_{i} & =\begin{cases}
Y_{i}^{true} & 1-P(\theta_{0},X_{i})\\
0 & P(\theta_{0},X_{i}),
\end{cases}
\end{align*}
 we have 
\begin{align*}
\mathbb{E}(y-\exp(\theta^{T}x))\cdot\exp(\kappa\theta^{T}x)\cdot x & =\mathbb{E}_{x}\mathbb{E}\left[(y-\exp(\theta^{T}x))\mid x\right]\cdot\exp(\kappa\theta^{T}x)\cdot x
\end{align*}
 and 
\begin{align*}
\mathbb{E}\left[(y-\exp(\theta^{T}x))\mid x\right] & =(1-P(\theta_{0},x))\cdot(\exp(\theta_{0}^{T}x)-\exp(\theta^{T}x))+P(\theta_{0},x)(-\exp(\theta^{T}x))\\
 & =(1-P(\theta_{0},x))\cdot\exp(\theta_{0}^{T}x)-\exp(\theta^{T}x)
\end{align*}
 so that the moment conditions become 
\begin{align}
\mathbb{E}_{x}\left\{ (1-P(\theta_{0},x))\cdot\exp(\theta_{0}^{T}x)-\exp(\theta^{T}x)\right\} \cdot\exp(\kappa\theta^{T}x)\cdot x & =0,
\end{align}
and we denote the solution to these moment conditions $\tilde{\theta}_{0}$. By assumptions in the theorem, we can invoke Theorem 3.1 of \citet{newey1994large} to guarantee that the solution $\hat \theta$ to \eqref{eq:generalized-PML} converges in probability to $\tilde{\theta}_0$.

Barring explicit solutions, we can use Taylor expansion to estimate
bias as follows: expanding $\exp(\theta^{T}X)-\exp(\theta_{0}^{T}X)$
around $\theta_{0}$ as 
\begin{align*}
\exp(\theta^{T}X) & \approx\exp(\theta_{0}^{T}X)+\exp(\theta_{0}^{T}X)(\theta-\theta_{0})^{T}X\\
\exp(\kappa\theta^{T}X) & \approx\exp(\kappa\theta_{0}^{T}X)+\exp(\kappa\theta_{0}^{T}X)(\theta-\theta_{0})^{T}\kappa X
\end{align*}
we have 
\begin{align*}
\mathbb{E}_{x}\left\{ (1-P(\theta_{0},x))\cdot\exp(\theta_{0}^{T}x)-\exp(\theta^{T}x)\right\} \cdot\exp(\kappa\theta^{T}x)\cdot x & =\\
\int_{x}\left((1-P(\theta_{0},x))\cdot\exp(\theta_{0}^{T}x)-\exp(\theta^{T}x)\right)\cdot\exp(\kappa\theta^{T}x)\cdot xp(x)dx & \approx\\
\int_{x}\left((1-P(\theta_{0},x))\cdot\exp(\theta_{0}^{T}x)-\exp(\theta_{0}^{T}x)-\exp(\theta_{0}^{T}x)\cdot(\theta-\theta_{0})^{T}x\right) & \cdot\\
\left(\exp(\kappa\theta_{0}^{T}x)+\exp(\kappa\theta_{0}^{T}x)(\theta-\theta_{0})^{T}\kappa x\right)\cdot xp(x)dx & =\\
\int\left((1-P(\theta_{0},x))-1-(\theta-\theta_{0})^{T}x\right)\cdot\left(1+(\theta-\theta_{0})^{T}\kappa x\right)\cdot xp(x)\exp((\kappa+1)\theta_{0}^{T}x) & \approx\\
\int\left(-P(\theta_{0},x)-(\theta-\theta_{0})^{T}x+P(\theta_{0},x)\cdot(\theta-\theta_{0})^{T}\kappa x\right)\cdot xp(x)\exp((\kappa+1)\theta_{0}^{T}x)
\end{align*}
 which implies 
\begin{align*}
\int(P(\theta_{0},x)\kappa-1)(\theta-\theta_{0})^{T}x\cdot xp(x)\exp((\kappa+1)\theta_{0}^{T}x) & =\int P(\theta_{0},x)\cdot xp(x)\exp((\kappa+1)\theta_{0}^{T}x)
\end{align*}
 or 
\begin{align*}
(\theta-\theta_{0})^{T}\cdot\int(P(\theta_{0},x)\kappa-1)xx^{T}p(x)\exp((\kappa+1)\theta_{0}^{T}x) & =\int P(\theta_{0},x)\cdot x^{T}p(x)\exp((\kappa+1)\theta_{0}^{T}x)
\end{align*}
 so that 
\begin{align*}
\tilde \theta_0-\theta_{0} & =(A^{T}A)^{-1}Ab
\end{align*}
 where 
\begin{align*}
A= & \int(P(\theta_{0},x)\kappa-1)xx^{T}p(x)\exp((\kappa+1)\theta_{0}^{T}x)\\
b & =\int P(\theta_{0},x)\cdot xp(x)\exp((\kappa+1)\theta_{0}^{T}x).
\end{align*}
\end{proof}

\subsection{Proof of \cref{thm:asymptotic-variance}}
\begin{proof}
Recall that the generalized PML estimators $\hat \theta$ (omitting dependence on $\kappa$) defined in \eqref{eq:generalized-PML} converge to the limit $\tilde{\theta}_{0}$, which solves 
\begin{align*}
\mathbb{E}_{x}\left\{ (1-P(\theta_{0},x))\cdot\exp(\theta_{0}^{T}x)-\exp(\theta^{T}x)\right\} \cdot\exp(\kappa\theta^{T}x)\cdot x & =0.
\end{align*}
 Under the assumptions in \cref{thm:asymptotic-variance}, the asymptotic variance of the $Z$-estimators that solve 
 \begin{align*}
     \frac{1}{n}\sum_i \psi(y_i,x_i,\theta)=0
 \end{align*}
is given by $J^{-1}IJ^{-1}$, where
\begin{align*}
I & =\mathbb{E}_{x}\mathbb{E}\left[\psi(y,x,\tilde{\theta}_{0})\psi(y,x,\tilde{\theta}_{0})^{T}\mid x\right]\\
J & =\mathbb{E}_{x}\mathbb{E}\left[-\frac{\partial\psi(y,x,\tilde{\theta}_{0})}{\partial\theta}\mid x\right].
\end{align*}

We have, using $V[\exp(\varepsilon_{i})\mid X_{i}]=\exp^{\alpha}(\theta_{0}^{T}X_{i})=\exp(\alpha\theta_{0}^{T}X_{i})$,
\begin{align}
I & =\mathbb{E}_{x}\mathbb{E}\left[\left((y-\exp(\theta^{T}x))\cdot\exp(\kappa\theta^{T}x)\cdot x\right)\left((y-\exp(\theta^{T}x))\cdot\exp(\kappa\theta^{T}x)\cdot x\right)^{T}\mid x\right]\nonumber \\
 & =\mathbb{E}_{x}xx^{T}\exp(2\kappa\theta^{T}x)\cdot\mathbb{E}\left[\left(y-\exp(\theta^{T}x)\right)^{2}\mid x\right]\nonumber \\
 & =\mathbb{E}_{x}xx^{T}\exp(2\kappa\theta^{T}x)\cdot\mathbb{E}\left[P(\theta_{0},x)\left(\exp(\theta^{T}x)\right)^{2}+(1-P(\theta_{0},x))\left(\exp(\theta_{0}^{T}x+\varepsilon)-\exp(\theta^{T}x)\right)^{2}\mid x\right]\nonumber \\
 & =\mathbb{E}_{x}xx^{T}\exp(2\kappa\theta^{T}x)\cdot\mathbb{E}\left[P(\theta_{0},x)(\exp(\theta^{T}x))^{2}+(1-P(\theta_{0},x))((\exp(\theta_{0}^{T}x)-\exp(\theta^{T}x))^{2}+\exp(\alpha\theta_{0}^{T}x))\mid x\right],
\end{align}
 and 
\begin{align*}
J & =\mathbb{E}_{x}\mathbb{E}\left[-(y-\exp(\theta^{T}x))\cdot\exp(\kappa\theta^{T}x)\cdot\kappa xx^{T}+\exp(\kappa\theta^{T}x)\exp(\theta^{T}x)xx^{T}\mid x\right]\\
 & =\mathbb{E}_{x}\mathbb{E}\left[-\left((1-P(\theta_{0},x))\cdot\exp(\theta_{0}^{T}x)-\exp(\theta^{T}x)\right)\cdot\exp(\kappa\theta^{T}x)\cdot\kappa xx^{T}+\exp(\kappa\theta^{T}x)\exp(\theta^{T}x)xx^{T}\mid x\right].
\end{align*}
Combining these calculations, we have 
\begin{align*}
    \sqrt{n}(\hat{\theta}-\tilde{\theta}_{0}) \rightarrow_d \mathcal{N}(0,J^{-1}IJ^{-1}),
\end{align*}
where
\begin{align*}
I & =\mathbb{E}_{x}xx^{T}\exp(2\kappa\theta^{T}x)\cdot\mathbb{E}\left[P(\theta_{0},x)(\exp(\theta^{T}x))^{2}+(1-P(\theta_{0},x))((\exp(\theta_{0}^{T}x)-\exp(\theta^{T}x))^{2}+\exp(\alpha\theta_{0}^{T}x))\right]\\
J & =\mathbb{E}_{x}\left[-\left((1-P(\theta_{0},x))\cdot\exp(\theta_{0}^{T}x)-\exp(\theta^{T}x)\right)\cdot\exp(\kappa\theta^{T}x)\cdot\kappa xx^{T}+\exp(\kappa\theta^{T}x)\exp(\theta^{T}x)xx^{T}\right],
\end{align*}
evaluated at $\theta=\tilde{\theta}_{0}$.
\end{proof}
\zq{
\subsection{Proof of \cref{thm:general-duality-PML}}
\begin{proof}
We provide the proof for the case when $\phi(y)\equiv y$. The Lagrangian
form of the constrained problem is equivalent to 
\begin{align*}
\max_{\theta}\min_{\mu_{i}}\gamma^{\ast}(\mu_{i})+\theta^{T}(\sum_{i}Y_{i}X_{i}-\sum_{i}\mu_{i}X_{i}).
\end{align*}
 Now using the fact that $\gamma$ is convex and lsc, 
\begin{align*}
\min_{\mu_{i}}\gamma^{\ast}(\mu_{i})-\mu_{i}\theta^{T}X_{i} & =-\gamma(\theta^{T}X_{i}).
\end{align*}
 It then follows that the problem is equivalent to 
\begin{align*}
\max_{\theta}\sum_{i}Y_{i}\theta^{T}X_{i}-\gamma(\theta^{T}X_{i})=\max_{\theta}g(f(\theta^{T}X_{i}))\phi(Y_{i})-\gamma(f(\theta^{T}X_{i}))
\end{align*}
 if the conditional mean model $f$ is the inverse of the canonical
link function, i.e., $f=g^{-1}$. 
\end{proof}}

The variance of various pseudo ML estimators without censoring has been characterized by \citet{gourieroux1984pseudob}. Here we leverage their result to compute the variance of PPML and NLS.
\begin{lem}
\label{lem:variance-PPML-NLS}
The asymptotic variance of PPML and NLS are given by 
\begin{align*}
V_{NLS} & =\mathbb{E}_{x}\left[\exp(2\theta_{0}^{T}x)\right]^{-1}\cdot\left(\mathbb{E}_{x}(x\exp(\theta_{0}^{T}x))(x\exp(\theta_{0}^{T}x))^{T}\cdot V\left[y\mid x\right]\right)\cdot\mathbb{E}_{x}\left[\exp(2\theta_{0}^{T}x)\right]^{-1}\\
V_{PPML} & =\mathbb{E}_{x}\left[xx^{T}\exp(\theta_{0}^{T}x)\right]^{-1}\cdot\left(\mathbb{E}_{x}xx^{T}V\left[y\mid x\right]\right)\cdot\mathbb{E}_{x}\left[xx^{T}\exp(\theta_{0}^{T}x)\right]^{-1}
\end{align*}
\end{lem}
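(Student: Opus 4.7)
The plan is to treat both PPML and NLS as $Z$-estimators defined by their first-order conditions and invoke the standard sandwich asymptotic variance formula $V = J^{-1} I J^{-1}$, where $I = \mathbb{E}[\psi(y,x,\theta_0)\psi(y,x,\theta_0)^T]$ and $J = \mathbb{E}[-\partial_\theta \psi(y,x,\theta_0)]$. Since the lemma concerns the no-censoring case, we have $P(\theta_0,x)\equiv 0$, and by \cref{thm:bias} the estimators are consistent for $\theta_0$, so all quantities are evaluated at the true parameter. The result is in fact a direct specialization of \cref{thm:asymptotic-variance}: setting $P\equiv 0$ (so $\tilde\theta_0=\theta_0$) and $\kappa = 0$ (resp. $\kappa = 1$) gives PPML (resp. NLS), with the heteroskedasticity term $\exp(\alpha \theta_0^T x)$ replaced by a generic conditional variance $V(y \mid x)$.

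For PPML, the score is $\psi(y,x,\theta) = (y - \exp(\theta^T x)) x$. Forming $\psi\psi^T$ and conditioning on $x$ gives $I = \mathbb{E}_x[xx^T V(y \mid x)]$, using $\mathbb{E}[(y - \exp(\theta_0^T x))^2 \mid x] = V(y \mid x)$ under correct specification. The Jacobian $\partial_\theta \psi = -\exp(\theta^T x)\, xx^T$ does not depend on $y$, so $J = \mathbb{E}_x[xx^T \exp(\theta_0^T x)]$. Substituting into $J^{-1} I J^{-1}$ yields the claimed $V_{PPML}$.

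For NLS, the score is $\psi(y,x,\theta) = (y - \exp(\theta^T x))\exp(\theta^T x)\, x$. Squaring produces $I = \mathbb{E}_x[xx^T \exp(2\theta_0^T x)\, V(y \mid x)]$, equivalently $\mathbb{E}_x[(x\exp(\theta_0^T x))(x\exp(\theta_0^T x))^T V(y \mid x)]$, which matches the middle factor of the statement. Differentiating the product yields $\partial_\theta \psi = \exp(\theta^T x)\bigl[y - 2\exp(\theta^T x)\bigr] xx^T$; taking conditional expectation at $\theta_0$ using $\mathbb{E}(y \mid x) = \exp(\theta_0^T x)$ kills the $y$-contribution and leaves $J = \mathbb{E}_x[xx^T \exp(2\theta_0^T x)]$. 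Assembling the sandwich then produces $V_{NLS}$.

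The main (minor) obstacle is bookkeeping rather than analysis: the outer factor in the stated $V_{NLS}$ reads $\mathbb{E}_x[\exp(2\theta_0^T x)]^{-1}$, which is a scalar inverse and not dimensionally consistent with the $d\times d$ middle factor. The correct outer factor emerging from the Jacobian computation is $\mathbb{E}_x[xx^T \exp(2\theta_0^T x)]^{-1}$, which I would insert explicitly when writing up the proof. Beyond this, there is no genuine analytical difficulty: differentiability, moment, and dominated-convergence conditions needed to pass expectation through differentiation and to invoke a uniform law of large numbers are inherited directly from the hypotheses of \cref{thm:asymptotic-variance}.
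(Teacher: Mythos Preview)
Your proposal is correct and follows essentially the same sandwich-formula approach as the paper; the only cosmetic difference is that the paper phrases things in the $M$-estimator form (differentiating the objective $\varphi$ twice for $J$ and once for $I$), whereas you work directly with the estimating function $\psi=\partial_\theta\varphi$, which is equivalent. Your observation about the missing $xx^{T}$ in the outer factor of $V_{NLS}$ is also apt---the paper's own derivation drops this factor in the displayed $J_{NLS}$ as well, so your corrected version $\mathbb{E}_x[xx^{T}\exp(2\theta_0^{T}x)]^{-1}$ is the right one to carry through.
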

\begin{proof}
The general formula for the asymptotic variance of PML estimators
is given by $J^{-1}IJ^{-1}$, where
\begin{align*}
J & =\mathbb{E}_{x}\mathbb{E}\left[-\frac{\partial^{2}\varphi(y,x,\theta_{0})}{\partial\theta\partial\theta^{T}}\mid x\right]\\
I & =\mathbb{E}_{x}\mathbb{E}\left[\frac{\partial\varphi(y,x,\theta_{0})}{\partial\theta}\frac{\partial\varphi(y,x,\theta_{0})}{\partial\theta}^{T}\mid x\right].
\end{align*}
 Recall that for NLS, 
\begin{align*}
\varphi_{NLS}(y,x,\theta_{0}) & =(\exp(\theta_{0}^{T}x)-y)^{2},
\end{align*}
 so that 
\begin{align*}
J_{NLS} & =2\mathbb{E}_{x}\mathbb{E}\left[-x^{2}\exp(2\theta_{0}^{T}x)-(\exp(\theta_{0}^{T}x)+x^{2}\exp(\theta_{0}^{T}x))(\exp(\theta_{0}^{T}x)-y)\mid x\right]\\
 & =2\mathbb{E}_{x}\mathbb{E}\left[-\exp(2\theta_{0}^{T}x)+\exp(\theta_{0}^{T}x)y-x^{2}y\exp(\theta_{0}^{T}x)\mid x\right]\\
 & =2\mathbb{E}_{x}\left[-\exp(2\theta_{0}^{T}x)\right]\\
I_{NLS} & =4\mathbb{E}_{x}\mathbb{E}\left[\left(x\exp(\theta_{0}^{T}x)(\exp(\theta_{0}^{T}x)-y)\right)\cdot\left(x\exp(\theta_{0}^{T}x)(\exp(\theta_{0}^{T}x)-y)\right)^{T}\mid x\right]\\
 & =4\mathbb{E}_{x}(x\exp(\theta_{0}^{T}x))(x\exp(\theta_{0}^{T}x))^{T}\mathbb{E}\left[\left((\exp(\theta_{0}^{T}x)-y)\right)^{2}\mid x\right]\\
 & =4\mathbb{E}_{x}(x\exp(\theta_{0}^{T}x))(x\exp(\theta_{0}^{T}x))^{T}\mathbb{E}\left[\epsilon^{2}\mid x\right]\\
 & =4\mathbb{E}_{x}(x\exp(\theta_{0}^{T}x))(x\exp(\theta_{0}^{T}x))^{T}\cdot V\left[y\mid x\right]
\end{align*}
 and the asymptotic variance is given by 
\begin{align*}
\mathbb{E}_{x}\left[\exp(2\theta_{0}^{T}x)\right]^{-1}\cdot\left(\mathbb{E}_{x}(x\exp(\theta_{0}^{T}x))(x\exp(\theta_{0}^{T}x))^{T}\cdot V\left[y\mid x\right]\right)\cdot\mathbb{E}_{x}\left[\exp(2\theta_{0}^{T}x)\right]^{-1}.
\end{align*}

Similarly, for PPML, we have 
\begin{align*}
\varphi_{PPML}(y,x,\theta_{0}) & =y\cdot\theta_{0}^{T}x-\exp(\theta_{0}^{T}x),
\end{align*}
 so that 
\begin{align*}
J_{PPML} & =\mathbb{E}_{x}\left[xx^{T}\exp(\theta_{0}^{T}x)\right]\\
I_{PPML} & =\mathbb{E}_{x}\mathbb{E}\left[(y-\exp(\theta_{0}^{T}x))^{2}\cdot xx^{T}\mid x\right]\\
 & =\mathbb{E}_{x}xx^{T}\mathbb{E}\left[(\epsilon)^{2}\cdot\mid x\right]\\
 & =\mathbb{E}_{x}xx^{T}V\left[y\mid x\right]
\end{align*}
 and the asymptotic variance is given by 
\begin{align*}
J^{-1}IJ^{-1} & =\mathbb{E}_{x}\left[xx^{T}\exp(\theta_{0}^{T}x)\right]^{-1}\cdot\left(\mathbb{E}_{x}xx^{T}V\left[y\mid x\right]\right)\cdot\mathbb{E}_{x}\left[xx^{T}\exp(\theta_{0}^{T}x)\right]^{-1}.
\end{align*}
\end{proof}
\end{appendix}

\end{document}